\newtheorem{theorem}{Theorem}  
\newtheorem{proposition}[theorem]{Proposition}  
\newtheorem{lemma}[theorem]{Lemma}
\newtheorem{example}{Example}[section]
\newcommand{\mg}[1]{{\color{orange}}}
\definecolor{bhblue}{HTML}{0096FF}
\definecolor{tableblue}{HTML}{ccffff}
\newcommand\mtiny[1]{\mbox{\tiny\ensuremath{#1}}}
\def\sn{s^{{\mtiny{(N)}}}_k}
\begin{document}

\doparttoc 
\faketableofcontents 
\part{}

\title{Double-bracket algorithm for quantum signal processing without post-selection
}
\newcommand{\EPFL}{Institute of Physics, \'{E}cole Polytechnique F\'{e}d\'{e}rale de
Lausanne (EPFL), Lausanne, Switzerland}
\newcommand{\KEIO}{Quantum Computing Center, Keio University, Hiyoshi 3-14-1, Kohoku-ku, Yokohama 223-8522, Japan}
\author{Yudai Suzuki}
\affiliation{\EPFL}\affiliation{\KEIO}
\author{Bi Hong Tiang}
\affiliation{School of Physical and Mathematical Sciences, Nanyang Technological University, 637371, Singapore}
\author{Jeongrak Son}
\affiliation{School of Physical and Mathematical Sciences, Nanyang Technological University, 637371, Singapore}
\author{Nelly H. Y. Ng }
\affiliation{School of Physical and Mathematical Sciences, Nanyang Technological University, 637371, Singapore}
\affiliation{Centre for Quantum Technologies, Nanyang Technological University, 637371, Singapore}
\author{Zo\"{e} Holmes}
\affiliation{\EPFL}
\author{Marek Gluza}
\email{marekludwik.gluza@ntu.edu.sg}
\affiliation{School of Physical and Mathematical Sciences, Nanyang Technological University, 637371, Singapore}

\begin{abstract}
Quantum Signal Processing (QSP), a framework for implementing matrix-valued polynomials, is a fundamental primitive in various quantum algorithms.
Despite its versatility, a potentially underappreciated challenge is that all systematic protocols for implementing QSP rely on post-selection.
This can impose prohibitive costs for tasks when amplitude amplification cannot sufficiently improve the success probability. For example, in the context of ground-state preparation, this occurs when using a too poor initial state.
In this work, we introduce a new formula for implementing QSP transformations of Hermitian matrices, which requires neither auxiliary qubits nor post-selection.
Rather, using approximation to the exact unitary synthesis, we leverage the theory of the double-bracket quantum algorithms to provide a new quantum algorithm for QSP, termed Double-Bracket QSP (DB-QSP).
The algorithm requires the energy and energetic variance of the state to be measured at each step and has a recursive structure, which leads to circuit depths that can grow super exponentially with the degree of the polynomial.
With these strengths and caveats in mind, DB-QSP should be viewed as complementing the established QSP toolkit. In particular, DB-QSP can deterministically implement low-degree polynomials to ``warm start" QSP methods involving post-selection. 
\end{abstract}
%\date{\today}
\maketitle
\section{Introduction}

The efficient implementation of matrix-valued functions plays a central role in the design of modern quantum algorithms~\cite{equiangular}. 
That is, the essence of many quantum algorithms boils down to constructing a polynomial function $ p(H)$ of a given Hermitian matrix $H$ and applying it to an input state $\ket{\Psi}$ to obtain a normalized state
\begin{align}
\label{ph}
    \ket{\Psi'}  =
\frac{ p(H)\ket\Psi}{ \|p(H)\ket\Psi\|}\ 
\end{align}
with $\|\ket{\psi}\|=\sqrt{\braket{\psi|\psi}}$ for any vector $\ket{\psi}$.
For example, real and imaginary time evolution correspond to the transformations  $p(H)\approx\exp(iH t)$ and $p(H)\approx\exp(- \tau H)$, while matrix inversion implements the transformation $p(H)\approx H^{-1}$. Quantum Signal Processing (QSP) is an algorithmic framework for realizing such polynomial transformations on quantum computers.
QSP has enabled the development of advanced quantum algorithms for solving linear systems of equations~\cite{martyn2021grand,gilyen2019quantum,harrow2009quantum}, Hamiltonian simulation~\cite{Low2019hamiltonian,low2017optimal,low2017quantum}, and ground state preparation~\cite{ge2019faster,lin2020near}.

Despite its versatility, an underappreciated challenge in QSP is the cost of post-selection~\cite{zhang2025quantum}.
For QSP implementation methods such as qubitization~\cite{Low2019hamiltonian} and Linear Combination of Unitaries (LCU)~\cite{gui2008duality,Childs2012LCU,chakraborty2024implementing} to be practically viable, the success probability for post-selection must be sufficiently high to avoid excessive resource overhead.
While amplitude amplification techniques can improve success probabilities~\cite{MR1947332,berry2014exponential}, they may be insufficient when the success probability is exponentially small in the number of qubits~\cite{sze2025hamiltonian}.
For instance, ground-state preparation algorithms with nearly optimal resource scaling may still incur exponential costs if the initial overlap between the input state and ground state is exponentially small~\cite{sze2025hamiltonian,lin2020near}.

In this work, we propose a new QSP implementation that eliminates the need for auxiliary qubits and post-selection.
Since the state after normalization in Eq.~\eqref{ph} is a genuine quantum state, there must exist a unitary operator $U_\Psi$ such that $U_\Psi\ket \Psi=p(H)\ket\Psi/\|p(H)\ket\Psi\|$.
This work identifies how to systematically perform the unitary synthesis of $U_\Psi$.
The key insight is that any linear polynomial, $aI+bH$  with real coefficients $a,b\in\mathbb{R}$ and the identity operator $I$, can be \textit{exactly} represented by a unitary operator in form of
\begin{align} \label{eq:u_psi_for_linear_poly}
 \frac{(aI +bH)\ket{\Psi}}{\|(aI+bH)\ket\Psi\|}   =e^{s[\ket{\Psi}\bra{\Psi}, H]}\ket\Psi\ ,
\end{align} 
where $s$ is determined by the energy mean and variance.
Using this building block, we prove that a recursion involving unitaries in Eq.~\eqref{eq:u_psi_for_linear_poly} together with state-dependent reflection gates can realize \textit{arbitrary} polynomial functions (see Fig.~\ref{fig:summary}). 
We then utilize the recently-established theory of Double-Bracket Quantum Algorithms (DBQA)~\cite{gluza2024doublebracket,gluza_DB_QITE_2024} to derive a unitary synthesis that can be compiled into primitive gates using standard quantum computing methods.
This leads to a new quantum algorithm which we call the Double-Bracket QSP 
(DB-QSP).

%%%%%%%%%%%%%%% Figure %%%%%%%%%%%%%%% 

\begin{figure*}[t]
\centering
\begin{tikzpicture}
\definecolor{lightgray}{HTML}{F4F4F4}
\definecolor{littlelightgray}{HTML}{ecececff}
\definecolor{pale}{HTML}{7ca3d4ff}
\definecolor{lightred}{HTML}{d8a2a2}
\node[text width=7cm] at (-2.2,3.0){(a) Original QSP with Post-Selection~\cite{equiangular}};
\node[anchor=center] (russell) at (-1.7,0.3)
{\centering\includegraphics[width=0.4\textwidth]{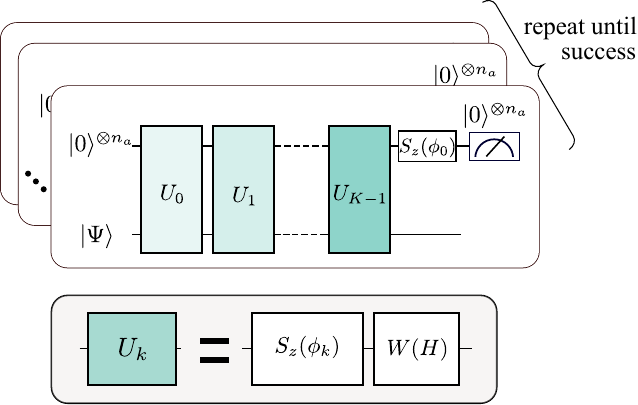}};
\node[text width=9cm] at (7.0
,3.0){(b) Our Proposal without Post-Selection (Thm.~\ref{thm complex QSP})};
    \node[anchor=center] (russell) at (6.7,0.1){\centering\includegraphics[width=0.4\textwidth]{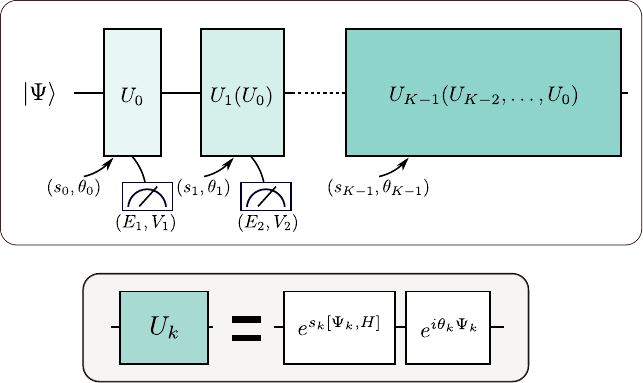}};
\end{tikzpicture}
\caption{\textbf{Quantum Signal Processing (QSP) without auxiliary qubits and post-selection.}
We introduce a new formula for implementing QSP of Hermitian matrices (Thm.~\ref{thm complex QSP}).
(a) To realize a degree-$K$ polynomial of a Hermitian matrix $H$, original QSP performs measurement on auxiliary qubits so that the desired transformation is realized, as shown in Eq.~\eqref{eq:qsp_conv_uni_qubitize}.
(b) In contrast, our formula does not require auxiliary qubits and accordingly the post-selection. Instead, we recursively apply the state-dependent unitary operators $e^{i\theta_k \Psi_{k}}e^{s_{k}[\Psi_{k},H]}$ with $\ket{\Psi_{k+1}}=e^{i\theta_{k} \Psi_{k}}e^{s_{k}[\Psi_{k},H]}\ket{\Psi_{k}}$, resulting in the circuit depth that grow significantly in the degree of polynomials $K$.
Furthermore, to determine the time duration $s_{k}$ and phase $\theta_{k}$, energy $E_{k}=\braket{\Psi_{k}|H|\Psi_{k}}$ and variance in energy $V_{k}=\braket{\Psi_{k}|H^2|\Psi_{k}}-E_{k}^2$ must be known at each step.}
\label{fig:summary}
\end{figure*}

%%%%%%%%%%%%%%% Figure %%%%%%%%%%%%%%%

The advantages of DB-QSP come with two challenges.
First, its recursive form means the depth of circuit required to converge with arbitrary precision grows super exponentially with the degree of the target polynomial functions.
However, low-degree approximation techniques~\cite{gilyen2019quantum,martyn2021grand} can be applied to keep the circuits depths efficient in certain cases. 
The second limitation is the need to estimate the energy and variance in energy of the state at each iteration in order to compute the step size used in the circuit at the next iteration.
However, when the degree of polynomials scales logarithmically in the inverse of the desired precision, the corresponding sampling overhead should only be polynomial.

DB-QSP can be used both as a standalone method and as a tool in conjunction with other QSP methods~\cite{Low2019hamiltonian, Childs2012LCU,chakraborty2024implementing}. In particular, it can be viewed as a (partial) alternative when the post-selection overhead of other QSP methods are prohibitively large. 
Namely, DB-QSP provides a deterministic approach to drive a state closer to a target state, such as an approximate ground state, regardless of the quality of the initial state. 
Conversely, in conventional QSP methods~\cite{Low2019hamiltonian, Childs2012LCU, chakraborty2024implementing}, a low post-selection success probability could prevent systematic improvements.
Thus DB-QSP can provide a warm-starting procedure, i.e., a means of preparing approximate initial states, for existing methods.

\section{Preliminaries} \label{sec:preliminary}
\subsection{Overview of Quantum Signal Processing (QSP)}

QSP is a framework for systematically constructing matrix-valued functions on quantum computers. The goal of QSP is to perform degree-$K$ polynomial transformation $p(H)$ of a Hermitian matrix $H$ to a $n$-qubit input state $\ket{\Psi}$ up to normalization (Eq.~\eqref{ph}). Sometimes, the implementation methodology proposed in Ref.~\cite{equiangular} itself is referred to as ``QSP". 
However, Eq.~\eqref{ph} can be achieved also via alternative techniques, e.g., Linear Combination of Unitaries (LCU)~\cite{Childs2012LCU,chakraborty2024implementing}; see App.~\ref{app:overview_qsp} for a detailed overview.
In this manuscript, we use ``QSP” to refer to the concept of implementing the polynomial functions, and distinguish it from the methodology in Ref.~\cite{equiangular,Low2019hamiltonian} by referring to the latter as ``qubitization”.

Qubitization uses a circuit $U_{Q}$ comprised of two types of operators: \textit{signal operators} ${W}$ and \textit{signal processing operators} ${S}(\phi)$, where the phase $\phi$ is drawn from a set $\{\phi_k\}$.
The desired polynomial transformation is obtained by performing a measurement in the so-called \textit{signal basis}.
Concretely, given the signal operator $W(H)$ of a Hermitian matrix $H$ with $\|H\|\le 1$ and the signal processing operator $S_{z}(\phi)$, there exists a sequence of QSP phase $\{\phi_{k}\}$ such that the following circuit
% \begin{equation} \label{eq:qsp_conv_uni_qubitize}
% U_{Q} = S_{z}(\phi_{0}) \prod_{k=1}^{K} W(H)S_{z}(\phi_{k}) = \begin{bmatrix}
%         p(H) & * \\
%         * & * \\
% \end{bmatrix} ,
% \end{equation}
\begin{equation} \label{eq:qsp_conv_uni_qubitize}
U_{Q} = S_{z}(\phi_{0}) \prod_{k=1}^{K} W(H)S_{z}(\phi_{k})\ ,
\end{equation}
followed by measurement in the basis $M=\{\ket{+},\ket{-}\}$ can realize a degree-$K$ real polynomial $p(H)$.
The signal operator $W(H)$ can be constructed using block-encoding~\cite{chakraborty_et_al:LIPIcs.ICALP.2019.33}, which embeds a Hermitian matrix $H$ into the top-left block of a larger unitary matrix as 
\begin{equation*} \label{eq: be_simplified_}
    W(H) = \begin{bmatrix}
        H & i\sqrt{I-H^2} \\
        i\sqrt{I-H^2} & H \\
\end{bmatrix}.
\end{equation*}
The signal processing operator $S_{z}(\phi)$,
\begin{equation*}
    S_{z}(\phi) = e^{i\phi Z} = \begin{bmatrix}
        e^{i\phi} & 0 \\
        0 & e^{-i\phi} \\
\end{bmatrix},
\end{equation*}
then acts on an auxiliary qubit.
We provide details of the achievable functions via this technique in App.~\ref{app:overview_qsp}.

QSP has led to asymptotically optimal Hamiltonian simulation algorithms~\cite{Low2019hamiltonian} and a near-optimal method for ground-state preparation~\cite{lin2020near}.
Furthermore, it serves as a fundamental tool for constructing primitive quantum algorithms that exhibit quantum advantages~\cite{martyn2021grand,gilyen2019quantum}.
Therefore, its efficiency in implementing linear algebraic operations and its role as a key building block for quantum algorithms have made QSP a subject of significant interest.

\subsection{The Role of Post-Selection in Existing QSP Methods}

Despite their versatility, existing QSP implementations face several challenges such as difficulty in finding angles~\cite{alase2025quantum} and demanding implementation costs for block-encodings~\cite{kikuchi2023realization}.
As shown above, qubitization performs the measurement in the signal basis to post-select for the desired transformation.
When this post-selection in qubitization is unsuccessful, it is possible to simply repeat the experiment until a successful implementation eventually appears.
Amplitude amplification techniques~\cite{MR1947332} can often enhance success probabilities. For instance, Hamiltonian simulation benefits from this combination of techniques~\cite{berry2014exponential}.
However, in some cases, the success probability for QSP could be exponentially small in the number of qubits~\cite{sze2025hamiltonian}. 
For example, the successful probability of ground-state preparation can be prohibitively small if not initialized with a sufficiently good input state. 

We illustrate the issue using an example of general qubitization.
Given an input state $\ket\Psi$, the number of auxiliary qubits $n_{a}$ and $\alpha\in\mathbb{R}$, applying $U_{Q}$ in Eq.~\eqref{eq:qsp_conv_uni_qubitize} to $\ket\Psi$ yields
\begin{align}
\ket{0}^{\otimes n_{a}} \otimes p(H/\alpha)\ket{\Psi} + \ket{\text{garbage}^\perp},
\end{align}
where $\ket{\text{garbage}^\perp}$ is an orthogonal state, i.e., $\ket{\text{garbage}^\perp} \perp \ket{0}^{\otimes n_{a}} \otimes \frac{H}{\alpha}\ket{\Psi}$.
The probability of projecting onto $\ket{0}^{\otimes n_{a}}$ is given by 
\begin{equation}
    p_\text{succ} = \|p(H/\alpha)\ket{\Psi}\|^2 \, , 
\end{equation}
which can be exponentially small. 
For example, in the case of Imaginary-Time Evolution (ITE), where $p(H) \approx e^{-\tau H}$, the success probability scales with the overlap of the initial state and the corresponding thermal state, which can decay exponentially~\cite{chan2023simulating,sze2025hamiltonian,silva2023fragmented}.
More generally, this dependence on state fidelity persists across various scenarios.
For instance, the block-encoding query complexity for nearly-optimal ground-state preparation algorithm in Ref.~\cite{lin2020near} scales as $O(\alpha/\gamma)$, where $\gamma = |\langle \lambda_0|\Psi\rangle|^2$ is the fidelity of the input state $\ket{\Psi}$ with the ground state $\ket{\lambda_0}$ of $H$.
The query scaling $O(\alpha/\gamma)$ corresponds to the inverse success probability and thus requires repeated trials for obtaining a successful outcome.
This indicates that the success of the block-encoding depends on the input state. 
Additionally, since the number of queries to the block-encoding unitary scales with the degree of polynomials as shown in Eq.~\eqref{eq:qsp_conv_uni_qubitize}, the degree $K$ needs to be sufficiently low to ensure successful post-selection each time.
For more details and a discussion of similar challenges when using LCU for implementing QSP, see App.~\ref{app:overview_qsp}.

\section{Main Results}
\label{sec main results}

\subsection{Overview of Analysis} \label{sec:overview_main_result}

In this section, we present an algorithm for QSP that requires neither auxiliary qubits nor post-selection. Our key insight, captured in Lem.~\ref{thm zero overhead QSP unitary single step} in Sec.~\ref{sec:unitary_synthesis_linear_polynomials}, is that there exists a unitary that \textit{exactly} implements the normalized action of the linear polynomial $H-\alpha I$ on an input state $\ket{\Psi}$ for any real $\alpha$. We then show how repeated applications of this circuit to apply the linear polynomial can be used to implement any polynomial with real roots. 

Sec.~\ref{sec:unitary_synthesis_arbitrary_polynomial} tackles the extension to polynomials with complex roots. 
This leads to our main result, Thm.~\ref{thm complex QSP}, which demonstrates that interleaving the unitary sequence $U_{\Psi}$ from Lem.~\ref{thm zero overhead QSP unitary single step} with state-dependent reflection gates enables the realization of arbitrary polynomials. 

Sec.~\ref{sec:DB-QSP_implementation} introduces a method to implement the unitary sequence in Thm.~\ref{thm complex QSP} called Double-Bracket QSP (DB-QSP), which performs general QSP without post-selection. 
Namely, we show that the recently-developed DBQA framework provides a means to efficiently implement the exponentials of commutators that appear in Thm.~\ref{thm complex QSP}. 
Leveraging DBQA, we formulate DB-QSP outlined in Alg.~\ref{alg DBQSP}. 
We analyze the errors introduced by this implementation compared to the idealized scenario in Thm.~\ref{thm complex QSP} and show that circuit depths of DB-QSP scale super exponentially with the degree of the polynomial to be implemented. 

The DB-QSP algorithm (Alg.~\ref{alg DBQSP}) also requires the energy and energy variance of the state at each iteration to be estimated in order to compute the step size for the next iteration. On quantum hardware, statistical noise is inevitable due to the finite number of measurement shots. In Sec.~\ref{sec:error_analysis}, we analyze how this noise affects the accuracy of the constructed state.

To further examine the practical implications of these challenges, Sec.~\ref{sec:application} investigates the impact of circuit depth on applicability. Since the required depth depends on the polynomial degree, DB-QSP is limited to low-degree polynomials. We identify approximate ground-state preparation as a use case where DB-QSP can be practically useful.

Finally, Sec.~\ref{sec:hybrid strategy} discusses a hybrid strategy that integrates DB-QSP with existing methods such as variational quantum algorithms, quantum dynamic programming, qubitization and LCU. 
The circuit depth scaling of DB-QSP suggests that available experimental resources may be insufficient for certain tasks. 
However, even qubitization with amplitude amplification sometimes demands exponential costs. In such difficult cases, combining qubitization or LCUs with DB-QSP could reduce resource requirements.

\subsection{Main Tool: Unitary Synthesis for Polynomials with Real Roots without Post-Selection} \label{sec:unitary_synthesis_linear_polynomials}

In Sec.~\ref{sec:preliminary}, we reviewed a QSP implementation relying on post-selection.
An alternative is to find a unitary $U_\Psi$ satisfying
\begin{align}\label{eq:unitary_synthesis}
    U_\Psi\ket \Psi= \frac{p(H)\ket\Psi}{\|p(H)\ket\Psi\|}\ .
\end{align}
The following Lemma constructs a new tool that provides an explicit and exact construction of $U_\Psi$ through an exponential of a specific commutator for linear polynomials.
For simplicity, we hereafter use $\Psi$ as a shorthand for the density matrix representation of a pure state, i.e., $\Psi = \ket\Psi\bra\Psi$.

\begin{lemma}[Unitary synthesis for linear polynomials without post-selection]
\label{thm zero overhead QSP unitary single step} 
Suppose $p(H)=H-\alpha I$ is any linear polynomial of a Hermitian matrix $H$ with $\alpha\in\mathbb{R}$.
Given an input state $\ket{\Psi} $ with energy mean $E_\Psi = \bra\Psi H\ket\Psi$ and variance $V_{\Psi}=\bra\Psi H^2\ket\Psi -E_\Psi^2$, the unitary synthesis for $p(H)$ in Eq.~\eqref{eq:unitary_synthesis} can be achieved by 
    \begin{align}
    \label{DB bridge}
        U_\Psi = e^{s_\Psi[\Psi,H]},
    \end{align}
    with 
    \begin{equation} \label{eq:duration_monic_linear}
        s_\Psi = \frac{-1}{\sqrt{V_\Psi}}\arccos\left(\frac{E_{\Psi}-\alpha}{\sqrt{V_{\Psi}+\left(E_{\Psi}-\alpha\right)^2}}\right)\ .
    \end{equation}
\end{lemma}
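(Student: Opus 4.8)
The plan is to exploit the fact that the anti-Hermitian generator $[\Psi,H]$ confines the dynamics generated by $e^{s_\Psi[\Psi,H]}$ to the two-dimensional subspace $S=\mathrm{span}\{\ket\Psi,\ket{\Psi^\perp}\}$, where $\ket{\Psi^\perp}:=(H-E_\Psi I)\ket\Psi/\sqrt{V_\Psi}$ is the normalized component of $H\ket\Psi$ orthogonal to $\ket\Psi$. First I would record the decomposition $H\ket\Psi = E_\Psi\ket\Psi+\sqrt{V_\Psi}\,\ket{\Psi^\perp}$, which follows directly from the definitions of $E_\Psi$ and $V_\Psi$ together with the quick checks $\langle\Psi^\perp|\Psi\rangle=0$ and $\langle\Psi^\perp|\Psi^\perp\rangle=1$. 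This recasts both the generator and the target vector as explicit $2\times2$ objects in the orthonormal basis $(\ket\Psi,\ket{\Psi^\perp})$.

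Next I would compute the action of the commutator on this basis. Using $\Psi=\ket\Psi\bra\Psi$, a short calculation yields $[\Psi,H]\ket\Psi=-\sqrt{V_\Psi}\,\ket{\Psi^\perp}$ and $[\Psi,H]\ket{\Psi^\perp}=+\sqrt{V_\Psi}\,\ket\Psi$, where the second uses $\bra\Psi H\ket{\Psi^\perp}=\sqrt{V_\Psi}$. This shows $S$ is invariant under $[\Psi,H]$, so on $S$ the generator is exactly $\sqrt{V_\Psi}$ times the standard skew-symmetric rotation generator, and $e^{s_\Psi[\Psi,H]}$ restricted to $S$ is a planar rotation by angle $s_\Psi\sqrt{V_\Psi}$. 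Hence $e^{s_\Psi[\Psi,H]}\ket\Psi=\cos(s_\Psi\sqrt{V_\Psi})\ket\Psi-\sin(s_\Psi\sqrt{V_\Psi})\ket{\Psi^\perp}$. I would emphasize that because $\ket\Psi\in S$ and $S$ is invariant, every term in the power series of the exponential stays in $S$, so the $2\times2$ reduction is exact and no knowledge of the action of $[\Psi,H]$ off $S$ is required.

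Finally I would compute the target directly: $(H-\alpha I)\ket\Psi=(E_\Psi-\alpha)\ket\Psi+\sqrt{V_\Psi}\,\ket{\Psi^\perp}$, with norm $\sqrt{(E_\Psi-\alpha)^2+V_\Psi}$, so the desired right-hand side in the same basis has coefficients $(E_\Psi-\alpha)/\sqrt{(E_\Psi-\alpha)^2+V_\Psi}$ and $\sqrt{V_\Psi}/\sqrt{(E_\Psi-\alpha)^2+V_\Psi}$. Matching against the rotated state reduces the claim to the two simultaneous conditions $\cos(s_\Psi\sqrt{V_\Psi})=(E_\Psi-\alpha)/\sqrt{(E_\Psi-\alpha)^2+V_\Psi}$ and $-\sin(s_\Psi\sqrt{V_\Psi})=\sqrt{V_\Psi}/\sqrt{(E_\Psi-\alpha)^2+V_\Psi}$. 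Substituting the stated $s_\Psi$ solves the cosine condition by construction; the step I expect to be the real obstacle is verifying that the \emph{same} value simultaneously satisfies the sine condition with the correct sign. This works precisely because the $\arccos$ forces $s_\Psi\sqrt{V_\Psi}=-\theta$ with $\theta\in[0,\pi]$, so $\sin\theta\ge0$ selects the positive root $\sin\theta=\sqrt{V_\Psi}/\sqrt{(E_\Psi-\alpha)^2+V_\Psi}$, and the overall minus sign in $s_\Psi$ flips $-\sin(s_\Psi\sqrt{V_\Psi})$ to the required positive coefficient. I would close by noting the degenerate case $V_\Psi=0$, where $\ket\Psi$ is an eigenstate, $[\Psi,H]=0$, and the statement holds trivially up to an irrelevant global phase, so the formula is understood in the limit $V_\Psi\to0^+$.
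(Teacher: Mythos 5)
Your proof is correct, and every step checks out: the orthogonality and normalization of $\ket{\Psi^\perp}$, the invariance of $S$ under $[\Psi,H]$ (which indeed hinges on $\Psi\ket{\Psi^\perp}=0$ killing the $H\Psi\ket{\Psi^\perp}$ term), the reduction of the exponential to a planar rotation, and the branch analysis showing the single parameter $s_\Psi$ satisfies both the cosine and sine conditions. The route differs from the paper's in presentation rather than substance. The paper never introduces $\ket{\Psi^\perp}$ or an invariant subspace; instead it expands $e^{s[\Psi,H]}$ as a Taylor series acting on $\ket\Psi$, proves the ``effective idempotence'' $[\Psi,H]^2\ket\Psi=-V_\Psi\ket\Psi$, resums the even and odd powers into cosine and sine, and writes the result as $\bigl(a(s)I+b(s)H\bigr)\ket\Psi$ with explicit coefficient functions $a(s),b(s)$; the lemma then follows by solving $a(s_\Psi)=-\alpha/\|p(H)\ket\Psi\|$ and $b(s_\Psi)=1/\|p(H)\ket\Psi\|$. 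Your effective idempotence is the statement that the restriction of $[\Psi,H]$ to $S$ squares to $-V_\Psi$, so the two arguments are the same mechanism in different clothing, and your matching conditions in the basis $(\ket\Psi,\ket{\Psi^\perp})$ are linearly equivalent to the paper's conditions on $(a,b)$. What your framing buys: the rotation picture makes unitarity, the two-dimensionality of the orbit, and the sign/branch bookkeeping transparent, and it handles the exponential without manipulating infinite series; you also treat the degenerate case $V_\Psi=0$ explicitly, which the paper's lemma proof does not. What the paper's framing buys: the coefficient functions $a(s),b(s)$ as functions of $I$ and $H$ are reused downstream --- the generalization to arbitrary $xI+yH$ (Lem.~\ref{prop any linear polynomial synthesis}), the classical-computation discussion, and the stability estimates in App.~\ref{app notions of stability} all quote or perturb these explicit formulas --- so the series form is the more convenient interface for the rest of the paper, even if your derivation of the same endpoint is cleaner in isolation.
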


A rigorous proof of Lem.~\ref{thm zero overhead QSP unitary single step} is provided in App.~\ref{app lemma and thm proofs}. 
Here, we present a proof sketch to clarify the derivation of the unitary operator in Eq.~\eqref{DB bridge}. First, we can see immediately that $U_\Psi$ is indeed unitary as claimed because the commutator $[\Psi,H]$ in its exponent is anti-Hermitian, i.e., $[\Psi,H] = -( [\Psi,H])^{\dagger}$.

Next, we derive Eq.~\eqref{DB bridge}, which establishes the equivalence between $e^{sW_{H}}$ with $W_{H}=[\Psi,H]$ and a linear polynomial applied to $\ket\Psi$ for $s\in\mathbb{R}$.
By definition, the unitary operator can be expressed as 
$e^{s W_H} = \sum_{k=0}^\infty\frac{s^k}{k!} W_H^k$ using all powers of~$W_H$.
However, when acting on $\ket{\Psi}$, we get
\begin{align}    \label{eq: first power of commutator}
W_H \ket{\Psi} = - (H-E_\Psi I)\ket{\Psi}\ ,
    \end{align}
    while for the second power
 \begin{align}
        W_H^2 \ket{\Psi} 
        &=  -(\bra\Psi H^2\ket\Psi - E_\Psi^2)\ket\Psi  =- V_{\Psi}\ket{\Psi}   \ .\label{state idempotence}
    \end{align}
 This shows that the square of $W_H$ leaves $\ket \Psi$ unchanged up to a rescaling prefactor. 
Thus, by substituting Eqs.~\eqref{eq: first power of commutator},~\eqref{state idempotence} into the series expansion, the resulting state can be simplified to  
\begin{align}
     \label{eq: aI+bH main_}
     e^{s_\Psi W_H}\ket\Psi &= \left(a(s_\Psi)I + b(s_\Psi)H \right)\ket{\Psi},
\end{align}
with real-valued coefficients $a(s_{\Psi}),b(s_{\Psi})$ corresponding to any duration $s_{\Psi}\in \mathbb R$ given by
\begin{align}
    a(s_{\Psi}) &= \frac{ E_\Psi }{\sqrt{V_{\Psi}}}\sin\left(s_{\Psi} \sqrt{V_{\Psi}}\right) + \cos\left(s_{\Psi} \sqrt{V_{\Psi}}\right), 
\\
        b(s_{\Psi}) &= - \frac{1}{\sqrt{V_{\Psi}}} \sin\left(s_{\Psi} \sqrt{V_{\Psi}}\right)\ .
    \end{align}
Here, the derivation exploits the Taylor series of trigonometric functions.
Finally, by solving the equations $a(s_\Psi)=-\alpha/\|p(H)\ket{\Psi}\|$ and $b(s_\Psi)=1/\|p(H)\ket{\Psi}\|$, we obtain Eq.~\eqref{eq:duration_monic_linear}, the time duration $s_{\Psi}$ to realize Eq.~\eqref{eq:unitary_synthesis} for any linear polynomial.

Lem.~\ref{thm zero overhead QSP unitary single step} indicates that there exists a duration $s$ such that the exponential of the commutator $e^{sW_{H}}$ with $W_{H}=[\Psi,H]$ can realize any linear real polynomial.
% Importantly, the duration $s$ is determined by energy mean and variance, as shown in Eq.~\eqref{eq:duration_monic_linear}.
% Hence, once the energy and variance are measured precisely, we can implement  linear polynomials with real coefficients through  unitary operators $e^{sW_{H}}$.
Importantly, the duration $s$ can be found by precise measurements of the energy and variance of the state $\Psi$.
\medskip

Higher order polynomials can then be realised by repeated applications of Lem.~\ref{thm zero overhead QSP unitary single step}.
The fundamental theorem of algebra shows that a polynomial of degree $K$ with real roots can be represented as $p(H) = a_K \prod_{k=1}^K(H-\alpha_k I)$ with $\alpha_{k}\in \mathbb{R}$.
This implies that such polynomials can be obtained by implementing Eq.~\eqref{DB bridge} with the corresponding factors $K$ times,
\begin{equation}
\label{eq dbi stacked}
    \frac{p(H)\ket{\Psi_0}}{ \|p(H)\ket{\Psi_0}\|} = \prod_{k=0}^{K-1} e^{s_{k}[\Psi_{k},H]} \ket{\Psi_{0}}\ ,
\end{equation}
where we start with an input state $\ket{\Psi_{0}}$ and define $\ket{\Psi_{k+1}}=e^{s_{k}[\Psi_{k},H]}\ket{\Psi_{k}}$ using $s_{k}$ in Eq.~\eqref{eq:duration_monic_linear}.

We stress that  Eq.~\eqref{eq dbi stacked} only implements functions with real roots. Nonetheless, many functions, such as Chebyshev polynomials, have only real roots. Hence Eq.~\eqref{eq dbi stacked} can be used for applications including approximations of ITE; see App.~\ref{sec:application_app} for the detail. 
However, Eq.~\eqref{eq dbi stacked} alone cannot construct \textit{arbitrary} polynomial functions, as the roots can be complex in general. We will now proceed to discuss how to extend Eq.~\eqref{eq dbi stacked} to implement polynomials with complex roots. 

\subsection{Main Result: Unitary Synthesis for Arbitrary Polynomials without Post-Selection} \label{sec:unitary_synthesis_arbitrary_polynomial}

In this section we show how Eq.~\eqref{eq dbi stacked} can be generalized to implement any \textit{arbitrary} polynomial of the form
\begin{align}\label{eq:pH_}
     p(H) = a_K \prod_{k=1}^K(H-z_k I)\ ,
 \end{align}
where the roots can be complex, i.e., $z_{k}\in\mathbb{C}$.
A core idea is that introducing a state-dependent reflection gate $e^{i\theta_{\Psi}\Psi}$ right after $U_{\Psi}$ in Eq.~\eqref{DB bridge} can realize any complex number $z$. 
That is, for any $z\in\mathbb{C}$, we obtain
\begin{align}
 \frac{(H-zI)\ket{\Psi}}{\|(H-zI)\ket\Psi\|}   =e^{i\theta_{\Psi}\Psi}e^{s_{\Psi}[\Psi, H]}\ket\Psi.
\end{align}
Using this technique, we derive a unitary synthesis formula for QSP without the need for the auxiliary qubits and post-selection, which is the main result of this work. 
The proof is provided in App.~\ref{app lemma and thm proofs}.

\begin{theorem}[Unitary synthesis for QSP without post-selection]
\label{thm complex QSP}
Consider an input state $\ket{\Psi_0}$ and any polynomial $p(H)$ of degree $K$ for a given Hermitian matrix $H$ in the form of Eq.~\eqref{eq:pH_}.
Given energy mean $E_k = \bra{\Psi_{k}} H\ket{\Psi_{k}}$ and variance $V_{k}=\bra{\Psi_{k}} H^2\ket{\Psi_{k}} -E_k^2$, the unitary synthesis in Eq.~\eqref{eq:unitary_synthesis} can be achieved by 
\begin{equation} \label{eq:qsp_wo_ps}
    \frac{p(H)\ket{\Psi_0}}{\|p(H)\ket{\Psi_0}\|}=\prod_{k=0}^{K-1} e^{i \theta_k \Psi_{k}}  e^{s_{k}[\Psi_{k},H]}\ket{\Psi_0},
\end{equation}
with $s_k = \frac{-1}{\sqrt{V_k}}\arccos\left(\frac{|E_{k}-z_{k}|}{\sqrt{V_{k}+|E_{k}-z_{k}|^2}}\right)$
and $\theta_k = \arg\left(\frac{E_k-z_k}{|E_k-z_k|}\right).$
Here, we recursively define the state $\ket{\Psi_{k}}$ by
\begin{align}
 \label{Thm 2 psik}
 \ket{\Psi_{k+1}}=e^{i \theta_k \Psi_{k}}  e^{s_{k}[\Psi_{k},H]}\ket{\Psi_{k}}\ .
\end{align}
\end{theorem}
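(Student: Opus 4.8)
The plan is to reduce the whole statement to a single-step claim---the complex-root analogue of Lem.~\ref{thm zero overhead QSP unitary single step}, which is exactly the unnumbered identity stated just before the theorem---and then chain these steps by induction. Concretely, I would first prove that for any state $\ket\Psi$ with mean $E_\Psi$ and variance $V_\Psi$, and any $z\in\mathbb{C}$,
\begin{align}
 e^{i\theta\Psi}e^{s[\Psi,H]}\ket\Psi = \frac{(H-zI)\ket\Psi}{\|(H-zI)\ket\Psi\|},
\end{align}
with $s$ and $\theta$ given by the single-step specializations of the formulas in the theorem statement. Granting this, the theorem follows by setting $\ket{\Psi_{k+1}}=e^{i\theta_k\Psi_k}e^{s_k[\Psi_k,H]}\ket{\Psi_k}$ and telescoping.

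For the single-step claim I would reuse the key identity established in the proof of Lem.~\ref{thm zero overhead QSP unitary single step}, namely $e^{s[\Psi,H]}\ket\Psi = (a(s)I+b(s)H)\ket\Psi$ with the explicit $a(s),b(s)$. The clean way to track the reflection is to split this vector into components parallel and orthogonal to $\ket\Psi$, using $a(s)+b(s)E_\Psi=\cos(s\sqrt{V_\Psi})$:
\begin{align}
 e^{s[\Psi,H]}\ket\Psi = \cos(s\sqrt{V_\Psi})\ket\Psi - \tfrac{\sin(s\sqrt{V_\Psi})}{\sqrt{V_\Psi}}(H-E_\Psi I)\ket\Psi.
\end{align}
Since $\Psi$ is a rank-one projector, $e^{i\theta\Psi}=I+(e^{i\theta}-1)\Psi$ acts as multiplication by $e^{i\theta}$ on the $\ket\Psi$-component and as the identity on its orthogonal complement. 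Hence the reflection multiplies only the parallel term by $e^{i\theta}$, leaving the orthogonal part (which carries a real coefficient) untouched. Comparing with the analogous split $(H-zI)\ket\Psi=(E_\Psi-z)\ket\Psi+(H-E_\Psi I)\ket\Psi$, matching orthogonal parts fixes the overall positive real scale, and matching parallel parts then forces $|\cot(s\sqrt{V_\Psi})|=|E_\Psi-z|/\sqrt{V_\Psi}$ for the modulus and $e^{i\theta}=(E_\Psi-z)/|E_\Psi-z|$ for the phase. Solving these yields precisely the stated $s$ and $\theta$, and I would verify the emergent scale equals $1/\|(H-zI)\ket\Psi\|$ by checking $\|(H-zI)\ket\Psi\|^2=V_\Psi+|E_\Psi-z|^2$.

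With the single-step claim in hand, the induction is essentially bookkeeping. Applying the claim at step $k$ with root $z_k$ gives $\ket{\Psi_{k+1}}=(H-z_kI)\ket{\Psi_k}/\|(H-z_kI)\ket{\Psi_k}\|$, so telescoping over $k=0,\dots,K-1$ produces $\ket{\Psi_K}=\big(\prod_k(H-z_kI)\big)\ket{\Psi_0}/\prod_k N_k$ with real positive normalizations $N_k$. Because all factors $(H-z_kI)$ are polynomials in $H$ they commute, so $\prod_k(H-z_kI)=p(H)/a_K$; and since every step applies a unitary, $\ket{\Psi_K}$ is automatically a unit vector, which pins down $\prod_k N_k=\|p(H)\ket{\Psi_0}\|/|a_K|$ and delivers the claimed identity.

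The main obstacle I anticipate is the single-step phase bookkeeping: one must confirm that a single real parameter $\theta$ can absorb the full complex phase of $E_\Psi-z$ while the independent real parameter $s$ matches only the modulus, and that the sign convention in the formula for $s$ makes the residual overall constant genuinely positive so that it is the norm rather than a spurious phase. A secondary subtlety is that the telescoped identity holds only up to the global phase $e^{-i\arg(a_K)}$ from the leading coefficient; I would remark that this is immaterial for the state in Eq.~\eqref{eq:unitary_synthesis} (and absent when $a_K>0$) rather than try to eliminate it.
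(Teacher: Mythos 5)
Your proposal is correct and follows the same overall strategy as the paper's proof: establish a single-step synthesis identity for one complex root, then chain the steps by induction, with the norm identity $\|(H-zI)\ket{\Psi}\|^2 = V_\Psi + |E_\Psi-z|^2$ closing the normalization. The only real difference is how the single step is organized. The paper introduces the auxiliary real number $u_k = E_k - |E_k-z_k|$, applies Lem.~\ref{thm zero overhead QSP unitary single step} as a black box to realize $(H-u_kI)$, and then checks that the reflection $e^{i\theta_k\Psi_k} = I + (e^{i\theta_k}-1)\Psi_k$ shifts the identity coefficient so that the numerator $(H-u_kI)\ket{\Psi_k}$ becomes $(H-z_kI)\ket{\Psi_k}$ while the norm is unchanged. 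You instead split $e^{s[\Psi,H]}\ket{\Psi}$ into components parallel and orthogonal to $\ket{\Psi}$ and solve the matching equations against $(H-zI)\ket{\Psi}$, which \emph{derives} the formulas for $s$ and $\theta$ rather than verifying them; this makes the origin of the $\arccos$ and $\arg$ expressions more transparent, at the cost of redoing a computation the paper reuses. Both arguments hinge on the identical facts --- the closed form $e^{s[\Psi,H]}\ket{\Psi} = (a(s)I + b(s)H)\ket{\Psi}$ from Lem.~\ref{thm zero overhead QSP unitary single step} and the rank-one reflection acting trivially on the orthogonal complement of $\ket{\Psi}$ --- so this is a difference of bookkeeping, not substance. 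One point where your write-up is actually more careful than the paper: you flag that the stated identity holds only up to the global phase $e^{i\arg(a_K)}$ when the leading coefficient is not positive real; the paper's proof telescopes silently and never addresses this (physically immaterial) discrepancy.
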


Thm.~\ref{thm complex QSP} establishes a recursive method for constructing any QSP polynomial through a sequence of unitary operators.
Next, we explicitly demonstrate how this formulation can be implemented as a quantum algorithm.

\subsection{Implementation: Double-Bracket QSP algorithm (DB-QSP)} \label{sec:DB-QSP_implementation}

Building upon Thm.~\ref{thm complex QSP}, we present a unitary synthesis approach termed the Double-Bracket QSP algorithm~(DB-QSP).
A key challenge in implementing Eq.~\eqref{eq:qsp_wo_ps} lies in realizing the unitary operator $e^{s_{\Psi}[\Psi,H]}$.
Here, we adopt the approach of DBQAs and utilize the group commutator formula~\cite{gluza2024doublebracket,gluza_DB_QITE_2024,robbiati2024double,xiaoyue2024strategies} given by~\cite{dawson2006solovay, gluza2024doublebracket, HOPFGC3}:
\begin{align}    e^{s_\Psi[\Psi,H]} =& \left(
e^{is_\Psi^{\mtiny{(N)}} \Psi}e^{is_\Psi^{\mtiny{(N)}} H}
e^{-is_\Psi^{\mtiny{(N)}} \Psi}e^{-is_\Psi^{\mtiny{(N)}} H}
\right)^N \nonumber \\
&  +O(s_\Psi^{3/2}/\sqrt N)\ , 
\label{infinite GC}
\end{align}
where $s_\Psi^{\mtiny{(N)}} = \sqrt{|s_\Psi|/ N}$ for $s_\Psi \le 0$.
Note that, since the range of the arccos function is $[0,\pi]$, the time duration $s_{k}$ in Thm.~\ref{thm complex QSP} always takes a non-positive value.
Based on this approximation, DB-QSP implements QSP using the Hamiltonian evolution $e^{is_\Psi^{\mtiny{(N)}} H}$ and the state-dependent reflection gates $e^{is_\Psi^{\mtiny{(N)}} \Psi}$.
Specifically, the state-dependent reflection gate is implemented using the reflection about the initial state $\ket{\Psi_{0}}$ and a unitary operator $U$ satisfying $\ket{\Psi}=U\ket{\Psi_{0}}$, i.e., 
\begin{equation} 
    e^{is_\Psi^{\mtiny{(N)}}\Psi } = Ue^{is_\Psi^{\mtiny{(N)}}\Psi_{0}}U^{\dagger}.
\end{equation}
Alg.~\ref{alg DBQSP} summarizes the procedure of DB-QSP algorithm.
%%%%%%%%%%%%%%%%%%%%% Algorithm %%%%%%%%%%%%%%%%%%%%% 
\begin{algorithm}
\caption{DB-QSP}
\begin{algorithmic}[1]
\STATE \textbf{Input:} Hermitian operator $H$, initial state $\ket{\Psi_0}$, degree $K$, parameters $\{z_k\}_{k=0}^{K-1}$, number of group commutator repetitions $N$.
\STATE \textbf{Output:} State $\ket{\Psi_K} =\frac{p(H)\ket{\Psi_0}}{\|p(H)\ket{\Psi_0}\|}$.
\STATE Initialize: $\ket{\Psi} \gets \ket{\Psi_0}$.
\FOR{$k = 0$ to $K-1$}
    \STATE Compute energy moment $E_k$ and variance $V_k$ for $\ket{\Psi}$. 
    \STATE Use Thm.~\ref{thm complex QSP} to determine parameters $s_k$ and $\theta_k$ for 
    $$
      e^{i\theta_k \Psi}e^{s_k [\Psi, H]} \ket{\Psi} =\frac{(H - z_k I)\ket{\Psi}}{\|(H-z_kI)\ket{\Psi}\|} .
    $$
    \STATE Set $s_k^{\mtiny{(N)}}=\sqrt{|s_k|/N}$ and the group commutator unitary $$G_k = e^{is_k^{\mtiny{(N)}} \Psi}e^{is_k^{\mtiny{(N)}} H}e^{-is_k^{\mtiny{(N)}} \Psi}e^{-is_k^{\mtiny{(N)}} H} .$$ 
    \State Update state by applying $\ket{\Psi} \gets e^{i\theta_k \Psi} G_k^N\ket{\Psi}\ .$
\ENDFOR
\STATE \textbf{Return:} $\ket{\Psi}$.
\end{algorithmic}
\label{alg DBQSP}
\end{algorithm}
%%%%%%%%%%%%%%%%%%%%% Algorithm %%%%%%%%%%%%%%%%%%%%%

We note that DB-QSP assumes that both the state-dependent reflection gate with respect to the initial state and Hamiltonian evolution can be generated efficiently.
Nevertheless, this assumption is not particularly restrictive.
For the reflection gates, a straightforward approach is to perform the density matrix exponentiation of the initial state~\cite{lloyd2014quantum, kjaergaard2022demonstration}. Yet,
if the input state $\ket{\Psi_{0}}$ is a computational basis state, the operation reduces to a multi-qubit controlled unitary, which can be implemented efficiently with cost scaling linearly in the number of qubits~\cite{barenco1995elementary, zindorf2024efficient,zindorf2025multi}.
More concretely, when $\ket
{\Psi_{0}}=\ket{0}$, the reflection gate takes the form of 
\begin{equation}
    e^{i\theta\ket{0}\bra{0}} = I + (e^{i\theta}-1)\ket{0}\bra{0}=
\begin{bmatrix}
e^{i\theta} & 0 &  0 & 0 \\
0 & 1 & 0 &  0 \\
0 & 0 & \ddots &  0 \\
0  & 0 & 0 & 1 \\
\end{bmatrix},
\end{equation}
which corresponds to a multi-qubit controlled parameterized phase gate.
%Since any multi-qubit controlled parameterized phase gates can be implemented efficiently~\cite{barenco1995elementary, zindorf2024efficient,zindorf2025multi}, performing a reflection with respect to a computational basis can be experimentally feasible. }
Even when $\ket{\Psi_{0}}$ is not a computational basis state, if a unitary $U$ exists such that $\ket{\Psi_{0}}$ can be efficiently prepared from a basis state, e.g., $\ket{0}$, the reflection gate can be realized as $e^{i\theta\Psi_{0} } = Ue^{i\theta\ket{0}\bra{0}}U^{\dagger}$.
Similarly, when $H$ is a local Hamiltonian, efficient compilation is feasible using established Hamiltonian simulation methods~\cite{childs2010limitations, kothari, Childs2012LCU, PhysRevX.TrotterSuzukiError, SuChilds_nearly_optimal}.
Thus, in many practical scenarios where such compilation subroutines are available, Eq.~\eqref{DB bridge} serves as a unitary synthesis method of QSP without post-selection.

A key question is how efficiently DB-QSP can implement polynomials with small error.
Eq.~\eqref{infinite GC} indicates that the approximation error is governed by $|s_\Psi|^{3/2}/\sqrt N$, suggesting that the total number of group commutator repetiotions $N$ may need to increase for higher precision.
Thus, elucidating how $N$ (or equivalently, the circuit depth) scales to achieve a fixed precision is crucial for evaluating the practicality of DB-QSP.
In the following, we analytically estimate the circuit depth needed to accurately realize a polynomial $p(H)$ of degree $K$  using DB-QSP.
Before diving into this, we begin by analyzing the potential cost for implementing one step of DB-QSP with respect to the total discretization steps $N$.

\medskip    

\paragraph*{Implementation cost for a single step of DB-QSP.}

We begin by analyzing the total number of group commutator repetitions $N$ necessary to approximate $e^{s_{\Psi}[\Psi,H]}$ to $\epsilon_{0}$-precision via the group commutator formula. That is, we compute the required $N$ such that
{\small
\begin{align}
    \|e^{s_{\Psi}[\Psi,H]} - (
e^{is_\Psi^{\mtiny{(N)}} \Psi}
e^{is_\Psi^{\mtiny{(N)}} H}
e^{-is_\Psi^{\mtiny{(N)}} \Psi}
e^{-is_\Psi^{\mtiny{(N)}} H}
)^N\| \le \epsilon_{0}.
\label{eq:D_value}
\end{align}}
From Eq.~\eqref{infinite GC}, we can immediately see that the relative size of $s_{\Psi}$ and $N$ determines the error $\epsilon_{0}$.
We further recall that from Thm.~\ref{thm complex QSP} we have
\begin{align}\label{eq:skbound}
    |s_\Psi| &= \frac 1{\sqrt{V_\Psi}}\arccos\left(\frac{|E_{\Psi}-z_{}|}{\sqrt{V_{\Psi}+|E_{\Psi}-z_{}|^2}}\right) \nonumber\\
    &\le \frac{1}{|E_{\Psi}-z_{}|},
\end{align}
where the inequality is obtained by exploiting the fact that $s_{\Psi}$ is monotonically decreasing in $V_{\Psi}$ (as shown explicitly in App.~\ref{app lemma and thm proofs}). 
Combining Eq.~\eqref{infinite GC} and Eq.~\eqref{eq:skbound}, we want $1/(|E_{\Psi}-z_{}|)^{3/2}\sqrt{N} \le \epsilon_{0}$, and so we find that there exists an $N$ such that 
\begin{equation}
    N \in \mathcal{O}\left(\frac{1}{|E_{\Psi}-z|^{3} \epsilon_{0}^2}\right) 
\end{equation}
suffices to ensure Eq.~\eqref{eq:D_value} holds.

We thus see that a large gap $|E_{\Psi}-z_{}|$ reduces the required number of steps $N$. Conversely, $N$ diverges when $V_{k}=0$ and $z_{k}=E_{k}$.
This can intuitively be understood as arising from the fact that the operation $H-E_k I$ acts as an ``annihilation operator". If $V_{k}=0$, then the state is an eigenstate and $E_{k}$ corresponds to its eigenvalue, meaning $(H-E_k I)\ket{\Psi} =0$ and so the method breaks down. We note that a similar breakdown for eigenstates was observed for a quantum algorithm for ITE using the group commutator unitary in Ref.~\cite{gluza_DB_QITE_2024}.

\medskip

\paragraph*{Circuit Depth of DB-QSP.}

We now proceed to analyze the circuit depth to realize a DB-QSP state that is $\epsilon$-close to the ideal state for a degree-$K$ polynomial.
We define the circuit depth as the number of Hamiltonian evolution gates and reflection gates to construct quantum circuits for DB-QSP.
For the analysis, consider the following state constructed by DB-QSP:
{\footnotesize
\begin{align}
\label{DBQSP DME}
    &\ket{\omega_{K}} \nonumber\\
    &=\prod_{k=0}^{K-1}   e^{ i \theta_k \omega_{k}}  \left(
e^{is_k^{\mtiny{(N)}} \omega_{k}}
e^{is_k^{\mtiny{(N)}} H}
e^{-is_k^{\mtiny{(N)}} \omega_{k}}
e^{-is_k^{\mtiny{(N)}} H}
\right)^N\ket{\omega_0}\ 
\end{align}}
where the intermediate state is recursively constructed as 
\begin{align}
 &\ket{\omega_{k+1}} \nonumber\\
 &=e^{ i \theta_k \omega_{k}}  \left(
e^{is_k^{\mtiny{(N)}} \omega_{k}}
e^{is_k^{\mtiny{(N)}} H}
e^{-is_k^{\mtiny{(N)}} \omega_{k}}
e^{-is_k^{\mtiny{(N)}} H}
\right)^N\ket{\omega_{k}}\ 
\end{align}
with $s_k^{\mtiny{(N)}} = \sqrt{|s_{k} |/ N}$. We also define the exact QSP state derived from Thm.~\ref{thm complex QSP} as 
\begin{align} \label{eq:ideal_state}
    \ket{\Psi (\bm{\theta},\bm{s})} = \prod_{k=0}^{K-1} e^{i \theta_k \Psi_{k}}  e^{s_{k}[\Psi_{k},H]}\ket{\Psi_0}\ 
\end{align}
with $\bm{\theta}=(\theta_0,\ldots,\theta_{K-1})$ and $\bm{s}=(|s_0|,\ldots,|s_{K-1}|)$. The following Theorem captures the circuit depths required to ensure that the DB-QSP state in Eq.~\eqref{DBQSP DME}, agrees with the true circuit up to $\epsilon$ precision. The key assumption here is that the parameters $(\theta_{k},s_{k})$ are known exactly. In practice, the parameters will be computed with a finite number of measurement shots, requiring an additional sampling overhead and introducing additional errors. We will address this aspect in Section~\ref{sec:error_analysis}.

\begin{theorem}[DB-QSP circuit depth] \label{thm:circuit_depth_DB-QSP}
Suppose $H$ is a Hermitian matrix whose spectral radius does not exceed unity, i.e., $\|H\|\le1$. 
Let $\zeta = \max(\bm{\theta},\bm{s})$ be the maximum value of all elements in $\bm{\theta}$ and $\bm{s}$.
Also, consider $\ket{\omega_{K}}$ given by DB-QSP from Alg.~\ref{alg DBQSP} in Eq.~\eqref{DBQSP DME} and the state $\ket{\Psi (\bm{\theta},\bm{s})}$ from Thm.~\ref{thm complex QSP} in Eq.~\eqref{eq:ideal_state} for degree-$K$ polynomials.
Then there exists a circuit depth $\mathcal{N}_{K}$ such that 
\begin{equation} \label{eq:bound_circuit_depth}
    \mathcal{N}_{K} \in \mathcal{O} \left(\left(\frac{8}{3
    }\right)^2\zeta(1+6\zeta)^{2K} /  \epsilon^2 + 3\right)^{K}
\end{equation}
suffices to ensure that $\|\ket{\Psi(\bm{\theta},\bm{s})} - \ket{\omega_{K}}\| \le \epsilon$.
\end{theorem}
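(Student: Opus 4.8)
The plan is to control the discrepancy $\epsilon_k := \|\,\ket{\Psi_k} - \ket{\omega_k}\,\|$ between the ideal intermediate states of Thm.~\ref{thm complex QSP} and the DB-QSP states of Eq.~\eqref{DBQSP DME} by an inductive argument, and then separately to bound the primitive gate count through a depth recursion. First I would fix the scalar parameters $(\theta_k,s_k)$ to their ideal values (as the theorem assumes) and write each ideal step as $A_k = e^{i\theta_k\Psi_k}e^{s_k[\Psi_k,H]}$ and each DB-QSP step as $C_k = e^{i\theta_k\omega_k}G_k^N$, introducing the auxiliary operator $B_k = e^{i\theta_k\omega_k}e^{s_k[\omega_k,H]}$ (exact commutator exponential but built from the approximate state $\omega_k$). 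The single-step error then splits through the triangle inequality as
\begin{align*}
\|\ket{\Psi_{k+1}} - \ket{\omega_{k+1}}\| \le \|A_k(\ket{\Psi_k}-\ket{\omega_k})\| + \|(A_k - B_k)\ket{\omega_k}\| + \|(B_k-C_k)\ket{\omega_k}\|,
\end{align*}
where the first term equals exactly $\epsilon_k$ because $A_k$ is unitary.

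The next step is to bound the two genuinely new contributions. For the \emph{state-dependence} term $\|A_k - B_k\|$, I would use $\|e^X - e^Y\|\le\|X-Y\|$ for anti-Hermitian $X,Y$ (Duhamel), together with $\|\Psi_k - \omega_k\|\le 2\epsilon_k$ and $\|[\,\cdot\,,H]\|\le 2\|\cdot\|\,\|H\|$ with $\|H\|\le1$. This yields $\|A_k - B_k\|\le(2|\theta_k| + 4|s_k|)\epsilon_k\le 6\zeta\epsilon_k$, which is precisely the origin of the factor $(1+6\zeta)$. For the \emph{group-commutator} term, the shared unitary prefactor $e^{i\theta_k\omega_k}$ cancels, so $\|B_k-C_k\|$ reduces to the approximation error for the state $\omega_k$; invoking an explicit-constant refinement of Eq.~\eqref{infinite GC} gives $\|B_k-C_k\|\le\tfrac{8}{3}|s_k|^{3/2}/\sqrt N\le\tfrac83\zeta^{3/2}/\sqrt N =: \delta$. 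Assembling these produces the linear recursion $\epsilon_{k+1}\le(1+6\zeta)\epsilon_k + \delta$ with $\epsilon_0 = 0$, whose solution is the geometric series $\epsilon_K\le\delta\,\tfrac{(1+6\zeta)^K - 1}{6\zeta}$. Imposing $\epsilon_K\le\epsilon$ and solving for $N$ — where the $1/(6\zeta)$ from the geometric sum cancels against $\zeta^{3/2}$ to leave a single power of $\zeta$ — gives the sufficient per-step repetition count $N \in \mathcal{O}\!\big(\zeta(1+6\zeta)^{2K}/\epsilon^2\big)$, with the constants tracked so that $4N$ carries the prefactor $(8/3)^2$.

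Finally, to obtain the quoted depth I would set up a recursion that accounts for the cost of the \emph{state-dependent reflections}. Each reflection $e^{i\theta_k\omega_k}$ (and each $e^{\pm i s_k^{(N)}\omega_k}$ inside $G_k$) must be realized as $V_k\,e^{i\theta\ket{0}\bra{0}}V_k^\dagger$, where $V_k$ is the DB-QSP circuit preparing $\ket{\omega_k}$ and therefore has depth $\mathcal N_k$. Since step $k$ contains $2N+1$ such reflections (two per group-commutator repetition plus the final one) interleaved with $2N$ Hamiltonian-evolution gates, the depth obeys $\mathcal N_{k+1}\le(4N+3)\mathcal N_k + 2N$; the per-step factor $(4N+3)$ is exactly the source of the additive $3$ in the stated bound. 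Unrolling gives $\mathcal N_K\lesssim(4N+3)^K$, and substituting the sufficient $N$ yields $\mathcal N_K\in \mathcal{O}\!\big((\tfrac83)^2\zeta(1+6\zeta)^{2K}/\epsilon^2 + 3\big)^K$, as claimed.

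The hard part is the \emph{coupling} of two distinct amplification mechanisms. The state-dependence of the gates means the error recursion is not a clean sum of fixed single-step errors: the operators themselves drift as $\omega_k$ deviates from $\Psi_k$, and controlling this Lipschitz drift is what forces $N$ to scale like $(1+6\zeta)^{2K}$. Simultaneously, because the reflections are taken about the \emph{running} state, their implementation recursively re-prepares the entire preceding circuit, turning the total depth into a $K$-fold product rather than a sum. Keeping these two effects cleanly separated — so that the per-step error analysis and the per-step depth recursion compose without double-counting — while faithfully tracking the constants that produce the exact prefactor $(8/3)^2$ and the additive $3$, is the main technical obstacle.
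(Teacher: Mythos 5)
Your proposal follows essentially the same route as the paper's own proof: the identical hybrid-state triangle-inequality splitting (unitarity of the ideal step, a Lipschitz term from the state-dependence of the gates bounded via $\|e^X-e^Y\|\le\|X-Y\|$ and $\|\Psi_k-\omega_k\|\le 2\epsilon_k$, and the group-commutator error), giving the recursion $\epsilon_{k+1}\le(1+6\zeta)\epsilon_k+\mathcal{O}(\zeta^{3/2}/\sqrt{N})$, followed by the same geometric accumulation, the same solve-for-$N$ step, and the same recursive depth count $\mathcal{N}_{k+1}=(4N+3)\mathcal{N}_k+\mathcal{O}(N)$ unrolled to $\mathcal{N}_K\le(4N+3)^K$. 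The only deviations are in constant bookkeeping — the paper's per-step group-commutator bound is $8|s_k|^{3/2}/\sqrt{N}$ (not $\tfrac{8}{3}$), which is precisely what produces the $(8/3)^2$ prefactor you quote (your $\tfrac{8}{3}$ would compose to $(8/9)^2$), and the additive depth term is $4N+1$ rather than $2N$ — neither of which affects the validity of the stated $\mathcal{O}$ bound.
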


To prove this, we first utilize a result proven in App.~\ref{app notions of stability} that the DB-QSP error can be bounded as 
\begin{align} \label{eq:error_bound_DB-QSP}
    \|\ket{\Psi(\bm{\theta},\bm{s})} - \ket{\omega_{K}}\| 
    \le \frac{4}{3}\zeta^{1/2}(1+6\zeta)^K/\sqrt{N}
   \ .
\end{align}
Next, we compute the circuit depth, which is defined as the total number of Hamiltonian evolution gates and the reflection gates.
Given that the state is $\ket{\omega_k} = U_k \ket{0}$, we can write the recursive unitary synthesis formula as
\begin{align}
 \label{unfolded Uk}
    U_{k+1} =& U_k e^{i\theta_{k} \ket0\bra 0}U_k^\dagger \times  G^N \times U_k\ 
\end{align}
with
\begin{align*}
G &=
U_k e^{is_k^{\mtiny{(N)}} \ket0\bra 0}
U_k^{\dagger} e^{is_k^{\mtiny{(N)}} H} \nonumber\\
&\qquad \times U_ke^{-is_k^{\mtiny{(N)}} \ket0\bra 0}
U_k^{\dagger} e^{-is_k^{{\mtiny{(N)}}}H }.
\end{align*}
This implies that each step involves $4N+3$ repetitions of the unitary operators $U_k$ at the previous step. 
Therefore, since an additional $4N+1$ gates ($2N$ gates for Hamiltonian evolution and $2N+1$ for the reflection gates on the initial state $\ket{0}\bra{0}$) are required, the circuit depth $\mathcal{N}_{k+1}$ at step $k+1$ is given by $\mathcal N_{k+1} = (4N+3) \mathcal N_k + 4N+1$.
Thus, the total circuit depth required for a polynomial of degree $K$ can be represented as
\begin{equation} \label{eq:num_queries}
    \mathcal{N}_{K}= \frac{(4N+1)((4N+3)^{K}-1)}{4N+2} \le (4N+3)^{K}
\end{equation}
Thus, by substituting Eq.~\eqref{eq:num_queries} into the right-hand side of Eq.~\eqref{eq:error_bound_DB-QSP}, Eq.~\eqref{eq:bound_circuit_depth} satisfies to ensure the $\epsilon$-precision as claimed in the theorem.

Thm.~\ref{thm:circuit_depth_DB-QSP} indicates that the circuit depth scaling can be prohibitive for high degree polynomials. 
Namely, although the depth scales polynomially in the precision $1/\epsilon$, it grows super-exponentially with the degree of the polynomials $K$.
Consequently, DB-QSP is not practically applicable to polynomials of arbitrary degrees, but should target low-degree polynomials.

 \subsection{Performance Analysis of Perturbations in Parameters} \label{sec:error_analysis}

In this section, we analyze the effect of statistical noise. 
As shown in Alg.~\ref{alg DBQSP}, DB-QSP requires the estimation of energy and variance to determine the parameters $s_{k}$ and $\theta_{k}$ at each time step.
However, due to the finite number of measurement shots in practice, precise estimation is not feasible on quantum hardware. 
Consequently, parameters deviate from their true values at each time step, with perturbations satisfying $|s_{k}-\tilde{s}_{k}|\le \delta_{s}$ and $|\theta_{k}- \tilde{\theta}_{k}|\le \delta_{\theta}$.
In other words, even if the quantum hardware performs the operations perfectly, statistical errors from the measurements lead to erroneous parameters. 

Under this setting, we provide an error bound for implementing a polynomial of degree $K$.
We introduce a noisy state to handle the erroneous parameters:
\begin{align} \label{eq:erroneous_state}
    \ket{\tilde{\Psi}_{H}(\tilde{\bm{\theta}},\tilde{\bm{s}})} = \prod_{k=1}^{K} e^{ i\tilde{\theta}_k \tilde{\Psi}_{k} }  e^{\tilde{s}_{k}[\tilde{\Psi}_{k},H]}\ket{\Psi_0},
\end{align}
where we define $\ket{\tilde{\Psi}_{k+1}} = e^{ i\tilde{\theta}_k \tilde{\Psi}_{k} }  e^{\tilde{s}_{k}[\tilde{\Psi}_{k},H]}\ket{\tilde{\Psi}_{k}}$, with $\ket{\tilde{\Psi}_0}=\ket{\Psi_0}$.
Here, we also introduce $\tilde{\bm{\theta}}=(\tilde{\theta}_0,\ldots,\tilde{\theta}_{K-1})$ and $\tilde{\bm{s}}=(|\tilde{s}_0|,\ldots,|\tilde{s}_{K-1}|)$.
Again, we assume $\|H\|\le1$. 
We can then derive the following result on the circuit error with the detailed proof provided in App.~\ref{app notions of stability}.

\begin{proposition}[Stability of Thm.~\ref{thm complex QSP} under erroneous estimation] \label{prop: para_stability}
Let $H$ be a Hermitian matrix such that $\|H\|\le1$, and assume that the estimated parameters $\tilde s_k$ and $\tilde\theta_k$ satisfy $|s_k-\tilde s_k|\le \delta_s$ and $|\theta_k -\tilde \theta_k|\le \delta_\theta$ with ideal parameters $s_{k}$ and $\theta_{k}$ for all $k$.
By setting $\zeta = \max(\bm{\theta},\bm{s})$, the perturbed state $\ket{\tilde\Psi_H(\tilde{\bm{\theta}},\tilde{\bm{s}})}$ in Eq.~\eqref{eq:erroneous_state} and the state $\ket{\Psi_H (\bm{\theta},\bm{s})}$ from Thm.~\ref{thm complex QSP} satisfies

\begin{align} \label{eq:error_accumulation_by_noisy_estimation}
    &\|\ket{\Psi_{H}(\bm{\theta},\bm{s})} - \ket{\tilde\Psi_{H}(\tilde{\bm{\theta}},\tilde{\bm{s}})}\| \nonumber\\& \quad \le \frac{1}{3\zeta}(1+6 \zeta)^K \max(\delta_s,\delta_\theta)\ .
\end{align}
\end{proposition}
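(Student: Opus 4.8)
The plan is to control the global error through a telescoping recursion on the intermediate states and then solve the resulting linear recurrence. I would introduce the step-wise error $\epsilon_k := \|\ket{\Psi_k} - \ket{\tilde\Psi_k}\|$ between the ideal states of Thm.~\ref{thm complex QSP} and the perturbed states of Eq.~\eqref{eq:erroneous_state}. Since both recursions start from the same $\ket{\Psi_0}$, we have $\epsilon_0 = 0$, and the quantity to be bounded is exactly $\epsilon_K$. Writing the ideal step unitary as $V_k = e^{i\theta_k\Psi_k}e^{s_k[\Psi_k,H]}$ and the perturbed one as $\tilde V_k = e^{i\tilde\theta_k\tilde\Psi_k}e^{\tilde s_k[\tilde\Psi_k,H]}$, I would insert $V_k\ket{\tilde\Psi_k}$ and use unitarity of $V_k$ to obtain
\begin{equation}
\epsilon_{k+1} \le \|\ket{\Psi_k}-\ket{\tilde\Psi_k}\| + \|(V_k-\tilde V_k)\ket{\tilde\Psi_k}\| \le \epsilon_k + \|V_k - \tilde V_k\|.
\end{equation}
This reduces the whole argument to bounding the per-step operator norm $\|V_k - \tilde V_k\|$ in terms of $\epsilon_k$ and the tolerances $\delta_s,\delta_\theta$.

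Next, for the per-step bound I would factor $V_k$ and $\tilde V_k$ into their reflection and commutator parts and use subadditivity of the operator norm under products of unitaries, $\|U_1U_2-\tilde U_1\tilde U_2\|\le\|U_1-\tilde U_1\|+\|U_2-\tilde U_2\|$, to reduce to bounding $\|e^{i\theta_k\Psi_k}-e^{i\tilde\theta_k\tilde\Psi_k}\|$ and $\|e^{s_k[\Psi_k,H]}-e^{\tilde s_k[\tilde\Psi_k,H]}\|$ separately. For each I would invoke the standard inequality $\|e^A-e^B\|\le\|A-B\|$ valid for skew-Hermitian $A,B$ (applicable since $[\Psi,H]$ is anti-Hermitian and $i\Psi$ is anti-Hermitian), and then split each generator difference into a parameter contribution and a state contribution by the triangle identity $A\Psi_k-B\tilde\Psi_k = A(\Psi_k-\tilde\Psi_k)+(A-B)\tilde\Psi_k$. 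The two technical inputs are (i) the projector bound $\|\Psi_k-\tilde\Psi_k\|\le 2\epsilon_k$, coming from $\ket a\bra a - \ket b\bra b = \ket a(\bra a-\bra b)+(\ket a-\ket b)\bra b$, and (ii) the sharp identity $\|[\tilde\Psi_k,H]\|=\sqrt{V_k}\le 1$, obtained by decomposing $H\ket{\tilde\Psi_k}$ into its components along and orthogonal to $\ket{\tilde\Psi_k}$ and using $V_k\le\|H\|^2\le1$. Together with $\|H\|\le1$, $|\theta_k|,|s_k|\le\zeta$, and $\|[\Psi_k-\tilde\Psi_k,H]\|\le2\|\Psi_k-\tilde\Psi_k\|$, these give $\|e^{i\theta_k\Psi_k}-e^{i\tilde\theta_k\tilde\Psi_k}\|\le 2\zeta\epsilon_k+\delta_\theta$ and $\|e^{s_k[\Psi_k,H]}-e^{\tilde s_k[\tilde\Psi_k,H]}\|\le 4\zeta\epsilon_k+\delta_s$.

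I would then assemble the recursion. Combining the above yields
\begin{equation}
\epsilon_{k+1}\le(1+6\zeta)\epsilon_k+(\delta_s+\delta_\theta),
\end{equation}
a linear recurrence with $\epsilon_0=0$ whose solution is the geometric sum $\epsilon_K\le(\delta_s+\delta_\theta)\sum_{j=0}^{K-1}(1+6\zeta)^j=(\delta_s+\delta_\theta)\frac{(1+6\zeta)^K-1}{6\zeta}$. Bounding $(1+6\zeta)^K-1\le(1+6\zeta)^K$ and $\delta_s+\delta_\theta\le2\max(\delta_s,\delta_\theta)$ then gives $\epsilon_K\le\frac{1}{3\zeta}(1+6\zeta)^K\max(\delta_s,\delta_\theta)$, which is exactly Eq.~\eqref{eq:error_accumulation_by_noisy_estimation}.

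The main obstacle is that the deviation does not arise solely from the perturbed scalars $\tilde s_k,\tilde\theta_k$: the generators themselves are built from the drifted states $\tilde\Psi_k$, so the accumulated state error and the parameter error are coupled and reinforce each other across iterations, which is what produces the geometric factor $(1+6\zeta)^K$. The delicate point is isolating the state contribution via $\|\Psi_k-\tilde\Psi_k\|\le2\epsilon_k$ and, crucially, using the tight constant $\|[\tilde\Psi_k,H]\|=\sqrt{V_k}\le1$ rather than the naive $\|[\tilde\Psi_k,H]\|\le2$; without this sharper estimate the per-step additive term would be $\delta_\theta+2\delta_s$ instead of $\delta_s+\delta_\theta$, and the stated prefactor $1/(3\zeta)$ would not be attained.
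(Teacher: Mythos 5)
Your proof is correct, and its overall architecture coincides with the paper's proof of this statement (Prop.~\ref{angle stability prop app}): a telescoping recursion on $\epsilon_k=\|\ket{\Psi_k}-\ket{\tilde\Psi_k}\|$ using unitary invariance, the inequality $\|e^A-e^B\|\le\|A-B\|$ for skew-Hermitian generators, the projector bound $\|\Psi_k-\tilde\Psi_k\|\le 2\epsilon_k$, the recursion $\epsilon_{k+1}\le(1+6\zeta)\epsilon_k+\delta_s+\delta_\theta$, and the geometric sum yielding $\frac{1}{3\zeta}(1+6\zeta)^K\max(\delta_s,\delta_\theta)$. The one place where you genuinely diverge is the treatment of the $\delta_s$ contribution. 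The paper splits the commutator factor into a state-perturbation piece $\|e^{s_k[\Psi_k,H]}-e^{s_k[\tilde\Psi_k,H]}\|$ and a parameter-perturbation piece $\|(e^{s_k[\tilde\Psi_k,H]}-e^{\tilde s_k[\tilde\Psi_k,H]})\ket{\tilde\Psi_k}\|$, and handles the latter by an exact state-level computation: both exponentials share a generator, so by effective idempotence (Prop.~\ref{proposition effective idempotence}) the overlap is $\cos(\delta_s\sqrt{V})$ and the distance equals $2|\sin(\delta_s\sqrt{V}/2)|\le\delta_s\sqrt{V}\le\delta_s$. You instead bound the full operator difference in one step via the generator splitting together with the sharp identity $\|[\tilde\Psi_k,H]\|=\sqrt{\tilde V_k}\le1$, whose derivation (decomposing $H\ket{\tilde\Psi_k}$ into components along and orthogonal to $\ket{\tilde\Psi_k}$) is valid. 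Both routes attain the tight per-step additive constant $\delta_s+\delta_\theta$, and your closing observation is right: the naive bound $\|[\tilde\Psi_k,H]\|\le2$ would give $2\delta_s+\delta_\theta$ per step and only the weaker prefactor $1/(2\zeta)$. Your variant is somewhat more self-contained, since it avoids invoking the effective-idempotence machinery inside the stability argument; the paper's variant gives an exact expression for that term, which retains the factor $\sqrt{\tilde V_k}$ rather than bounding it by $1$. One cosmetic correction: the variance in your identity is that of the perturbed state $\ket{\tilde\Psi_k}$, not the ideal $V_k$; this does not affect the bound.
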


Prop.~\ref{prop: para_stability} indicates that the parameter deviation $\max(\delta_s,\delta_\theta)$ scales linearly with the accumulated error and hence its suppression is critical.
On the other hand, since these parameters are nonlinear functions of energy and variance, analyzing the impact of statistical estimates from Prop.~\ref{prop: para_stability} is non-trivial.
To address this, we extend our analysis to explicitly account for statistical noise in energy and variance estimation.

The recursive structure of Eq.~\eqref{eq:ideal_state} and Eq.~\eqref{eq:erroneous_state} implies that, even in the limit of infinite measurement shots, the estimated energy and variance may still differ from their ideal values. 
This discrepancy arises because these quantities are measured on a potentially different state at each iteration. 
Specifically, the statistical estimate $\overline{E}_{k}$ ($\overline{V}_{k}$) converges to $\tilde{E}_{k}$ ($\tilde{V}_{k}$) obtained from the noisy state $\ket{\tilde{\Psi}_{k}}$, rather than the ideal values $E_{k}$ and $V_{k}$.
To account for this, we extend Prop.~\ref{prop: para_stability} and demonstrate that the statistical noise, $\delta_{E}=|\overline{E}_{k}-\tilde{E}_{k}|$ and $\delta_{V}=|\overline{V}_{k}-\tilde{V}_{k}|$, exhibits a linear dependence on the accumulated error in Eq.~\eqref{eq:error_accumulation_by_noisy_estimation}, but keeps the exponential scaling with $K$. That is, $\|\ket{\Psi_{H}(\bm{\theta},\bm{s})} - \ket{\tilde\Psi_{H}(\tilde{\bm{\theta}},\tilde{\bm{s}})}\| \le C^{K} \max(\delta_{E},\delta_{V})$ for a constant $C\ge 1$. 
See App.~\ref{app notions of stability} for further technical details.

Using the results, we also estimate the number of measurement shots needed to achieve the state $\epsilon$-close to the ideal state at $K$ step.
Without loss of generality, we express the Hermitian matrix as a weighted sum of Pauli terms, i.e., $H = \sum_{i=1}^J w_i P_i$.
Then, the number of samples $N_{E}$ ($N_{V}$) required to estimate the energy (variance) within an error $\tilde{\epsilon}$ with probability at least $1-\delta$, $\delta\in(0,1]$, scales as $N_{E}\in\mathcal{O}(J\|w\|_1^2/\tilde{\epsilon}\delta)$ ($N_{V}\in\mathcal{O}(J^2\|w\|_1^4/\tilde{\epsilon}\delta)$), where $\|w\|_1=\sum_{i=1}^{J}|w_{i}|$.
See App.~\ref{app variance estimator} for further details.
Thus, the number of measurement shots needed to achieve the $\epsilon$-precision grows exponentially with the polynomial degree $K$, since the estimation error must be sufficiently small to cancel a term that scales exponentially with $K$.
However, when $K$ scales logarithmically with $1/\epsilon$,  the required measurement shots reduce to polynomial resources in $1/\epsilon$. 

Lastly, in App.~\ref{sec: classical_computation_app} and App.~\ref{app variance estimator}, we explore two different aspects to alleviate statistical errors: (1) the potential of classical computations to reduce the impact of imprecise estimation, and (2) an analysis of the estimator for variance operators.

\subsection{Application Examples:  Ground-State Approximation and Matrix Inversions} \label{sec:application}

A significant limitation of DB-QSP is its scaling with polynomial order $K$ in Thm.~\ref{thm:circuit_depth_DB-QSP}. 
Yet, Thm.~\ref{thm:circuit_depth_DB-QSP} implies that when the degree of polynomials $K$ scales logarithmically in the inverse of the precision, $1/\epsilon$, then the circuit depths follow a quasi-polynomial scaling, i.e., $\mathcal{N}_{K}=2^{\text{poly}\log(1/\epsilon)}$.
Thus, DB-QSP is potentially applicable to polynomial functions of degree at most $K=\mathcal{O}(\log(1/\epsilon))$.
Indeed, low-degree approximations, such as those using Chebyshev polynomials, allow efficient representations of certain functions using logarithmically-small degrees.
For specific examples of such approximations, see App.~\ref{sec:application_app}.
In the following, we present two representative tasks that illustrate the utility and limitations of low-degree approximations: ground-state preparation and matrix inversion. We discuss DB-QSP's applicability to other tasks in App.~\ref{sec:application_app}.

\medskip

\paragraph*{Ground-State Approximation.}

Given a Hermitian matrix $H$, the objective here is to prepare its ground state $\ket{\lambda_{0}}$.
To achieve this, some quantum algorithms employ a variety of filtering techniques that apply suitable functions of the Hamiltonian.
In what follows, we focus on two representative examples: Imaginary-Time Evolution (ITE), which employs the exponential filter, and the approach proposed in Ref.~\cite{lin2020near}, which makes use of a sign-function filter.

As a first example of a method that can be implemented using DB-QSP, we consider Imaginary-Time Evolution (ITE), where the non-unitary operator $p(H) = e^{-\tau H} $ is applied to an initial state $\ket{\Psi_{0}}$:
\begin{align} \label{eq:ite}
    \ket{\Psi_\tau} =  \frac{e^{-\tau H}\ket{\Psi_0}}{\|e^{-\tau H}\ket{\Psi_0}\|} \ .
\end{align}
A key feature of ITE is that it guarantees convergence as long as the initial state has a nonzero overlap with the ground state.
Refs.~\cite{gluza_DB_QITE_2024,mcmahon2025equating} established that Eq.~\eqref{DB bridge} serves as a first-order approximation of ITE and further extends this result by employing group commutator iterations in Eq.~\eqref{infinite GC} to construct a unitary realization of ITE: See App.~\ref{sec:application_app} for more details.
In addition to the methods proposed in Refs.~\cite{gluza_DB_QITE_2024,mcmahon2025equating}, DB-QSP can be used to directly construct a polynomial approximation of the exponential function. Specifically, such an approximation requires a polynomial of degree $K=\mathcal{O}(\sqrt{2\tilde{\tau}\log(4/\epsilon)})$ with $\tilde{\tau}=\lceil\max(e^2 \tau, \log(2/\epsilon))\rceil$~\cite{sachdeva2014faster}. 
This implies that a polynomial approximation of ITE can be implemented with a complexity that scales quasi-polynomially in the error precision~$\epsilon$. 
However, the scaling with respect to the evolution time $\tau$ remains unfavorable.

Beyond ITE, DB-QSP can also construct alternative filtering functions. 
Ref.~\cite{lin2020near} presents a nearly optimal algorithm for ground-state preparation using QSVT. The core idea is to use a low-degree approximation of the sign function, whose degree scales logarithmically with $1/\epsilon$, i.e., $K=\mathcal{O}(\log(1/\epsilon)/\delta)$, for an input $x\in[-2,2]\backslash(-\delta,\delta)$ with $\delta>0$.
Hence, with the same approximation technique, similar filtering strategies could potentially be realized via DB-QSP with favorable scaling in the error precision $\epsilon$.

We recall that the success probability of existing QSP implementations depends on the overlap between the initial state and ground state, meaning that these methods may fail entirely if the initial state is not well-prepared~\cite{lin2020near}.
In contrast, our approach is applicable to any state, as long as there is a non-zero overlap. Therefore, even if DB-QSP cannot fully implement the desired polynomial functions due to resource constraints, it can still systematically improve the quality of the state.

\medskip

\paragraph*{Matrix Inversion.}
The goal of ``matrix inversion" is to apply $A^{-1}$ to an input state, where $A$ is a square matrix. This is a core subroutine for solving linear systems $A\ket{x}=\ket{b}$ for $\ket{x}$~\cite{harrow2009quantum, martyn2021grand,gilyen2019quantum}. As shown in App.~\ref{sec:application_app}, a polynomial of degree $K=\mathcal{O}(\kappa\log(\kappa/\epsilon))$ can approximate the inverse function $1/x$ with the precision $\epsilon$ for an input $x\in[-1,1]\backslash (-\frac{1}{\kappa},\frac{1}{\kappa})$ where $\kappa\ge1$ is the condition number of the matrix.
While our results so far apply to Hermitian matrices, we can construct a Hermitian matrix from any square matrix $A$ by the extension:
\begin{equation}
    H=\begin{bmatrix}
        0 & A \\
        A^{\dagger} & 0 \\
\end{bmatrix} , 
\end{equation}
This indicates that DB-QSP has the potential to efficiently perform matrix inversion in terms of the inverse precision $1/\epsilon$.
However, the circuit depth required for DB-QSP scales super exponentially with the condition number, a key factor in assessing the algorithm’s efficiency. 
Thus, this example also highlights a fundamental challenge for DB-QSP in certain computational tasks.

\subsection{Hybrid Strategy: DB-QSP with Existing Methods} \label{sec:hybrid strategy}

The performance analysis has highlighted that, while DB-QSP holds promise for certain tasks, its circuit depth and the requirement for precise estimation of the energy and variance pose significant challenges.
However, DB-QSP does not have to be used as a standalone approach.
By integrating it with existing methods, these limitation can be alleviated and a hybrid approach may further enhance its feasibility.

In the following, we explore how combining DB-QSP with established techniques can improve performance.
Specifically, we examine three approaches: (1) Variational Quantum Algorithms (VQA)~\cite{cerezo2021variational} and classical computation, (2) Quantum Dynamic Programming (QDP)\cite{QDP}, (3) qubitization and LCU.

\medskip 

\paragraph*{VQA \& Classical pre-computations.}
A potential strategy to circumvent the challenges in DB-QSP is to employ a preconditioner that bypasses the initial steps. In this regard, classical computational methods can serve as effective preconditioners. 
Our target operation in Eq.~\eqref{eq: aI+bH main_} consists of a weighted sum of $I$ and $H$ with appropriate coefficients.
Consequently, the feasibility of classical computation relies on the efficiency of evaluating $\braket{\Psi_{0}|H^{2k+2}|\Psi_{0}}$ for degree-$k$ polynomials.
We show that, if the initial state $\ket{\Psi_{0}}$ is sparse and a Hermitian matrix contains a limited number of Pauli terms, then classical computation is feasible.
Moreover, advanced classical techniques, e.g., tensor networks, could further improve the efficiency, see Sec.~\ref{sec: classical_computation_app} for details.

Another approach might be to first use a Variational Quantum Algorithm (VQA) where a parameterized quantum circuit $U_{\theta}$ is trained to approximate the target state.
By leveraging a VQA, a relatively shallow-depth circuit might be found to replicate the operations of a few DB-QSP steps, allowing the trained circuit to serve as a warm start for DB-QSP; i.e. $\ket{\Psi_k} \approx U_\mathbf{\theta} \ket{\Psi_0}$
for a small $k$.
However, this strategy has several challenges.
First, there are no theoretical guarantees of convergence for VQAs in practical regimes.
Secondly, as highlighted in Thm.~\ref{thm:circuit_depth_DB-QSP} and Prop.~\ref{prop: para_stability}, small errors at each step can accumulate significantly as the polynomial degree increases. 
Consequently, errors introduced by the VQA may degrade the final result.
Another issue is the barren plateau phenomenon~\cite{mcclean2018barren, cerezo2020cost}, where gradient magnitudes vanish exponentially with system size, making parameter training impractical.
Indeed, it has been suggested that VQAs themselves may need warm starting strategies~\cite{puig2024variational, mhiri2025unifying}, or else they are classically simulable~\cite{cerezo2023does, angrisani2024classically, bermejo2024quantum, lerch2024efficient}.
In such cases, the direct use of DB-QSP may be a better option.

\medskip \paragraph*{Quantum Dynamic Programming (QDP).}
We recall that the significant increase in circuit depth arises from a recursive circuit structure. Specifically, the implementation of the state-dependent reflection through $e^{is_{k}\Psi_{k}}=U_{k}e^{is_{k}\ket{0}\bra{0}}U_{k}^{\dagger}$ leads to a prohibitive number of queries to the quantum gates.
Therefore, incorporating a subroutine that reduces the implementation cost would enhance the efficiency of DB-QSP. The operation $e^{is\Psi_{k}}$ is a special case of Density-Matrix Exponentiation (DME), for which some quantum algorithms have been proposed~\cite{lloyd2014quantum, Kimmel2017DME_OP, kjaergaard2022demonstration}.
DME leverages coherent swap operations between multiple copies of $\ket{\Psi}$ to realize exponentiation. 
Note that, since the swap operations are independent of previous runtime, the circuit depth scales only polynomially.

Recently, Quantum Dynamic Programming (QDP) has been proposed to study the use of routines such as DME for speeding up quantum recursions. 
QDP is  powerful in that  utilizing memory leads to an exponential reduction in circuit depth~\cite{QDP}. 
This characteristic makes it a viable subroutine for DB-QSP.
However, due to the no-cloning theorem, QDP has the disadvantage that one must extend the width when implementing recursion steps, meaning that multiple copies of the state must be prepared~\cite{QDP}.
Hence, when combining DB-QSP with QDP, it becomes crucial to balance the trade-off between width and depth for practical feasibility.

\medskip \paragraph*{Qubitization \& LCU.}
One may envision integrating DB-QSP with QSP implementations that involve post-selection (e.g. qubitization and LCU). While these methods supplemented with amplitude amplification are highly sophisticated and can function as standalone methods, their practicality can be hindered in certain scenarios.
As discussed in Sec.~\ref{sec:preliminary}, an exponentially small success probability can limit the practicality of these methods for some tasks such as ground-state preparation.
Thus, by leveraging DB-QSP as a preconditioner, we can potentially mitigate this issue and enhance the overall feasibility of these advanced algorithms.\\

\section{Discussion}\label{sec: discussion}

Quantum signal processing (QSP) is a fundamental framework for designing  quantum algorithms. 
Existing implementation methods, such as qubitization and linear combinations of unitaries (LCU), are powerful but rely on post-selection of auxiliary qubits, which could limit their celebrated efficiency in certain cases.
In this work, we propose a unitary synthesis formula for QSP without auxiliary qubits and post-selection. 
Our method, termed DB-QSP, relies on Thm.~\ref{thm complex QSP} together with the recently established Double-Bracket Quantum Algorithm (DBQA) framework~\cite{gluza2024doublebracket,gluza_DB_QITE_2024}.
While our approach comes at the cost of circuit depth and requires precise estimation of energy and variance, it can be used to efficiently implement low-degree polynomial approximations.
We further note that our method requires fewer multi-qubit controlled gates than qubitization, as no ancillary qubits are used.
Namely, DB-QSP may be more advantageous under hardware constraints, which are likely to occur in near-term or upcoming quantum devices.
Thus our proposal broadens the range of options for implementing the QSP framework on quantum hardware, with hybrid approaches that combines both DB-QSP and prior methods looking particularly appealing.

Further investigations of the fundamental limits of our algorithm's performance could be interesting. 
Thm.~\ref{thm complex QSP} clarifies that the time duration $s$ and the angle $\theta$ are determined by the energy mean and fluctuation (i.e., variance). 
This suggests that thermodynamic quantities alone provide sufficient information to guide the implementation of the target transformation. 
Moreover, these quantities are key to the algorithm’s efficiency, as the complexity is determined by the time duration and angles defined using them.
Given that this unitary process originates from an unphysical polynomial function, the link between unphysical operations and underlying physical principles will provide an intriguing perspective.

Let us highlight a geometrical view of our algorithm.
As shown in Eq.~\eqref{eq:qsp_conv_uni_qubitize}, the core of qubitization is that an auxiliary two-level system enables the construction of specific polynomials through a sequence of unitary operators interleaved with phase gates. 
Interestingly, Thm.~ \ref{thm complex QSP} reveals that QSP implementations without auxiliary qubits exhibit a similar structure (i.e., Eq.~\eqref{eq:qsp_wo_ps}). 
This structural similarity suggests that both approaches can be analyzed from a geometrical perspective.
More specifically, through the lens of DBQA~\cite{gluza2024doublebracket, gluza_DB_QITE_2024}, it is known that the exponential of commutators, $e^{[\Psi,H]}$, approximates the steepest descent direction on the Riemannian manifold of quantum states with respect to the cost function $-\|H-\Psi\|^2_2/2$~\cite{helmke_moore_optimization, moore1994numerical, bloch1990steepest, smith1993geometric, dirr2008lie, kurniawan2012controllability, schulte2011optimal, schulte2008gradient, riemannianflowPhysRevA.107.062421}. 
While our formulation introduces additional state-dependent reflection gates, polynomials with real roots, such as Chebyshev polynomials, can be constructed without these additional gates. 
This observation suggests a promising direction for exploring a geometric interpretation of QSP, and possibly of qubitization itself, within our framework.

Finally, a natural direction for future work is to compare the runtime of the proposed approach with that of alternative methods.
In particular, recent studies have introduced Lindbladian approaches requiring only a single ancilla qubit~\cite{ding2024single}.
Such dissipative schemes potentially offer an advantages over DBQA and standard QSP, as the convergence to the ground state is possible even when the initial state has zero overlap with the ground state~\cite{ding2024single,chen2025efficient,chen2024local}.
Interestingly, it is known that Lindbladian dynamics can be formulated as a gradient flow with respect to the quantum relative entropy~\cite{mittnenzweig2017entropic}.  
This suggests that, despite being seemingly distinct approaches, DBQAs, QSP and Lindbladian simulations may have a shared foundation in gradient flow dynamics defined for appropriate cost functions and underlying Riemannian manifolds.
A detailed analysis of the role of Riemannian gradients in the procedures of Refs.~\cite{ding2024single,chen2025efficient,chen2024local} is left for future investigation.

\medskip
\textit{Acknowledgments.} Insightful discussions with Thais Silva and Andrew Wright are acknowledged. MG, JS, BHT and NN are supported by the start-up grant of the Nanyang Assistant Professorship at the Nanyang Technological University in Singapore. ZH was supported by the Sandoz Family Foundation-Monique de Meuron program for Academic Promotion. MG was also supported by the Presidential Postdoctoral Fellowship of the Nanyang Technological University in Singapore.

\bibliography{QITE_DBF_EPFL.bib} 
\bibliographystyle{quantum}

\clearpage
\onecolumn\newpage
\appendix

% \title{Supplementary Material for ``Unitary synthesis for quantum signal processing without auxiliary qubits and post-selection"}
% \maketitle
% \onecolumngrid
% % \tableofcontents
\part{Appendix}
\parttoc

\newtheorem{theoremA}{Theorem}[section]
\newtheorem{propositionA}[theoremA]{Proposition}
\newtheorem{lemmaA}[theoremA]{Lemma}
\newtheorem{definitionA}[theoremA]{Definition}
\newtheorem{corollaryA}[theoremA]{Corollary}
\newtheorem{remarkA}[theoremA]{Remark}

\renewcommand{\thetheoremA}{\Alph{section}.\arabic{theoremA}}

\newpage
\section{Overview of Methods for QSP Involving Post-Selection} \label{app:overview_qsp}
We start with a brief overview of Quantum Signal Processing (QSP) through its unitary synthesis method known as qubitization.
Then, we also introduce the Linear Combination of Unitaries (LCU) as a unitary synthesis technique for implementing QSP.

\subsection{Overview of QSP Using Qubitization} \label{sec:review_QSP}
QSP is a framework for systematically constructing matrix-valued functions. 
The concept of QSP originated from a series of works which aimed at characterizing the achievable polynomial functions of a scalar value embedded in a single-qubit rotation~\cite{equiangular}.
The QSP methodology introduced in Ref.~\cite{equiangular} was later extended to Hermitian matrices through a technique known as \textit{qubitization}~\cite{Low2019hamiltonian}, using the framework of block-encodings. 
Subsequently, QSP was generalized to all polynomials~\cite{motlagh2024generalized} and extended to non-square matrices through Quantum Singular Value Transformation (QSVT)~\cite{gilyen2019quantum}.
Notably, this has led to asymptotically optimal Hamiltonian simulation algorithms~\cite{Low2019hamiltonian} and a near-optimal method for ground-state preparation~\cite{lin2020near}.
Furthermore, QSP serves as a fundamental tool for constructing primitive quantum algorithms that exhibit quantum advantages~\cite{martyn2021grand,gilyen2019quantum}.
Therefore, its efficiency in implementing linear algebraic operations and its role as a key building block for quantum algorithms with potential advantages have made QSP a subject of significant interest.

Here, we focus on the qubitization technique~\cite{equiangular}. 
Specifically, following the approach in Ref.~\cite{equiangular}, we begin with a degree-$K$ polynomial of a scalar input $x\in[-1,1]$.
In the original work, a quantum circuit $U_{YLC}$ was introduced with a sequential structure comprising of two types of operators:
\textit{signal operators} ${W}$ and \textit{signal processing operators} ${S}(\phi)$, where the phase $\phi$ is drawn from a set ${\phi_k}$.
The desired polynomial transformation is then obtained by performing a measurement in the so-called \textit{signal basis}.
Concretely, it was demonstrated that there exists a sequence of QSP phase $\{\phi_{k}\}$ such that the following circuit
\begin{equation} \label{app eq:qsp_conv_uni}
U_{YLC} = S_{z}(\phi_{0}) \prod_{k=1}^{K} W(x)S_{z}(\phi_{k})
\end{equation}
with the operators \vspace{-0.15cm}
\begin{equation*}
    W(x) = e^{ixX/2} = \begin{bmatrix}
        x & i\sqrt{1-x^2} \\
        i\sqrt{1-x^2} & x \\
\end{bmatrix},\qquad
    S_{z}(\phi) = e^{i\phi Z} = \begin{bmatrix}
        e^{i\phi} & 0 \\
        0 & e^{-i\phi} \\
\end{bmatrix},
\end{equation*}
followed by measurement in the basis $M=\{\ket{+},\ket{-}\}$ can realize a degree-$K$ real polynomial $p(x)$, provided that
\begin{enumerate}
\item Degree of $p(x)$ is equal to or less than $K$,
\item $p(x)$ has a parity $K$ mod 2,
\item $\forall x\in[-1,1]$, $|p(x)|\le1$.
\end{enumerate}
This setting is referred to as the Wx convention, as the signal operator is implemented using the rotation-$x$ gate. Alternatively, the rotation-$z$ gate can be used, which is known as the Wz convention. For further details, see Ref.~\cite{martyn2021grand}.

The core idea of the synthesis approach is that single-qubit rotations can implement arbitrary polynomial transformations, provided the conditions mentioned earlier are met.
Similarly, this single-qubit-like structure used in Eq.~\eqref{app eq:qsp_conv_uni} to synthesize polynomial functions can be extended to Hermitian matrices by employing the block-encoding technique~\cite{chakraborty_et_al:LIPIcs.ICALP.2019.33}, which embeds a Hermitian matrix $ H$ in the top-left block of a larger unitary matrix.
More precisely, $U_{H}$ is called a $(\alpha,n_a,\epsilon)$ block-encoding of $H$, if it satisfies 
\begin{align}  \label{eq:be_def}
    \|H - \alpha(\langle0|^{\otimes n_a} \otimes I)U_H (|0\rangle^{\otimes n_a} \otimes I)\|\le \epsilon
\end{align}
with $\alpha,\epsilon \in \mathbb{R}^{+}$ and the number of auxiliary qubits $n_{a}$.
An example of the matrix form is given by
\begin{equation} \label{eq:be_unitary}
W( H) =  \begin{bmatrix}
         H & i\sqrt{I- H^2} \\
        i\sqrt{I- H^2} &  H
        \end{bmatrix}.
\end{equation}
By substituting the signal operator $W(x)$ in the unitary $U_{YLC}$ of Eq.~\eqref{app eq:qsp_conv_uni} with the block-encoded unitary in Eq.~\eqref{eq:be_unitary}, we can perform polynomial transformations of Hermitian matrices.
Furthermore, QSP has been extended to non-square matrices via QSVT, which enables the manipulation of singular values for broader applications in quantum linear algebra.

We note that, given an input state $\ket{\Psi}$, the state after applying the block-encoding unitary $U_{H}$ in Eq.~\eqref{eq:be_def} is expressed as
\begin{equation}
    \ket{0}^{\otimes n_{a}} \otimes \frac{H}{\alpha}\ket{\Omega} + \ket{\text{garbage}^{\perp}}.
\end{equation}
Here, $\ket{\text{garbage}^{\perp}}$ is a state orthogonal to $\ket{0}^{\otimes n_{a}} \otimes H/\alpha\ket{\Omega}$.
Due to the normalization, the probability of getting $\ket{0}^{\otimes n_{a}}$ is given by $ p_\text{succ} = \|H\ket \Omega\|^2/\alpha^2$.
By extending Eq.~\eqref{app eq:qsp_conv_uni} to controlled-unitary operations, we obtain the state
\begin{align}
\ket{0}^{\otimes n_{a}} \otimes p(H/\alpha)\ket{\Omega} + \ket{\text{garbage}^{\perp}},
\end{align}
which succeeds with probability
\begin{align}
\label{succ prob be p(H)_}
p_\text{succ} = \|p(H/\alpha)\ket{\Omega}\|^2.
\end{align}
As shown in the main text, a core insight is that this probability can be exponentially small in case of Imaginary-Time Evolution (ITE) where $p(H) \approx e^{-\tau H}$. 
In such cases, Eq.~\eqref{succ prob be p(H)_} is inversely proportional to the fidelity of the initial state with a thermal state, which can decay exponentially~\cite{chan2023simulating,sze2025hamiltonian}.
More generally, this fidelity dependence holds across different scenarios. 
For instance, the block-encoding query complexity for nearly-optimal ground-state preparation algorithm in Ref.~\cite{lin2020near} scales as $O(\alpha/\gamma)$, where $\gamma = |\langle \lambda_0|\Psi\rangle|^2$ is the fidelity of the input state $\ket{\Psi}$ with the ground state $\ket{\lambda_0}$ of $H$.
The query scaling $O(\alpha/\gamma)$ corresponds to the inverse success probability and thus requires repeated trials for obtaining a successful outcome. 
Other probabilistic methods exhibit similar sensitivity~\cite{lin2020near, ge2019faster, poulin2009preparing}; see, for example, Ref.~\cite{oumarou2025molecular} for a discussion focused on computing expectation values rather than preparing quantum states and Ref.~\cite{zhang2025quantum} for modifications of the filter functions which aim to alleviate this problem.
This indicates that the success of the block-encoding depends on the input state. 
Additionally, since the number of queries to the block-encoding unitary scales with the degree of polynomials as shown in Eq.~\eqref{app eq:qsp_conv_uni}, the degree $K$ needs to be sufficiently low to ensure successful post-selection each time.

\subsection{Overview of QSP Using Linear Combination of Unitaries (LCU)}\label{sec:review_LCU}

Another straightforward approach to implementing QSP is the Linear Combination of Unitaries (LCU) technique~\cite{gui2006general,Childs2012LCU,Childs2012LCU,kothari,chakraborty2024implementing}. 
More broadly, LCU is a fundamental method for realizing general matrix functions using unitary operations. 
The key idea is that, a given matrix $H = \sum_{j=1}^J w_j U_j$, which can be expressed as a weighted sum of unitary operators $\{U_{j}\}$, can be efficiently implemented with additional auxiliary qubits whose number grows logarithmically with the number of decomposed terms $J$ in the matrix. 
The desired transformation is then realized by measuring the auxiliary qubits, which corresponds to successfully projecting the system onto a subspace where the target operation is encoded.
In this sense, LCU serves as one way to implement the block-encoding framework in Eq.~\eqref{eq:be_unitary}; that is, LCU can be used as a subroutine of qubitization. 
However, in this section, we focus on LCU as a standalone approach for realizing QSP.

We begin by outlining the LCU technique in detail.
The framework is built upon two essential subroutines: $\text{PREP}$ and $\text{SEL}$.
The $\text{PREP}$ encodes the $J$ coefficients $\{w_{j}\}$ of the target matrix $H$ on auxiliary register states $\ket{0}_{a}=\ket{0}^{\otimes n_{a}} $ as follows:
\begin{align}
    \text{PREP}:\quad \text{PREP}\;\ket{0}_a = \sum_{j=1}^J \sqrt{\frac{w_j}{\|w\|_1}} \ket{j},
\end{align}
where $\|w\|_1 = \sum_{j=1}^J |w_j|$ is the 1-norm of the matrix $H$.
The $\text{SEL}$ subroutine applies the unitary $U_{j}$ to an input state $\ket{\Omega}$, conditioned on the control register being in state $j$.
Combining these operations, we construct the unitary $U_{\text{LCU}} = \text{PREP}^\dagger \cdot \text{SEL} \cdot \text{PREP}$, which gives
\begin{align}
U_{\text{LCU}} \ket{0}_a \otimes \ket{\Omega} = \frac{1}{\|w\|_1} \ket{0}_a \otimes H\ket{\Omega} + \ket{\text{garbage}^{\perp}}. \end{align}
If a measurement of the auxiliary register yields $\ket{0}_a$, the remaining quantum state is the normalized state given by $H\ket\Omega / \|H\ket\Omega\|$.
The probability of this successful projection is given by
\begin{align}
\label{eq lcu succ p}
    p_\text{succ} = \|H\ket \Omega\|^2/\|w\|_1^2\ .
\end{align}
This procedure extends naturally to QSP.
We exploit the fundamental theorem of algebra, which states that any univariate polynomial with complex coefficients $p(z) = \sum_{k=0}^K a_k z^k$ can be factorized in terms of its roots $z_k$ to take the form $p(z) = a_K \prod_{k=0}^K(z-z_k)$. This directly generalizes to matrix functions and we get 
\begin{align}\label{eq:pH}
     p(H) = a_K \prod_{k=0}^K(H-z_k I)\ . 
 \end{align}
Setting $H$ as a Hermitian matrix, we proceed inductively by applying a sequence of the operators $F_k = H-z_k I$ using LCU.
This results in the transformation
\begin{align}
    \ket{\Psi_{k+1}} = \frac{F_k \ket{\Psi_k}}{\|F_k \ket{\Psi_k}\|} \ .
\end{align}
Since the leading coefficient $a_K$ of the polynomial $p(H)= a_K \prod_{k=1}^K F_k$ cancels out by normalization, we obtain 
\begin{align}
    \ket{\Psi_K} = \frac{p(H)\ket{\Psi_0}}{\|p(H)\ket{\Psi_0}\|}.
\end{align}
Let us next discuss the success probability of this procedure assuming that the Hamiltonian is decomposed into Pauli operators as $H = \sum_{i=1}^J w_i P_i$.
From Eq.~\eqref{eq lcu succ p}, the success probability of post-selection for $k$ step is equal to the conditional probability given that the state $\ket{\Psi_{k-1}}$ at $(k-1)$ step is successfully generated: that is, we have
\begin{align}
\label{conditional succ prob LCU}
\text{Pr}(k \text{-th step success } | \Psi_k )= \frac{\|F_{k-1} \ket {\Psi_{k-1}}\|^2}{(|z_{k-1}| +\|w\|_1)^2}\ .
\end{align}
Thus, the success probability at $K$ step is given by
\begin{align}
\label{eq final succ prob LCU}
\text{Pr}(\text{QSP success}) &= \prod_{k=1}^K  \text{Pr}(k \text{-th step success } | ~\Psi_{k-1} ) =\frac{\|\prod_{k=1}^KF_{k-1}\ket {\Psi_0}\|^2}{\prod_{k=1}^K(|z_{k-1}|+\|w\|_1)^{2K}}.
\end{align}
Suppose that the probability in Eq.~\eqref{conditional succ prob LCU} can be bounded by $1-q$ with $q\in(0,1]$, then we have $\text{Pr}(\text{QSP success}) \le (1-q)^{2K}$, indicating an exponential hardness of successful post-selection.

To address these limitations, we turn to our proposal, DB-QSP. Unlike LCU, DB-QSP constructs deterministic unitary operations that implement the desired state transformations without requiring post-selection and auxiliary qubits. 
This approach could improve the preprocessing initialization for QSP, reducing the overall hardware runtime by eliminating the need for post-selection.

\section{Proofs of Lem.~\ref{thm zero overhead QSP unitary single step} and Thm.~\ref{thm complex QSP}}
\label{app lemma and thm proofs}

\subsection{Proof of Lem.~\ref{thm zero overhead QSP unitary single step}}

For completeness, we restate the statement from the main text.

\begin{lemmaA}[Unitary synthesis for linear polynomials without post-selection]
\label{app thm zero overhead QSP unitary single step} 
Suppose $p(H)=H-\alpha I$ is any linear polynomial of a Hermitian matrix $H$ with $\alpha\in\mathbb{R}$.
Given a state vector $\ket{\Psi} $ with energy mean $E_\Psi = \bra\Psi H\ket\Psi$ and variance $V_{\Psi}=\bra\Psi H^2\ket\Psi -E_\Psi^2$, the unitary synthesis for the action of $p(H)$ can be achieved by 
    \begin{align}
    \label{app DB bridge}
        U_\Psi = e^{s_\Psi[\Psi,H]},
    \end{align}
    with 
    \begin{equation} \label{app eq:duration_monic_linear}
        s_\Psi = \frac{-1}{\sqrt{V_\Psi}}\arccos\left(\frac{E_{\Psi}-\alpha}{\sqrt{V_{\Psi}+\left(E_{\Psi}-\alpha\right)^2}}\right).
    \end{equation}
\end{lemmaA}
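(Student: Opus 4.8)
The plan is to verify unitarity by inspection and then reduce the synthesis to a resummation of the exponential series acting on $\ket\Psi$. First I would note that, since $\Psi=\Psi^\dagger$ and $H=H^\dagger$, the generator $W_H := [\Psi,H]$ obeys $W_H^\dagger = [H,\Psi] = -W_H$, so $W_H$ is anti-Hermitian and $U_\Psi = e^{s_\Psi W_H}$ is unitary for every real $s_\Psi$. The content of the claim is therefore the existence of a real $s_\Psi$ with $e^{s_\Psi W_H}\ket\Psi = (H-\alpha I)\ket\Psi/\|(H-\alpha I)\ket\Psi\|$, and I would establish this by computing the action of $e^{s_\Psi W_H}$ on $\ket\Psi$ in closed form.

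The key structural step is that $W_H$ preserves the two-dimensional subspace $\mathrm{span}\{\ket\Psi, H\ket\Psi\}$ and that $W_H^2$ acts on $\ket\Psi$ as a scalar. Using $\Psi=\ket\Psi\bra\Psi$ and $\braket{\Psi|H|\Psi}=E_\Psi$, a short calculation gives $W_H\ket\Psi = -(H-E_\Psi I)\ket\Psi$ (Eq.~\eqref{eq: first power of commutator}); applying $W_H$ once more, now invoking $\Psi^2=\Psi$ and $\braket{\Psi|H^2|\Psi}=V_\Psi+E_\Psi^2$, makes the cross terms cancel and yields $W_H^2\ket\Psi = -V_\Psi\ket\Psi$ (Eq.~\eqref{state idempotence}). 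Because $\ket\Psi$ is thus an eigenvector of $W_H^2$ with eigenvalue $-V_\Psi$, the even and odd powers in $e^{s_\Psi W_H}\ket\Psi=\sum_k \tfrac{s_\Psi^k}{k!}W_H^k\ket\Psi$ resum separately into $\cos(s_\Psi\sqrt{V_\Psi})$ and $\sin(s_\Psi\sqrt{V_\Psi})/\sqrt{V_\Psi}$, giving $e^{s_\Psi W_H}\ket\Psi = (a(s_\Psi)I + b(s_\Psi)H)\ket\Psi$ with the coefficients $a = \cos(s_\Psi\sqrt{V_\Psi}) + (E_\Psi/\sqrt{V_\Psi})\sin(s_\Psi\sqrt{V_\Psi})$ and $b = -\sin(s_\Psi\sqrt{V_\Psi})/\sqrt{V_\Psi}$ quoted in the main text.

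Finally I would fix $s_\Psi$ by imposing $(aI+bH)\ket\Psi = (H-\alpha I)\ket\Psi/\mathcal N$ with $\mathcal N := \|(H-\alpha I)\ket\Psi\|$. Whenever $\ket\Psi$ and $H\ket\Psi$ are linearly independent, matching coefficients reduces this to the two scalar conditions $a = -\alpha/\mathcal N$ and $b = 1/\mathcal N$. Expanding the norm gives the compact form $\mathcal N^2 = \braket{\Psi|(H-\alpha I)^2|\Psi} = V_\Psi + (E_\Psi-\alpha)^2$, and solving the two conditions forces $\sin(s_\Psi\sqrt{V_\Psi}) = -\sqrt{V_\Psi}/\mathcal N$ together with $\cos(s_\Psi\sqrt{V_\Psi}) = (E_\Psi-\alpha)/\mathcal N$; I would confirm these are simultaneously realizable via $\cos^2+\sin^2 = (V_\Psi+(E_\Psi-\alpha)^2)/\mathcal N^2 = 1$. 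The step I expect to require the most care is the branch choice: $\arccos$ takes values in $[0,\pi]$, where the sine is nonnegative, whereas the required sine is negative, so one must select the negative branch $s_\Psi\sqrt{V_\Psi} = -\arccos\!\big((E_\Psi-\alpha)/\mathcal N\big)$, which reproduces exactly Eq.~\eqref{app eq:duration_monic_linear}. I would also flag the degenerate case $V_\Psi=0$, in which $\ket\Psi$ is an eigenstate, $\ket\Psi$ and $H\ket\Psi$ are dependent, and the formula is to be read as the appropriate limit.
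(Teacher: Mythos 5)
Your proposal is correct and follows essentially the same route as the paper's proof: verify anti-Hermiticity of $[\Psi,H]$, compute $W_H\ket\Psi$ and $W_H^2\ket\Psi=-V_\Psi\ket\Psi$, resum the exponential series into sine and cosine coefficients $a(s),b(s)$, and fix $s_\Psi$ by matching $a=-\alpha/\mathcal N$, $b=1/\mathcal N$ with $\mathcal N^2=V_\Psi+(E_\Psi-\alpha)^2$. Your explicit treatment of the $\arccos$ branch choice (the paper instead verifies the ansatz by back-substitution, and isolates the sign argument in its more general Lemma on $xI+yH$) and your flagging of the degenerate case $V_\Psi=0$ are welcome refinements of the same argument rather than a different approach.
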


\begin{proof}[Proof of Lem.~\ref{app thm zero overhead QSP unitary single step}] 
Firstly, let us verify that $U_\Psi$ is indeed a unitary operator.
For a matrix of the form $e^{W}$ to be unitary, $W$ must be anti-Hermitian, i.e., $W=-W^{\dagger}$.
Since $ [\Psi,H] = -( [\Psi,H])^{\dagger}$, the operator in Eq.~\eqref{app DB bridge} is therefore unitary.

Next, we demonstrate that the unitary operator $e^{W_{H}}$ with $W_{H}=[\Psi,H]$ can be exactly represented by a linear polynomial when applied to the input state $\ket{\Psi}$.
By definition, the unitary operator can be expressed as 
\begin{align}e^{s W_H} = \sum_{k=0}^\infty\frac{s^k}{k!} W_H^k \ .
\end{align}
Now, we observe that 
\begin{align}    \label{app eq: first power of commutator}
W_H \ket{\Psi} =E_\Psi\ket{\Psi} - H\ket{\Psi},
    \end{align}
and $$W_H^2 \ket{\Psi} =E_\Psi W_H\ket{\Psi} - W_H H\ket{\Psi}= E_\Psi^2\ket{\Psi} - E_\Psi H\ket{\Psi} -\bra\Psi H^2\ket\Psi\ket\Psi +E_\Psi H\ket{\Psi} 
    = - V_{\Psi}\ket{\Psi}. $$
This indicates that any even power of the  commutator $W_H$ acting on the state $\ket\Psi$ gives
\begin{align}\label{app eq: even power of commutator}
W_H^{2k}\ket{\Psi} = (-V_{\Psi})^k\ket{\Psi}\ .
\end{align}
Similarly, we have $
W_H^{2k+1}\ket{\Psi} = (-V_{\Psi})^k W_H\ket{\Psi}$
for the cases of odd powers.
Thus, by separating the odd and even terms, we obtain a weighted sum of  $\ket{\Psi}$ and $W_{H}\ket{\Psi}$ with coefficients expressed by sine and cosine functions as
\begin{align}
    e^{sW_H}\ket{\Psi} 
    & = \cos\left(s\sqrt{V_{\Psi}}\right)\ket{\Psi} + \frac{\sin\left(s\sqrt{V_{\Psi}}\right)}{\sqrt{V_{\Psi}}}\,W_H\ket{\Psi} \ .
\end{align}
Using Eq.~\eqref{app eq: first power of commutator}, this simplifies to $e^{sW_H}\ket\Psi = \left(a(s)I + b(s)H \right)\ket{\Psi}$, where $a(s),b(s)$ are real-valued coefficients for any $s\in \mathbb R$:
\begin{align}
    a(s) &= \frac{ E_\Psi }{\sqrt{V_{\Psi}}}\sin\left(s \sqrt{V_{\Psi}}\right) + \cos\left(s \sqrt{V_{\Psi}}\right), \qquad
        b(s) = - \frac{1}{\sqrt{V_{\Psi}}} \sin\left(s \sqrt{V_{\Psi}}\right)\ .
    \end{align}

Finally, an explicit calculation reveals that the ansatz for the duration Eq.~\eqref{app eq:duration_monic_linear} solves the equations $a(s_\Psi)=-\alpha/\|p(H)\ket{\Psi}\|$ and $b(s_\Psi)=1/\|p(H)\ket{\Psi}\|$ where we utilize the equality
\begin{equation}
    \|p(H)\ket{\Psi}\|=\sqrt{V_{\Psi}+\left(E_{\Psi}-\alpha\right)^2}\ .
\end{equation}
    
The proof is concluded by noting that this means that 
\begin{align}
    e^{s_\Psi W_H}\ket\Psi = \frac{\left(H-\alpha I) \right)\ket{\Psi}}{ \|\left(H-\alpha I) \right)\ket{\Psi}\|} \ .
\end{align}  
\end{proof}
\subsection{Useful Preliminary Results}
\label{sec synthesis details}
In this section, we derive an exact formula for implementing an exponential of commutators, $e^{s[\Psi,H]}$, without any approximation or truncation error.

\subsubsection{Effective Idempotence of Exponentials of  $[\Omega,H]$  }
\label{app LCU QITE commutators}
 
We derive an equivalent expression of the unitaries $e^{s \left[\Omega,H\right]}$ found in Eq.~\eqref{app DB bridge}, involving pure states $\Psi$ and the problem Hamiltonian $H $. 
We start with the general Taylor series of the exponential of an operator
\begin{align}e^{s\left[\Psi,H\right]} = \sum_{k=0}^\infty\frac{s^k}{k!} \left(\left[\Psi,H\right]\right)^k \ ,
    \end{align}
    where all $k$-th powers of $s\left[\Psi,H\right]$ contribute to the unitary.
In general, one may approximate this infinite series expansion by truncating it to a degree-$K$ polynomial,
\begin{align}
               e^{s\left[\Psi,H\right]} \approx \sum_{k=0}^K\frac{s^k}{k!} \left(\left[\Psi,H\right]\right)^k \ .
    \end{align}
However, the error $O(s^{K+1})$ requires additional care and investment of resources to control.
Interestingly, we prove in Prop. \ref{proposition effective idempotence} that when $\Psi$ is a pure state, an \emph{exact} polynomial representation can be obtained with 
$K=2$, rather than an approximation.

\begin{propositionA}[Effective idempotence] \label{proposition effective idempotence}
Let $\Psi = \ket \Psi\bra\Psi$ be a pure density matrix associated to state vector $\ket \Psi$ with energy fluctuation $V_{\Psi}=\bra \Psi H^2\ket \Psi-\bra \Psi H\ket \Psi^2$.
Then for any duration $s\in \mathbb R$ we have
    \begin{align}
        e^{s\left[\Psi,H\right]} = I + A(s)\left[\Psi,H\right] + B(s)\left(\left[\Psi,H\right]\right)^2
    \end{align}
    where \vspace{-0.25cm}
    \begin{align}\label{eq:AB}
        A(s) = \frac{\sin\left(s \sqrt{V_{\Psi}}\right)}{\sqrt{V_{\Psi}}} , \qquad
        B(s)=\frac{1-\cos\left(s \sqrt{V_{\Psi}}\right)}{V_{\Psi}} \ .
    \end{align}
\end{propositionA}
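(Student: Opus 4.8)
The plan is to prove Proposition~\ref{proposition effective idempotence} by exploiting the key algebraic fact already established in the proof of Lemma~\ref{app thm zero overhead QSP unitary single step}, namely the \emph{operator} identity for the square of the commutator acting on $\ket\Psi$. The central observation is that $W:=[\Psi,H]$ is not idempotent as an operator, but it satisfies a cubic annihilation relation when restricted to the relevant subspace. First I would compute $W^2$ and $W^3$ directly as operators, not merely their action on $\ket\Psi$. Writing $\Psi=\ket\Psi\bra\Psi$ and expanding $[\Psi,H]=\Psi H - H\Psi$, a short computation gives $W^2=\Psi H\Psi H - \Psi H^2\Psi - H\Psi H\Psi \cdot(\ldots)$; the clean way to organize this is to note that $W^2$ and $W^3$ can be expressed through $\Psi$, $H\Psi$, $\Psi H$ and $\Psi H^2\Psi = E_\Psi^2 + V_\Psi$ times $\Psi$.

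The decisive step is to show that $W^3 = -V_\Psi\, W$ as a genuine operator identity on the whole space (not just on $\ket\Psi$). The hard part will be carrying out this cubic commutator algebra cleanly: one must verify that all the cross terms collapse so that $W^3$ is proportional to $W$ with the scalar $-V_\Psi$. I expect this to follow from repeatedly using the rank-one structure $\Psi H \Psi = E_\Psi \Psi$ and $\Psi H^2 \Psi = (V_\Psi+E_\Psi^2)\Psi$, which let every occurrence of $\Psi\,(\cdot)\,\Psi$ reduce to a scalar multiple of $\Psi$. Once $W^3=-V_\Psi W$ is in hand, the minimal polynomial of $W$ divides $x^3 + V_\Psi x = x(x^2+V_\Psi)$, so every higher power $W^k$ collapses onto the span of $\{I, W, W^2\}$, and the full exponential series terminates effectively at second order.

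With the relation $W^3=-V_\Psi W$, I would then resum the exponential series $e^{sW}=\sum_k \frac{s^k}{k!}W^k$ by separating it into the pieces proportional to $I$, to $W$, and to $W^2$. Using $W^{2j+1}=(-V_\Psi)^j W$ for $j\ge 0$ and $W^{2j}=(-V_\Psi)^{j-1}W^2$ for $j\ge 1$, the odd-power sum yields $\frac{\sin(s\sqrt{V_\Psi})}{\sqrt{V_\Psi}}\,W$, while the even-power sum (with the $k=0$ identity term handled separately) yields $\frac{1-\cos(s\sqrt{V_\Psi})}{V_\Psi}\,W^2$, together with the leading $I$. This matches exactly the claimed coefficients $A(s)$ and $B(s)$ in Eq.~\eqref{eq:AB}, completing the proof. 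The main subtlety to flag is the sign and normalization bookkeeping in the even-power resummation, and the fact that this is now an operator identity valid on the entire Hilbert space rather than only when applied to $\ket\Psi$, which is precisely what upgrades the earlier vector-level computation to the stronger statement claimed here.
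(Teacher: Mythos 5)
Your proposal is correct and takes essentially the same route as the paper's own proof: you establish the operator identity $\left(\left[\Psi,H\right]\right)^3 = -V_\Psi\left[\Psi,H\right]$ (which the paper calls ``effective idempotence'') via the purity and rank-one reductions $\Psi H \Psi = E_\Psi \Psi$ and $\Psi H^2 \Psi = (V_\Psi + E_\Psi^2)\Psi$, and then resum the exponential series by splitting odd and even powers to obtain $A(s) = \sin(s\sqrt{V_\Psi})/\sqrt{V_\Psi}$ and $B(s) = (1-\cos(s\sqrt{V_\Psi}))/V_\Psi$. The paper performs exactly these two steps (merely in the reverse order, stating the resummation first and verifying the cubic identity afterwards), so there is no substantive difference.
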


\begin{proof}
    We will make a technical calculation showing that
    third power of the commutator is, in fact, directly proportional to the first power of the commutator, with a scaling factor that depends on energy fluctuation:
\begin{align}
    \begin{split}
\left(\left[\Psi,H\right]\right)^3 &=  - V_{\Psi} \left[\Psi,H\right] \ .
    \end{split}
    \end{align}
We call this effective idempotence.
Indeed, in general an operator $W$ is indempotent if $W^2=1$ which implies that $e^{sW} = \cos(s)I + \sin(s) W$.
Here we have the form $A^3 = \alpha A$ with $\alpha \in \mathbb{R}$, similar to idempotence. Effective idempotence has analogous consequences for the solution to the exponential series.
It implies that the $(2k+1)$-th power and the $2k$-th power of the commutator can be written as 
    \begin{align}
        \left(\left[\Psi,H\right]\right)^{2k+1} &= \left(-V_{\Psi} \right)^{k} \left[\Psi,H\right]\ , \qquad
        \left(\left[\Psi,H\right]\right)^{2k} = \left(-V_{\Psi} \right)^{k-1} \left(\left[\Psi,H\right]\right)^2.
    \end{align}
Thus we find for the series of representation of the unitary
    \begin{align}
    \begin{split}
        e^{s\left[\Psi,H\right]} 
        &= I + \left( \sum_{k=0} (-1)^{k} \frac{s^{2k+1} \left(V_{\Psi}\left[H\right]\right)^k }{(2k+1)!}\right)\left[\Psi,H\right] + \left( \sum_{k=1} (-1)^{k-1} \frac{s^{2k} \left(V_{\Psi}\left[H\right]\right)^{k-1} }{(2k)!}\right)\left(\left[\Psi,H\right]\right)^2 \\
        &= I + A(s) \left[\Psi,H\right] + B(s)\left(\left[\Psi,H\right]\right)^2,
    \end{split}
    \end{align}
    where $A(s),B(s)$ defined in Eq.~\eqref{eq:AB}, and we have utilized the Taylor series for sine and cosine in the last equality. 
    We complete the proof by deriving the effective idempotence namely
        \begin{align}
        \left(\left[\Psi,H\right]\right)^3 &= \left( \Psi H - H\Psi \right)^3 =  \left( \Psi H\Psi H - \Psi H^2\Psi -  H\Psi H + H\Psi H\Psi  \right) \left( \Psi H - H\Psi \right)
    \end{align}
    where we have used the assumption that the quantum state is pure, i.e., $\Psi^2=\Psi$.
    Next, we switch from density matrix representation to state vector representation, i.e. we substitute back  $\Psi = \ket \Psi\bra\Psi$.
    Thus, it becomes
    \begin{align}
    \begin{split}
        \left(\left[\Psi,H\right]\right)^3
        &= \left(\braket{H}\left(\Psi H + H\Psi \right) -  H\Psi H - \braket{H^2}\Psi  \right) \left( \Psi H - H\Psi \right) \\
        &= \braket{H}\left(\Psi H \Psi H + H\Psi H\right) -  H\Psi H \Psi H - \braket{H^2}\Psi H \nonumber\\
        &\quad \; - \braket{H}\left(\Psi H^2 \Psi + H\Psi H\Psi \right) +  H\Psi H^2 \Psi + \braket{H^2}\Psi H\Psi \ ,
    \end{split}
    \end{align}
            where we introduce the notation $\braket{H}=\bra \Psi H\ket \Psi$ and $\braket{H^2}=\bra \Psi H^2\ket \Psi$ in the first line, and the second equality is a direct expansion.
            Finally, we repeat the same procedure and the result is given by
            \begin{align}
              \left(\left[\Psi,H\right]\right)^3          
        &= \braket{H}\left(\braket{H}\Psi H + H\Psi H\right) - \braket{H} H\Psi H - \braket{H^2}\Psi H \nonumber\\
        &\quad \;- \braket{H}\left(\braket{H^2}  \Psi + \braket{H} H\Psi \right) +  \braket{H^2} H\Psi  + \braket{H^2}\braket{H}\Psi \\
        &= \braket{H}^2 \Psi H -  \braket{H^2} \Psi H - \braket{H}^2 H\Psi +  \braket{H^2} H\Psi \\
        &= - \left(\braket{H^2} - \braket{H}^2\right) \left[\Psi,H\right] = - \left(V_{\Psi}\left[H\right] \right)\left[\Psi,H\right],
            \end{align}
 where we again use the pure state assumption in the first equality and the definition of $V_{\Psi}$ in the last equality.
\end{proof}

\subsubsection{Exponentials of $[\Psi, H]$ Can Express the Normalized Action of Any Real-Valued Linear Polynomial in $H$}\label{sec: exponential of commutator}

We next extend Lem.~\ref{thm zero overhead QSP unitary single step} to operators of the form $xI+yH$, where $x,y\in\mathbb{R}$ are not both zero.

\begin{lemmaA}
\label{prop any linear polynomial synthesis}
 
    Let $x,y\in \mathbb R$ and  $(x,y)\neq (0,0)$.
    Define the parameter
    \begin{align} \label{eq: app_duration_general}
        s_\Psi = -\frac{\mathrm{sgn}(y)}{\sqrt{V_{\Psi}}} \mathrm{arccos} \left(\frac{x+yE_{\Psi} }{\|(xI+y H)\ket{\Psi}\|} \right).
    \end{align} 
    Then, 
    \begin{align}
       \frac{(xI+y H)\ket{\Psi} }{\|(xI+y H)\ket{\Psi}\|} = (a(s_\Psi)I + b(s_\Psi)H)\ket \Psi \ ,
    \end{align}
  where $a(s_\Psi),b(s_\Psi)$ are real-valued coefficients given by
\begin{align}
    a(s_\Psi) &= \frac{ E_\Psi }{\sqrt{V_{\Psi}}}\sin\left(s_\Psi \sqrt{V_{\Psi}}\right) + \cos\left(s_\Psi \sqrt{V_{\Psi}}\right), \label{a(s) main}
\\
        b(s_\Psi) &= - \frac{1}{\sqrt{V_{\Psi}}} \sin\left(s_\Psi \sqrt{V_{\Psi}}\right)\ .\label{b(s) main}
    \end{align}
\end{lemmaA}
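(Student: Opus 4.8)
The plan is to reduce the claim to pure scalar algebra, because the statement to be proved is really a matching of two explicit coefficient functions: writing $N := \|(xI+yH)\ket\Psi\|$, it suffices to show that the chosen duration forces
\[
a(s_\Psi) = \frac{x}{N}, \qquad b(s_\Psi) = \frac{y}{N},
\]
since then the two combinations $(a(s_\Psi)I + b(s_\Psi)H)\ket\Psi$ and $N^{-1}(xI+yH)\ket\Psi$ are literally the same vector, which is the assertion. (No linear independence of $\ket\Psi$ and $H\ket\Psi$ is needed for this direction; I only assume $V_\Psi>0$, i.e. $\ket\Psi$ is not an eigenstate, so that $\sqrt{V_\Psi}$ in the denominators and the $\arccos$ are meaningful — the degenerate case $V_\Psi=0$ reproduces the eigenstate breakdown noted in the main text, where $(xI+yH)\ket\Psi$ is already proportional to $\ket\Psi$.) This mirrors the final step of the proof of Lem.~\ref{app thm zero overhead QSP unitary single step}, which treated the special case $x=-\alpha,\,y=1$; the present lemma is the natural two-parameter generalization and should follow by the same mechanism with careful sign handling.

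First I would record the normalization, expanding $N^2 = \bra\Psi(xI+yH)^2\ket\Psi = x^2 + 2xyE_\Psi + y^2(V_\Psi+E_\Psi^2)$ via $\bra\Psi H^2\ket\Psi = V_\Psi+E_\Psi^2$. The single genuinely useful algebraic fact is the cancellation
\[
N^2 - (x+yE_\Psi)^2 = y^2 V_\Psi,
\]
obtained by direct expansion. This simultaneously certifies that the $\arccos$ argument $(x+yE_\Psi)/N$ in Eq.~\eqref{eq: app_duration_general} has modulus at most one (so $s_\Psi$ is well-defined and real) and supplies the value of the associated sine.

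Next I would abbreviate $\phi := s_\Psi\sqrt{V_\Psi} = -\mathrm{sgn}(y)\,\arccos\!\big((x+yE_\Psi)/N\big)$ and evaluate its cosine and sine. Since cosine is even, $\cos\phi = (x+yE_\Psi)/N$; and using $\sin(\arccos u)=\sqrt{1-u^2}$ together with the cancellation above gives $\sin\!\big(\arccos((x+yE_\Psi)/N)\big) = |y|\sqrt{V_\Psi}/N$, whence $\sin\phi = -\mathrm{sgn}(y)\,|y|\sqrt{V_\Psi}/N = -y\sqrt{V_\Psi}/N$ (using $\mathrm{sgn}(y)|y|=y$). Substituting into Eqs.~\eqref{a(s) main}--\eqref{b(s) main} then yields $b(s_\Psi) = -\sin\phi/\sqrt{V_\Psi} = y/N$ and
\[
a(s_\Psi) = \frac{E_\Psi}{\sqrt{V_\Psi}}\sin\phi + \cos\phi = -\frac{yE_\Psi}{N} + \frac{x+yE_\Psi}{N} = \frac{x}{N},
\]
completing the argument.

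The step requiring the most care is the bookkeeping of signs through the $\mathrm{sgn}(y)$ prefactor and the branch of $\arccos$. The prefactor is inserted precisely so that $\sin\phi$ acquires the sign of $y$, allowing one formula to cover both signs of $y$ and, for $x=-\alpha,\,y=1$, to collapse back to Eq.~\eqref{app eq:duration_monic_linear}. I would also flag that taking $\sin(\arccos u)=+\sqrt{1-u^2}$ is the correct branch, because $\arccos$ maps into $[0,\pi]$ where sine is nonnegative; any sign ambiguity is carried entirely and intentionally by $\mathrm{sgn}(y)$.
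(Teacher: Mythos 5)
Your proposal is correct and takes essentially the same route as the paper: the paper first derives $s_\Psi$ by solving the coefficient-matching equations $a(s)=x/N$, $b(s)=y/N$ (combining them as $a+E_\Psi b$ to isolate the cosine and fixing the sign from the sine equation), and then verifies consistency using exactly your two ingredients, the cancellation $N^2-(x+yE_\Psi)^2=y^2V_\Psi$ and $\sin(\arccos u)=\sqrt{1-u^2}$ with the sign carried by $\mathrm{sgn}(y)$, so your write-up is just the (logically sufficient) verification half of the paper's argument. One shared caveat, not a defect of your proof relative to the paper's: both arguments tacitly need $V_\Psi>0$ and $y\neq 0$, since for $y=0$, $x<0$ the convention $\mathrm{sgn}(0)=0$ gives $s_\Psi=0$ and the claimed identity fails by a sign (the cosine-evenness step no longer applies), an edge case the paper's sign bookkeeping also does not treat.
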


\begin{proof}
    We here consider to match the weights of $I$ and $H$ between the polynomial operation and exponentials of $[\Psi, H]$.
    Specifically, we solve the following two equations;
    \begin{align}
    \frac{x}{\|(xI+y H)\ket{\Psi}\|} &= \frac{ E_\Psi }{\sqrt{V_{\Psi}}}\sin\left(s \sqrt{V_{\Psi}}\right) + \cos\left(s \sqrt{V_{\Psi}}\right), \label{a(s)_1}
\\
        \frac{y}{\|(xI+y H)\ket{\Psi}\|} &= - \frac{1}{\sqrt{V_{\Psi}}} \sin\left(s \sqrt{V_{\Psi}}\right)\ .\label{b(s)_1}
    \end{align}
    By computing Eq.~\eqref{a(s)_1}$+E_{\Psi}\times$Eq.~\eqref{b(s)_1}, we get
    \begin{equation}
        \frac{x+yE_{\Psi}}{\|(xI+y H)\ket{\Psi}\|} =  \cos\left(s_{\Psi} \sqrt{V_{\Psi}}\right)\ .
    \end{equation}
Thus, computing the inverse of the cosine function yields the desired solution.
However, since both $s_\Psi$ and $-s_\Psi$ satisfy the equation, the sign must be determined explicitly.
Eq.~\eqref{b(s)_1} indicates that the sign of $\sin(s_{\Psi}\sqrt{V_{\Psi}})$ is given by $-\mathrm{sgn}(y)$.
This leads to the expression for the duration $s_{\Psi}$ shown in Eq.~\eqref{eq: app_duration_general}.

We lastly verify that this solution is consistent.
By substituting Eq.~\eqref{eq: app_duration_general} into Eq.~\eqref{b(s) main}, we find
\begin{align}
     b(s_\Psi) &=  \frac{\mathrm{sgn}(y)}{\sqrt{V_{\Psi}}} \sin\left( \mathrm{arccos} \left(\frac{x+yE_{\Psi} }{\|(xI+y H)\ket{\Psi}\|} \right)\right) \ .
\end{align}
Applying $\sin(x) = \sqrt{1-\cos^2(x)}$ further reveals that
\begin{align}
     b(s_\Psi) &=  \frac{\mathrm{sgn(y)}}{\sqrt{V_{\Psi}}} \sqrt{1 - \frac{(x+yE_{\Psi})^2 }{\|(xI+y H)\ket{\Psi}\|^2} }\ .
\end{align}
With the identity $\|(xI+y H)\ket{\Psi}\|^2 =  (x+yE_\Psi)^2+ y^2 V_\Psi$, this simplifies to
\begin{align}
     b(s_\Psi) &= \frac{\mathrm{sgn}(y)}{\sqrt{V_{\Psi}}}  \frac{|y| \sqrt{V_\Psi}  }{\|(xI+y H)\ket{\Psi}\|}\\
    &= \frac{y}{\|(xI+y H)\ket{\Psi}\|} \ .
\end{align}
Similarly, we use the relation  $a(s_\Psi) = -E_\Psi b(s_\Psi) +\cos\left(s_\Psi \sqrt{V_{\Psi}}\right)$, which gives
\begin{align}
   a(s_\Psi) &=  \frac{-yE_\Psi}{\|(xI+y H)\ket{\Psi}\|} +  \frac{x+yE_\Psi}{\|(xI+y H)\ket{\Psi}\|}\\
   &=    \frac{x}{\|(xI+y H)\ket{\Psi}\|} \ .
\end{align}

\end{proof}

\subsection{Proof of Thm.~\ref{thm complex QSP}}
\label{sec:proof_theorem}

We again restate Thm.~\ref{thm complex QSP} in the main text.
\begin{theoremA}[Unitary synthesis for QSP without post-selection]
\label{thm complex QSP_appendix}
Suppose an input state $\ket{\Psi_0}$ and any polynomial $p(H)$ of degree $K$ for a given Hermitian matrix $H$ in the form of Eq.~\eqref{eq:pH_}.
Given energy mean $E_k = \bra{\Psi_{k}} H\ket{\Psi_{k}}$ and variance $V_{k}=\bra{\Psi_{k}} H^2\ket{\Psi_{k}} -E_k^2$, the unitary synthesis for $p(H)$ can be achieved by 
\begin{equation} \label{eq:qsp_wo_ps_app}
    \frac{p(H)\ket{\Psi_0}}{\|p(H)\ket{\Psi_0}\|}=\prod_{k=0}^{K-1} e^{i \theta_k \Psi_{k}}  e^{s_{k}[\Psi_{k},H]}\ket{\Psi_0},
\end{equation}
with \begin{equation} \label{eq: duration_angle_def_app}
    s_k = \frac{-1}{\sqrt{V_k}}\arccos\left(\frac{|E_{k}-z_{k}|}{\sqrt{V_{k}+|E_{k}-z_{k}|^2}}\right), \qquad \text{and}\qquad     \theta_k = \arg\left(\frac{E_k-z_k}{|E_k-z_k|}\right)\ .
\end{equation}
Here, we recursively define the state $\ket{\Psi_{k}}$ as $\ket{\Psi_{k+1}}=e^{i \theta_k \Psi_{k}}  e^{s_{k}[\Psi_{k},H]}\ket{\Psi_{k}}$.
\end{theoremA}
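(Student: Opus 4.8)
The plan is to reduce the degree-$K$ statement to a single complex linear factor and then close the argument by induction on $K$. The key identity to establish first is that, for an arbitrary $z\in\mathbb C$ and a state $\ket\Psi$ with $V_\Psi>0$, the reflection gate composed with the commutator exponential implements exactly one normalized complex factor,
\[
e^{i\theta\Psi}e^{s_\Psi[\Psi,H]}\ket\Psi=\frac{(H-zI)\ket\Psi}{\|(H-zI)\ket\Psi\|},
\]
with $\theta=\arg(E_\Psi-z)$ and $s_\Psi$ as in Eq.~\eqref{eq: duration_angle_def_app}. Everything else is bookkeeping on top of this.

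First I would expand the commutator exponential using Lem.~\ref{app thm zero overhead QSP unitary single step} (equivalently Prop.~\ref{proposition effective idempotence}), writing $e^{s_\Psi[\Psi,H]}\ket\Psi=(a(s_\Psi)I+b(s_\Psi)H)\ket\Psi$ with the real coefficients of Eqs.~\eqref{a(s) main}--\eqref{b(s) main}. Next I would apply the reflection gate in the projector form $e^{i\theta\Psi}=I+(e^{i\theta}-1)\ket\Psi\bra\Psi$ and split $H\ket\Psi=E_\Psi\ket\Psi+(H-E_\Psi I)\ket\Psi$ into its component along $\ket\Psi$ and its orthogonal complement, noting that the reflection acts as the identity on the orthogonal part. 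Collecting terms yields a vector $c\ket\Psi+b(s_\Psi)H\ket\Psi$ with $c=e^{i\theta}\bigl(a(s_\Psi)+b(s_\Psi)E_\Psi\bigr)-b(s_\Psi)E_\Psi$; the crucial simplification is that the $E_\Psi$ contributions inside the parenthesis cancel, leaving $a(s_\Psi)+b(s_\Psi)E_\Psi=\cos\!\bigl(s_\Psi\sqrt{V_\Psi}\bigr)$.

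Because $V_\Psi>0$ guarantees that $\ket\Psi$ and $H\ket\Psi$ are linearly independent, I can match coefficients of $\ket\Psi$ and $H\ket\Psi$ against the target $\lambda(H-zI)\ket\Psi$. Matching the $H\ket\Psi$ term fixes $\lambda=b(s_\Psi)$, and matching the $\ket\Psi$ term collapses everything to the single complex equation
\[
e^{i\theta}\cos\!\bigl(s_\Psi\sqrt{V_\Psi}\bigr)=b(s_\Psi)\,(E_\Psi-z).
\]
I would then separate this into a phase condition and a magnitude condition. The sign of $s_\Psi$ chosen in Eq.~\eqref{eq: duration_angle_def_app} makes both $b(s_\Psi)>0$ and $\cos(s_\Psi\sqrt{V_\Psi})>0$, so the phase condition forces $\theta=\arg(E_\Psi-z)$, while the magnitude condition reads $\cos(s_\Psi\sqrt{V_\Psi})=b(s_\Psi)\,|E_\Psi-z|$, verified directly by substituting $\cos(s_\Psi\sqrt{V_\Psi})=|E_\Psi-z|/\sqrt{V_\Psi+|E_\Psi-z|^2}$ and the matching $\sin$. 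A short computation then gives $\lambda=b(s_\Psi)=1/\sqrt{V_\Psi+|E_\Psi-z|^2}=1/\|(H-zI)\ket\Psi\|$, so the output is correctly normalized.

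Finally I would run the induction on the degree: applying the single-factor identity at step $k$ with state $\ket{\Psi_k}$ and root $z_k$ gives $\ket{\Psi_{k+1}}=(H-z_kI)\ket{\Psi_k}/\|(H-z_kI)\ket{\Psi_k}\|$, and telescoping the norms together with the fact that the factors $H-z_kI$ mutually commute (being polynomials in $H$) yields $\ket{\Psi_K}=\prod_{k=0}^{K-1}(H-z_kI)\ket{\Psi_0}/\|\prod_{k=0}^{K-1}(H-z_kI)\ket{\Psi_0}\|$, which equals $p(H)\ket{\Psi_0}/\|p(H)\ket{\Psi_0}\|$ up to the global phase $a_K/|a_K|$ that is physically irrelevant for the normalized action. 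I expect the main obstacle to be the single-factor step, specifically the clean separation of the complex matching equation into its phase and magnitude parts while correctly tracking the sign of $s_\Psi$; the degenerate case $V_k=0$ (where $\ket{\Psi_k}$ is an eigenstate, the coefficient matching collapses, and $s_k$ diverges) must be excluded as a standing assumption, consistent with the breakdown noted after Eq.~\eqref{eq:skbound}.
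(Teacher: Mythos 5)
Your proposal is correct, and it follows the same overall architecture as the paper (prove a single complex-factor identity, then iterate over the roots), but the single-factor step is organized genuinely differently. The paper never writes down your complex matching equation; instead it introduces a surrogate \emph{real} root $u_k = E_k - |E_k - z_k|$ (Eq.~\eqref{eq def w_k}), invokes Lem.~\ref{app thm zero overhead QSP unitary single step} as a black box to implement $(H-u_kI)\ket{\Psi_k}/\|(H-u_kI)\ket{\Psi_k}\|$, and then shows that the reflection $e^{i\theta_k\Psi_k}=I+(e^{i\theta_k}-1)\Psi_k$ rotates the $\ket{\Psi_k}$-component so the numerator becomes $(H-z_kI)\ket{\Psi_k}$, closing with the observation that $\|(H-u_kI)\ket{\Psi_k}\| = \|(H-z_kI)\ket{\Psi_k}\|$ because $E_k-u_k=|E_k-z_k|$. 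You instead expand $e^{s_\Psi[\Psi,H]}\ket{\Psi}=(aI+bH)\ket{\Psi}$, apply the reflection, and solve the coefficient-matching equation $e^{i\theta}\cos\bigl(s_\Psi\sqrt{V_\Psi}\bigr)=b(s_\Psi)(E_\Psi-z)$ by separating phase and magnitude; your magnitude condition is precisely the content of Lem.~\ref{app thm zero overhead QSP unitary single step} applied with $\alpha=u_k$, so the two computations are equivalent. What each buys: the paper's surrogate-root route is modular and requires almost no algebra beyond a norm check, whereas your route \emph{derives} $\theta_k$ rather than positing it, and makes explicit where $V_k>0$ (linear independence of $\ket{\Psi}$ and $H\ket{\Psi}$) and the sign conventions $b(s_\Psi)>0$, $\cos(s_\Psi\sqrt{V_\Psi})>0$ enter to pin the phase down uniquely. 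Your treatment of the tail (telescoping of norms, the physically irrelevant global phase $a_K/|a_K|$, exclusion of the degenerate case $V_k=0$) matches the paper and is, if anything, slightly more explicit, since the paper leaves these points implicit.
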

\begin{proof}
Let $z_k$ be the roots of $p(H)$ as in Eq.~\eqref{eq:pH_}.
We iterate over the roots and at each step $k$, we will find $\theta_k$ and $s_k$ such that 
the unitary $U_k =  e^{i \theta_k \Psi_{k}}e^{s_{k}[\Psi_{k},H]}$ will implement the state
\begin{align}
   \ket{\Psi_{k+1}} =\frac{(H-z_k I) \ket{\Psi_k}}{\|(H-z_k I) \ket{\Psi_k}\|}
\end{align}
as $\ket{\Psi_{k+1}} =U_k \ket{\Psi_k}$.
Let us comment that, if we apply the $k$-th filter fragment $F_k = H-z_kI$, the normalization is given by
\begin{align}
 \|(H-z_kI)\ket{\Psi}\|
&=V_\Psi +|E_\Psi-z_k|^2\  .
\label{geneal z+H omega normalization}
\end{align}

We cannot use Lem.~\ref{thm zero overhead QSP unitary single step} directly because in general $z_k\in\mathbb C$, while polynomials with only real roots such as Chebyshev polynomials can be realized by directly applying Lem.~\ref{thm zero overhead QSP unitary single step}.
Instead, in general cases, we associate to $z_k$ the real number
\begin{equation}
\label{eq def w_k}
    u_k = E_k - |E_k-z_k|
\end{equation}
which is real and using Lem.~\ref{thm zero overhead QSP unitary single step} we set $s_k$ such that
\begin{align}
e^{s_{k}[\Psi_{k},H]}\ket{\Psi_{k}} = \frac{(H-u_kI)\ket{\Psi_{k}}}{\|(H-u_kI)\ket{\Psi_{k}}\|}\ .
\end{align}

We define $\theta_k$ to be within $[0,2\pi)$ and satisfy $e^{i\theta_k} = \frac{E_k-z_k}{|E_k-z_k|}$. We next observe that using that $\Psi_k$ is pure we have the form $e^{i\theta_k \Psi_{k}}=I+(e^{i\theta_k}-1)\Psi_{k}$, we get the following expression
\begin{align}
    \ket{\Psi_{k+1}}  &= \frac{(I+(e^{i\theta_k}-1)\Psi_{k})(H-u_kI)\ket{\Psi_{k}}}{\|(H-u_kI)\ket{\Psi_{k}}\|} = \frac{(H+e^{i\theta_k}(E_k-u_k)I-E_kI)\ket{\Psi_{k}}}{\|(H-u_kI)\ket{\Psi_{k}}\|}   \ .
\end{align}
The definitions above were such that $e^{i\theta_k}(E_k-w_k)= E_k-z_k$
which leads to a cancellation and
\begin{align}
    \ket{\Psi_{k+1}} 
    &= \frac{(H-z_kI)\ket{\Psi_{k}}}{\|(H-u_kI)\ket{\Psi_{k}}\|} \ .   
\end{align}
Here the numerator involves $z_k$ as desired but the norm is an expression involving $w_k$.
We have
\begin{align}
\|(H-u_kI)\ket{\Psi_{k}}\| &= V_k +E_k^2 -2u_kE_k+u_k^2 = V_k +(E_k-u_k)^2\ ,
\end{align}
which means that, using Eq.~\eqref{eq def w_k}, we arrive at the form in Eq.~\eqref{geneal z+H omega normalization}
\begin{align}
    \|(H-u_kI)\ket{\Psi_{k}}\| 
&= V_k +|E_k-z_k|^2\ =\|(H-z_kI)\ket{\Psi_{k}}\|\  .
\end{align}
Thus the norms match and we conclude that the unitaries implement the desired action of $F_k=H-z_kI$.
\end{proof}

We conclude this section by discussing the range of $s_{k}$, which is relevant for analyzing implementation costs and circuit depth.
As shown in Eq.~\eqref{eq: duration_angle_def_app}, the duration $s_{k}$ is given by 
\begin{equation} 
    s_k = \frac{-1}{\sqrt{V_k}}\arccos\left(\frac{|E_{k}-z_{k}|}{\sqrt{V_{k}+|E_{k}-z_{k}|^2}}\right) .
\end{equation}
First, to see if $|s_{k}|$ is a decreasing function with respect to $V_{k}$ we differentiate the two components $g(V_{k})=\sqrt{V_{k}}$ and $ f(V_{k})=\arccos\left(\frac{|E_{k}-z_{k}|}{\sqrt{V_{k}+|E_{k}-z_{k}|^2}}\right) $;
\begin{align}
    g'(V_{k}) &= \frac{1}{2\sqrt{V_{k}}}\\
    f'(V_{k}) &= -\frac{1}{\sqrt{1-\left(\frac{|E_{k}-z_{k}|}{\sqrt{V_{k}+|E_{k}-z_{k}|^2}}\right)^2}} \cdot \left(\frac{|E_{k}-z_{k}|}{\sqrt{V_{k}+|E_{k}-z_{k}|^2}}\right)'=   \frac{|E_{k}-z_{k}|}{2\sqrt{V_{k}}(V_{k}+|E_{k}-z_{k}|^2)}
\end{align}
where we used $(\arccos(y))'=-1/\sqrt{1-y^2}$.
Thus, using quotient rule, we have
\begin{align}
(|s_{k}|)’ &= \frac{f’(V_{k})g(V_{k})-g’(V_{k})f(V_{k})}{g^2(V_{k})} =\frac 1{V_k} \left(\frac12\frac{|E_{k}-z_{k}|}{V_{k}+|E_{k}-z_{k}|^2}- \frac 12 
\frac{1}{\sqrt{V_k}}\arccos\left(\frac{|E_{k}-z_{k}|}{\sqrt{V_{k}+|E_{k}-z_{k}|^2}}\right)\right). 
\end{align}
Next we define $\displaystyle x = \frac{|E_{k}-z_{k}|}{\sqrt{V_{k}+|E_{k}-z_{k}|^2}}=\cos(\alpha)$ which implies that $\displaystyle \frac 1{\sqrt{V_k}} = \frac x {|E_{k}-z_{k}|} \frac1 {\sqrt{1-x^2}}$ and so we  find
\begin{align}
(|s_{k}|)’ 
&=\frac {x}{2V_k|E_{k}-z_{k}|} \left(x-   
\frac 1 {\sqrt{1-x^2}}\arccos\left(x\right)\right)\\
&=\frac {x}{2V_k|E_{k}-z_{k}|} \left(\cos(\alpha)-   
\frac \alpha {\sin(\alpha)}\right)\\
&=\frac {x}{2V_k|E_{k}-z_{k}|} \frac {\frac12\sin(2\alpha)-   
 \alpha }{\sin(\alpha)}\\
 &=\frac {x}{4V_k|E_{k}-z_{k}|} \frac {\sin(2\alpha)-   
 2\alpha }{\sin(\alpha)} \le 0\ ,
\end{align}
where we use the fact $\sin x \leq x$ in the last line.
Then, the maximum value of $s_{k}$ arises when $V_{k}=0$.
However, since $V_{k}$ appears in the denominator, we cannot simply compute the value of $s_{k}$ at $V_{k}=0$.
Thus, we apply the L'H\^opital's rule, we get
\begin{equation}
    \lim_{V_{k}\to0} |s_{k}| = \lim_{V_{k}\to 0}\frac{f'(V_{k})}{g'(V_{k})} = \frac{|E_{k}-z_{k}|}{V_{k}+|E_{k}-z_{k}|^2} \biggl|_{V_{k}=0} = \frac{1}{|E_{k}-z_{k}|}.
\end{equation}
Thus, the duration $|s_{k}|$ is upper-bounded by $1/|E_{k}-z_{k}|$.

\newpage
 \section{Notions of Stability for Unitary Synthesis of Exact Formula in Thm.~\ref{thm complex QSP} }
\label{app notions of stability}
In this section we explore the unitary synthesis of Thm.~\ref{thm complex QSP} in more detail.
in Sec.~\ref{app sec DBQSP} we prove that discretizations using group commutator approximation can converge to the desired QSP.

We then analyze the sensitivity of the exact formula in Thm.~\ref{thm complex QSP} to perturbations in the input parameters.
We begin by studying a question similar to an existing stability result for QSP using block-encodings.
Concretely, the output of QSP synthesis using qubitization will depend on any errors in the block-encoding of the input operator $H$.
This enjoys a certain degree of stability; namely, given block-encodings of $H$ and $H'$ for $\|H\|\le 1$, $\|H'\|\le 1$, their transformed block-encodings are also close, $\|p(H)-p(H')\| \le 4K\|H-H'\|$~\cite{gilyen2019quantum}.
In Sec.~\ref{app sec hamiltonian prert}, we derive a bound in this scenario.

We then focus on the impact of imperfect parameters $\theta$ and $s$ on the performance.
In Sec.~\ref{sec QSP perturbation}, we first analyze the impact of deviations in the parameters from their ideal values.
However, this analysis alone is insufficient for practical scenarios, as statistical errors arise when estimating energy and variance from a noisy state. 
To address limitation, we extend the results to the situation where the estimated energy and variance may still differ from their ideal values even in the limit of finite measurement shots.
Sec.~\ref{sec QSP Hamiltonian stability} explores this extension, beginning with the single-step case before generalizing to arbitrary steps. These results provide insight into the statistical estimation requirements necessary for achieving a converging QSP synthesis.

Hence, we further extend the result to the case where the statistical noise happens when the energy and variance is different from the ideal situation even if we have the infinite number of measurement shots.
To address this, Sec.~\ref{sec QSP Hamiltonian stability} starts with a single-step case, followed by the arbitrary steps.
This result sheds light on the demands of statistical estimation required to obtain a converging QSP synthesis.

%%%%%%%%%%%%%%%%% TABLE %%%%%%%%%%%%%%%%%  
\begin{table*}[h!]
\begin{tblr}{
    colspec = {|p{1.5cm}|p{3.5cm}|p{1.6cm}|p{10cm}|},
    %row{1,3,5,7} = {blue9},
}
\hline
 \textbf{Section} & \textbf{Main Focus} & \textbf{Proof} & \textbf{Final Results}  \\
\hline
Sec.~\ref{app sec DBQSP} &  Difference between the exact formula in Thm.~\ref{thm complex QSP} and DB-QSP & Prop.~\ref{prop: DB-QSP convergence}& \begin{align*}
    \|\ket{\Psi(\bm{\theta},\bm{s})} - \ket{\omega_{K}}\| \le \frac{4}{3} \sqrt{\frac{\zeta}{N}
    }(1+6\xi)^{K} \ .
\end{align*} \\

\hline
Sec.~\ref{app sec hamiltonian prert} & Stability with respect to the difference in Hamiltonians & Prop.~\ref{prop: perturb in H}& \begin{align*}
    \left\| \ket{\Psi_{\theta,s}(H)}-\ket{\Psi_{\theta,s}(\tilde H)}\right\| \le \frac 13(1+6 \zeta)^K\|H-\tilde H\|\ .
\end{align*}\\
\hline

Sec.~\ref{sec QSP perturbation} & Sensitivity to changes from the exact angles  $\theta$ and $s$ &Prop.~\ref{angle stability prop app} & \begin{align*}
    \|\ket{\Psi_{H}(\bm{\theta},\bm{s})} - \ket{\tilde\Psi_{H}(\tilde{\bm{\theta}},\tilde{\bm{s}})}\| \le\frac{\max(\delta_s,\delta_\theta)}{3\zeta}(1+6\zeta)^K \ .
\end{align*} \\

\hline
Sec.~\ref{sec QSP Hamiltonian stability} & Error in a single step caused by erroneous estimation of energy and variance & Prop.~\ref{app prop stat err propagation}& \begin{align*}
     \left\| e^{i\theta_\Psi \Psi}e^{s_\Psi[\Psi,H]}\ket{\Psi} - e^{i\overline \theta \Psi }e^{\overline s_\Psi[\Psi,H]}\ket{\Psi }\right\| 
    &\le 20\eta^4\max(\delta_{V'},\delta_{E'}) 
\end{align*}  \\
\hline
Sec.~\ref{sec QSP Hamiltonian stability} & Error in $K$ steps using the estimated QSP parametrization $(\overline\theta,\overline s)$& Prop.~\ref{eq app qsp estimatin stability}& \begin{align*}
    \|\ket{\Psi_{H}(\bm{\theta},\bm{s})} - \ket{\overline\Psi_{H}( {\bm{\overline\theta}}, {\bm{\overline s}})}\| \le(14+120\eta^4)^{K}\max(\delta_V,\delta_E) \ .
\end{align*} \\
\hline
\end{tblr}
\caption{\textbf{Summary of the results explored in this section.} Props.~\ref{prop: DB-QSP convergence} and~\ref{eq app qsp estimatin stability} are the key results, but the other derivations should be helpful in understanding their proof.
For notation, please refer to the corresponding sections. In addition, we introduce the shorthand $\delta_{E'}=|E_\Psi-E'|$ and $\delta_{V'}=|V_\Psi-V'|$ in this table.
}
\end{table*}
%%%%%%%%%%%%%%%%% TABLE %%%%%%%%%%%%%%%%%

\subsection{Convergence of DB-QSP}
\label{app sec DBQSP}
\setcounter{theorem}{2} 
\begin{propositionA}[DB-QSP convergence]\label{prop: DB-QSP convergence}
Suppose $H$ is a Hermitian matrix whose spectral radius does not exceed unity, i.e., $\|H\|\le1$. 
Let $\zeta = \max(\bm{\theta},\bm{s})$ be the maximum value across all elements in $\bm{\theta}=(\theta_0,\ldots,\theta_{K-1})$ and $\bm{s}=(|s_0|,\ldots,|s_{K-1}|)$.
For the analysis, we define the state constructed by DB-QSP with $\sn = \sqrt{|s_k|/N}$
\begin{align}
    \ket{\omega_{K}} =\prod_{k=0}^{K-1}   e^{ i \theta_k \omega_{k}}  \left(
e^{i\sn \omega_{k}}
e^{i\sn H}
e^{-i\sn \omega_{k}}
e^{-i\sn H}
\right)^N\ket{\omega_0}\ 
\end{align}
We also define the exact QSP state derived from Thm.~\ref{thm complex QSP} 
\begin{align} 
    \ket{\Psi (\bm{\theta},\bm{s})} = \prod_{k=0}^{K-1} e^{i \theta_k \Psi_{k}}  e^{s_{k}[\Psi_{k},H]}\ket{\Psi_0}\ .
\end{align}
Then we have 
\begin{align}
    \|\ket{\Psi(\bm{\theta},\bm{s})} - \ket{\omega_{K}}\| \le \frac{4}{3} \sqrt{\frac{\zeta}{N}
    }(1+6\xi)^{K} \ .
\end{align}
\end{propositionA}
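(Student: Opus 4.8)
The plan is to run a one-step hybrid (telescoping) argument on the two recursions and then solve the resulting affine recursion for the accumulated error $\epsilon_K := \|\ket{\Psi(\bm{\theta},\bm{s})}-\ket{\omega_K}\|$. Abbreviate the exact step map by $A_k(\Phi)=e^{i\theta_k\Phi}e^{s_k[\Phi,H]}$ and the DB-QSP step map by $B_k(\Phi)=e^{i\theta_k\Phi}\bigl(e^{i\sn\Phi}e^{i\sn H}e^{-i\sn\Phi}e^{-i\sn H}\bigr)^N$, so that \eqref{Thm 2 psik} and the definition of $\ket{\omega_K}$ read $\ket{\Psi_{k+1}}=A_k(\Psi_k)\ket{\Psi_k}$ and $\ket{\omega_{k+1}}=B_k(\omega_k)\ket{\omega_k}$, with the parameters $\theta_k,s_k$ held fixed (the separate issue of estimating them from finitely many shots is handled in Prop.~\ref{prop: para_stability}). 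Since $A_k(\Psi_k)$ is unitary, inserting $A_k(\Psi_k)\ket{\omega_k}$ and then $A_k(\omega_k)\ket{\omega_k}$ gives
\begin{equation}
\epsilon_{k+1}\le\epsilon_k+\underbrace{\|(A_k(\Psi_k)-A_k(\omega_k))\ket{\omega_k}\|}_{\text{(a)}}+\underbrace{\|(A_k(\omega_k)-B_k(\omega_k))\ket{\omega_k}\|}_{\text{(b)}}.
\end{equation}
The crucial feature is that the step maps are \emph{state dependent}: an error in the running reference state feeds forward into the generator of the next unitary. Term (a) captures exactly this feedback and is what produces the exponential-in-$K$ prefactor.

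I expect term (a) to be the main obstacle, since it requires a Lipschitz estimate of $A_k$ in its reference state. The plan is to bound the reflection factor by $\|e^{i\theta_k\Psi_k}-e^{i\theta_k\omega_k}\|\le\theta_k\|\Psi_k-\omega_k\|$ using $\|e^{iX}-e^{iY}\|\le\|X-Y\|$ for Hermitian generators, and the commutator factor by $\|e^{s_k[\Psi_k,H]}-e^{s_k[\omega_k,H]}\|\le|s_k|\,\|[\Psi_k-\omega_k,H]\|\le 2|s_k|\,\|\Psi_k-\omega_k\|$ using the analogous bound for anti-Hermitian generators together with $\|H\|\le1$. Summing the two contributions, bounding $\theta_k,|s_k|\le\zeta$, and passing from the density-matrix distance to the state-vector distance via the standard inequality $\|\Psi_k-\omega_k\|\le 2\|\ket{\Psi_k}-\ket{\omega_k}\|$ then yields $\text{(a)}\le 6\zeta\,\epsilon_k$. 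The delicate points are handling exponentials of two distinct generators without losing constants, controlling $[\,\cdot\,,H]$ under $\|H\|\le1$, and the pure-state to density-matrix conversion.

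Term (b) is comparatively routine: the leading reflection $e^{i\theta_k\omega_k}$ is common to $A_k(\omega_k)$ and $B_k(\omega_k)$ and cancels in operator norm, reducing (b) to the single-step group-commutator discretization error for the same reference state $\omega_k$. Invoking the explicit form of \eqref{infinite GC} with $\sn=\sqrt{|s_k|/N}$ bounds this by $8|s_k|^{3/2}/\sqrt{N}\le 8\zeta^{3/2}/\sqrt{N}$. Combining (a) and (b) gives the affine recursion $\epsilon_{k+1}\le(1+6\zeta)\epsilon_k+8\zeta^{3/2}/\sqrt{N}$ with $\epsilon_0=0$, which I would solve by the geometric sum $\sum_{k=0}^{K-1}(1+6\zeta)^k=\frac{(1+6\zeta)^K-1}{6\zeta}$ to obtain $\epsilon_K\le\frac{8\zeta^{3/2}}{\sqrt{N}}\cdot\frac{(1+6\zeta)^K}{6\zeta}=\frac{4}{3}\sqrt{\tfrac{\zeta}{N}}(1+6\zeta)^K$, the claimed bound (reading the $\xi$ in the statement as $\zeta$).
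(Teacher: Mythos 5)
Your proposal is correct and follows essentially the same route as the paper's proof: a telescoping decomposition of one recursion step into the state-propagation error, the state-dependence (Lipschitz) feedback of the reflection and commutator exponentials, and the group-commutator discretization error, yielding the identical affine recursion $\epsilon_{k+1}\le(1+6\zeta)\epsilon_k+8\zeta^{3/2}/\sqrt{N}$ and the same geometric-sum conclusion. The only difference is cosmetic bookkeeping — you insert the state change first via unitarity of $A_k(\Psi_k)$ and then split the operator difference, whereas the paper inserts four hybrid terms changing the generators before the state — and the constants ($2|\theta_k|$, $4|s_k|$, $8\zeta^{3/2}/\sqrt N$) match the paper's exactly.
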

\begin{proof}
 
Let us define the intermediate QSP states as
     \begin{align}
         \ket{\Psi_k}= \prod_{k'=0}^{k-1} e^{i \theta_{k'} \Psi_{k'}}  e^{s_{k'}[\Psi_{k'},H]}\ket{\Psi_0}
    \end{align}   
and the intermediate DB-QSP states as 
\begin{align}
 \ket{\omega_{k+1}}=e^{ i \theta_k \omega_{k}}  \left(
e^{i\sn \omega_{k}}
e^{i\sn H}
e^{-i\sn \omega_{k}}
e^{-i\sn H}
\right)^N\ket{\omega_{k}}\ .
\end{align}
First, we decompose the difference between the updated QSP states and the DB-QSP states as follows:
\begin{align}
\|\ket{\Psi_{k+1}}-\ket{\omega_{k+1}}\| 
        =& \|e^{i\theta_{k}\Psi_k}e^{s_{k}[\Psi_{k},H]}\ket{\Psi_{k}}- e^{ i \theta_k \omega_{k}}  \left(
e^{i\tilde s_k \omega_{k}}
e^{i\tilde s_k H}
e^{-i\tilde s_k \omega_{k}}
e^{-i\tilde s_k H}
\right)^N\ket{\omega_k} \|\\
=& \Big\| \left( e^{i\theta_k \Psi_k}\, e^{s_k [\Psi_k,H]} \ket{\Psi_k} - e^{i\theta_k \omega_k}\, e^{s_k [\Psi_k,H]} \ket{\Psi_k} \right)\nonumber\\
&+\left( e^{i\theta_k \omega_k}\, e^{s_k [\Psi_k,H]} \ket{\Psi_k} - e^{i\theta_k \omega_k}\, e^{s_k [\omega_k,H]} \ket{\Psi_k} \right) \nonumber\\
& 
+\left( e^{i\theta_k \omega_k}\, e^{s_k [\omega_k,H]} \ket{\Psi_k} - e^{i\theta_k \omega_k}\, e^{s_k [\omega_k,H]} \ket{\omega_k} \right) \nonumber\\
&+\left( e^{i\theta_k \omega_k}\, e^{s_k [\omega_k,H]} \ket{\omega_k} - e^{i\theta_k \omega_k}\, \left(e^{i\sn \omega_k}\, e^{i\sn H}\, e^{-i\sn \omega_k}\, e^{-i\sn H}\right)^N \ket{\omega_k} \right)\Big\| \\
\leq& \|  \left( e^{i\theta_k \Psi_k} -  e^{i\theta_k \omega_k} \right)e^{s_k [\Psi_k,H]} \ket{\Psi_k} \| \nonumber\\
&+\| e^{i\theta_k \omega_k}\, \left(e^{s_k [\Psi_k,H]} -   e^{s_k [\omega_k,H]}\right) \ket{\Psi_k} \| 
 \nonumber\\
& + \|e^{i\theta_k \omega_k}\, e^{s_k [\omega_k,H]}\left(  \ket{\Psi_k} -  \ket{\omega_k} \right)\| \nonumber\\
&+\|e^{i\theta_k \omega_k}\, \left(e^{s_k [\omega_k,H]}  -  \left(e^{i\sn \omega_k}\, e^{i\sn H}\, e^{-i\sn \omega_k}\, e^{-i\sn H}\right)^N \right)\ket{\omega_k} \| \label{eq: DB-QSP convergence intermediate}
\end{align}
where we use triangle inequality to obtain the last inequality.
Next, we evaluate these terms separately.
\begin{enumerate}
    \item Using the definition of the operator norm
    \begin{align}
        \|  \left( e^{i\theta_k \Psi_k} -  e^{i\theta_k \omega_k} \right)e^{s_k [\Psi_k,H]} \ket{\Psi_k} \|&\leq  \| e^{i\theta_k \omega_k}- e^{i\theta_k \Psi_k}\|  
        \leq |\theta_k|\, \|\Psi_k- \omega_k\| 
    \end{align}
   where we utilize the inequality $\|e^{A}-e^{B}\|\le\|A-B\|$ for unitary operator and the fact $\|AB\| \leq \|A\|\|B\|$.
    Moreover, note that $ \|\Psi_k   -  \omega_k\| \le 2  \left\| \ket{\Psi_{k}}-  \ket{\omega_k} \right\|$ and thus we obtain
    \begin{align}
       \|  \left( e^{i\theta_k \Psi_k} -  e^{i\theta_k \omega_k} \right)e^{s_k [\Psi_k,H]} \ket{\Psi_k} \|\leq 2 |\theta_k| \,\| \ket{\Psi_k}- \ket{\omega_k}\| 
    \end{align}

    \item For the second term, we have
    \begin{align}
        \|e^{i\theta_k \omega_k}\left(  e^{s_k [\Psi_k,H]}   - e^{i\theta_k \omega_k}\right)\, e^{s_k [\omega_k,H]}  \ket{\Psi_k}\| &\leq \|e^{i\theta_k \omega_k}\| \cdot \| \left(e^{s_k [\Psi_k,H]} - e^{s_k [\omega_k,H]}\right)\ket{\Psi_k} \| \\
        &\leq \| e^{s_k [\Psi_k,H]} - e^{s_k [\omega_k,H]} \| \\
                &\leq |s_k| \, \| \left[\Psi_k-\omega_k,H\right]\|
    \end{align}
    where we again employ the unitary invariance and normalised state assumption in the second line; and the property  $\|e^{A}-e^{B}\|\le\|A-B\|$ in the last line.
    Next, using the bound $\|[A,B]\|\le 2\|A\|\|B\|$, it can be further simplified to
    \begin{align}
         \|e^{i\theta_k \omega_k}\left(  e^{s_k [\Psi_k,H]}   - e^{i\theta_k \omega_k}\right)\, e^{s_k [\omega_k,H]}  \ket{\Psi_k}\| &\leq 2 |s_k| \,\| \Psi_k-\omega_k\| \cdot \|H\|
    \end{align}
    Similar to the first term, employing the bound $ \|\Psi_k   -  \omega_k\| \le 2  \left\| \ket{\Psi_{k}}-  \ket{\omega_k} \right\|$ and the assumption that $\|H\|\le1$, we get
     \begin{align}
         \|e^{i\theta_k \omega_k}\left(  e^{s_k [\Psi_k,H]}   - e^{i\theta_k \omega_k}\right)\, e^{s_k [\omega_k,H]}  \ket{\Psi_k}\| &\leq 4 |s_k| \,\| \ket{\Psi_{k}}-  \ket{\omega_k}\|
     \end{align}

    \item For the third term, since $e^{i\theta_k \omega_k}\, e^{s_k [\omega_k,H]} $ is unitary operator, the third term can be simplified to \begin{align}
         \|e^{i\theta_k \omega_k}\, e^{s_k [\omega_k,H]}\left(  \ket{\Psi_k} -  \ket{\omega_k} \right)\| &\leq \|\ket{\Psi_k}-\ket{\omega_k}\|
    \end{align}
    where we use the unitary invariance property of norm.
    \item Finally, for the fourth term, it becomes
    \begin{align}
        &\|e^{i\theta_k \omega_k}\, \left(e^{s_k [\omega_k,H]}  -  \left(e^{i\sn \omega_k}\, e^{i\sn H}\, e^{-i\sn \omega_k}\, e^{-i\sn H}\right)^N \right)\ket{\omega_k} \| \nonumber\\
        &\leq \|e^{  {s}_{k}[\omega_{k}, H]} -\left(
e^{i\tilde s_k \omega_{k}}
e^{i\tilde s_k H}
e^{-i\tilde s_k \omega_{k}}
e^{-i\tilde s_k H}
\right)^N \|
    \end{align}
   Using  upper bound in Lemma.~(9) from \cite{gluza2024doublebracket} by replacing $s_k \to \sn$, we have
\begin{align}\label{eq:GCI_compilation}
  \left\|  e^{i\sn \omega_k}e^{i\sn H}    
    e^{-i\sn \omega_k}
    e^{-i\sn H}- e^{\sn[\omega_k,H]} \right\| \leq   |s_k|^{3/2} N^{-3/2} \bigg( \|[H, [H, \omega_k]]\| + \|[\omega_k, [\omega_k, H]]  \|\bigg) \ ,
\end{align}
By the definition of $\sn$ and  telescoping, we have 
\begin{align}
     &\|e^{  {s}_{k}[\omega_{k}, H]} -\left(
e^{i\sn \omega_{k}}
e^{i\sn H}
e^{-i\sn \omega_{k}}
e^{-i\sn H}
\right)^N \| \nonumber\\
&\leq |s_k|^{3/2}/ \sqrt N \times \left( \|[H, [H, \omega_k]]\| + \|[\omega_k, [\omega_k, H]]  \| \right) \\
&\leq 2|s_k|^{3/2}/ \sqrt N \times \left( \|[H, \omega_k]\|\times\|H\| + \|[\omega_k, H] \|\times\|\omega_k\|  \right)\\
&\leq 4|s_k|^{3/2}/ \sqrt N \times \left( \| \omega_k\|\times\|H\|^2 + \|H \|\times\|\omega_k\| ^2 \right)
\end{align}
where we recall the bound $\|[A,B]\|\le 2\|A\|\|B\|$ in the second and third line. Since we assume that $\|H\| \leq 1$ and $\|\omega_k=1\|$, we achieve
\begin{align}
    \|e^{  {s}_{k}[\omega_{k}, H]} -\left(
e^{i\sn \omega_{k}}
e^{i\sn H}
e^{-i\sn \omega_{k}}
e^{-i\sn H}
\right)^N \| &\leq 8|s_k|^{3/2}/ \sqrt N\ .
\end{align}
\end{enumerate}
Collecting all terms, Eq.~\eqref{eq: DB-QSP convergence intermediate} becomes
\begin{align}
       \| \ket{\Psi_{k+1}}-  \ket{\omega_{k+1}}\| &\leq (1+2|\theta_{k}|+4|s_{k}|) \| \ket{\Psi_{k}}-  \ket{\omega_{k}}\| + 8|s_k|^{3/2}/\sqrt{N} \\
    & \le (1+6\zeta) \| \ket{\Psi_{k}}-  \ket{\omega_{k}}\| + 8\zeta^{3/2}/\sqrt{N}\ .
\end{align}
where we use the definition $\zeta = \max_{k=1,\ldots,K}(\theta_k,s_k)$ to obtain last line.
Iterating this recursive bound, we get
\begin{align}
    \| \ket{\Psi_{k+1}}-  \ket{\omega_{k+1}}\| \leq \frac{8\zeta^{3/2}}{\sqrt{N}} \sum_{i=0}^k (1+6\zeta) ^i 
    &=\frac{8\zeta^{3/2}}{\sqrt{N}} \times\frac{(1+6\xi)^{k+1}-1}{(1+6\xi)-1} \\
    &\leq \frac{4}{3}\sqrt{\frac{\xi}{N}}(1+6\xi)^{k+1}\ .
\end{align}
Setting $K=k+1$, the proposition statement is justified.

\end{proof}

\subsection{Perturbation of the Hamiltonian}
\label{app sec hamiltonian prert}
Using Thm.~\ref{thm complex QSP} we define $\ket{\Psi_{\bm{\theta},\bm{s}}(H)} = \prod_{k=1}^{K} e^{i \theta \Psi_{k}}  e^{s_{k}[\Psi_{k},H]}\ket{\Psi_0}$. This definition indicates that we will hold the angles $\theta_k $ and $s_k$ fixed but consider what happens if the Hamiltonian is perturbed.

\begin{propositionA}[QSP task stability]\label{prop: perturb in H}
Suppose $H$ is a Hermitian matrix whose spectral radius does not exceed unity, i.e., $\|H\|\le1$. 
Let $\zeta = \max(\bm{\theta},\bm{s})$ be the maximum value across all elements in $\bm{\theta}=(\theta_0,\ldots,\theta_{K-1})$ and $\bm{s}=(|s_0|,\ldots,|s_{K-1}|)$.
Then, we have
\begin{align}
    \left\| \ket{\Psi_{\theta,s}(H)}-\ket{\Psi_{\theta,s}(\tilde H)}\right\| \le \frac 13(1+6 \zeta)^K\|H-\tilde H\|\ .
\end{align}
\end{propositionA}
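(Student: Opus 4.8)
The plan is to mirror the telescoping recursion used in the proof of Prop.~\ref{prop: DB-QSP convergence}, now comparing two QSP trajectories driven by $H$ and $\tilde H$ with the \emph{same} fixed angles $\theta_k,s_k$. First I would introduce the truncated families $\ket{\Psi_k(H)}$ and $\ket{\Psi_k(\tilde H)}$, each defined recursively through the first $k$ factors of Eq.~\eqref{eq:qsp_wo_ps_app} with Hamiltonian $H$ and $\tilde H$ respectively, together with their density matrices $\Psi_k(H),\Psi_k(\tilde H)$. The key subtlety, and the reason this is not an immediate corollary, is that the intermediate states themselves depend on the Hamiltonian: perturbing $H\to\tilde H$ perturbs every $\Psi_k$ appearing in the product, so both a \emph{direct} dependence (through $H$ inside each commutator) and an \emph{indirect} one (through the drift of $\Psi_k(H)$ versus $\Psi_k(\tilde H)$) must be tracked simultaneously.

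The core step is to bound $\|\ket{\Psi_{k+1}(H)}-\ket{\Psi_{k+1}(\tilde H)}\|$ by inserting four hybrid terms between $e^{i\theta_k\Psi_k(H)}e^{s_k[\Psi_k(H),H]}\ket{\Psi_k(H)}$ and $e^{i\theta_k\Psi_k(\tilde H)}e^{s_k[\Psi_k(\tilde H),\tilde H]}\ket{\Psi_k(\tilde H)}$, swapping one ingredient at a time: (i) the reflection gate $\Psi_k(H)\to\Psi_k(\tilde H)$; (ii) the commutator state $\Psi_k(H)\to\Psi_k(\tilde H)$ with $H$ held fixed; (iii) the Hamiltonian $H\to\tilde H$ inside the commutator; and (iv) the trailing ket $\ket{\Psi_k(H)}\to\ket{\Psi_k(\tilde H)}$. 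Terms (i), (ii), (iv) are handled exactly as in Prop.~\ref{prop: DB-QSP convergence}, using $\|e^A-e^B\|\le\|A-B\|$ for anti-Hermitian exponents, $\|[A,B]\|\le2\|A\|\|B\|$, $\|H\|\le1$, unitary invariance, and the projector bound $\|\Psi_k(H)-\Psi_k(\tilde H)\|\le2\|\ket{\Psi_k(H)}-\ket{\Psi_k(\tilde H)}\|$; together they contribute $(1+2|\theta_k|+4|s_k|)\|\ket{\Psi_k(H)}-\ket{\Psi_k(\tilde H)}\|$. The genuinely new ingredient is term (iii), where the two states in the commutator coincide and only the Hamiltonian differs, giving
\[
\|e^{s_k[\Psi_k(\tilde H),H]}-e^{s_k[\Psi_k(\tilde H),\tilde H]}\|\le|s_k|\,\|[\Psi_k(\tilde H),H-\tilde H]\|\le 2|s_k|\,\|\Psi_k(\tilde H)\|\,\|H-\tilde H\|=2|s_k|\,\|H-\tilde H\|,
\]
since $\|\Psi_k(\tilde H)\|=1$. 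This is the ``forcing'' term that injects $\|H-\tilde H\|$ into the recursion, and it is state-independent precisely because of the unit operator norm of a pure-state projector.

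Collecting the four estimates and setting $\zeta=\max_k(\theta_k,s_k)$ yields the linear recurrence
\[
\|\ket{\Psi_{k+1}(H)}-\ket{\Psi_{k+1}(\tilde H)}\|\le(1+6\zeta)\,\|\ket{\Psi_k(H)}-\ket{\Psi_k(\tilde H)}\|+2\zeta\,\|H-\tilde H\|,
\]
with base case $\ket{\Psi_0(H)}=\ket{\Psi_0(\tilde H)}=\ket{\Psi_0}$, so the initial discrepancy is zero. Finally I would unroll this geometric recursion, using $\sum_{i=0}^{K-1}(1+6\zeta)^i=((1+6\zeta)^K-1)/(6\zeta)$, whereupon the prefactor $2\zeta$ cancels against $6\zeta$ to give $\tfrac13\big((1+6\zeta)^K-1\big)\|H-\tilde H\|\le\tfrac13(1+6\zeta)^K\|H-\tilde H\|$, which is the claimed bound. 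I expect the only real obstacle to be bookkeeping: isolating the four hybrid terms so that exactly one carries the $\|H-\tilde H\|$ factor while the other three feed back the accumulated state discrepancy $\|\ket{\Psi_k(H)}-\ket{\Psi_k(\tilde H)}\|$, and confirming that the $\Psi_k$-dependence in term (iii) drops out via $\|\Psi_k(\tilde H)\|=1$.
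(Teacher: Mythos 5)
Your proposal is correct and follows essentially the same route as the paper's proof: the same telescoping insertion of hybrid terms, the same elementary bounds ($\|e^A-e^B\|\le\|A-B\|$, $\|[A,B]\|\le 2\|A\|\|B\|$, $\|\Psi_k-\tilde\Psi_k\|\le 2\|\ket{\Psi_k}-\ket{\tilde\Psi_k}\|$, unitary invariance), the same recursion $(1+6\zeta)d_k+2|s_k|\|H-\tilde H\|$, and the same geometric-sum unrolling to $\tfrac13(1+6\zeta)^K\|H-\tilde H\|$. The only difference is cosmetic bookkeeping: you swap the state and the Hamiltonian inside the commutator exponential in two separate steps, whereas the paper expands the combined difference $[\Psi_k,H]-[\tilde\Psi_k,\tilde H]$ in one go, yielding identical contributions.
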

\begin{proof}
Let us define the intermediate QSP states
     \begin{align}
         \ket{\Psi_k}= \prod_{k'=0}^{k-1} e^{i \theta_{k'} \Psi_{k'}}  e^{s_{k'}[\Psi_{k'},H]}\ket{\Psi_0}
    \end{align}   
    and analogously $\ket{\tilde \Psi_k}$ are the intermediate states of QSP with $\tilde H$.
    Thus, the difference between $\ket{\Psi_{k+1}}$ and $\ket{\tilde \Psi_{k+1}}$ is given by
    \begin{align}
           \|\ket{\Psi_{k+1}}-\ket{\tilde{\Psi}_{k+1}}\| &= \|e^{i\theta_{k}\Psi_k}e^{s_{k}[\Psi_{k},H]}\ket{\Psi_{k}}- e^{i{\theta}_{k}\tilde{\Psi}_k}e^{{s}_{k}[\tilde{\Psi}_{k},\tilde H]}\ket{\tilde\Psi_{k}}\|\ .
    \end{align}
    Next, following the same procedure in Eq.~\eqref{eq: DB-QSP convergence intermediate} from Subsec.~\ref{app sec DBQSP}, we add and subtract the term $\{e^{i\theta_{k}\Psi_k}e^{s_{k}[\Psi_{k},H]}\ket{\tilde \Psi_{k}}, \;e^{i\theta_{k}\tilde \Psi_k}e^{s_{k}[\Psi_{k},H]}\ket{\tilde \Psi_{k}}\}$ to split them into multiple norm calculations via triangle inequality.

    Consequently, the result is
    \begin{align}
    \|\ket{\Psi_{k+1}}-\ket{\tilde{\Psi}_{k+1}}\|
 &\le \| \ket{\Psi_{k}}-  \ket{\tilde\Psi_{k}}\|+
 \|e^{i\theta_{k}\Psi_k} - e^{i{\theta}_{k}\tilde\Psi_{k}}\|+\|e^{s_{k}[\Psi_{k},H]}- e^{ {s}_{k}[\tilde{\Psi}_{k},\tilde H]}\|\\
 &\le \| \ket{\Psi_{k}}-  \ket{\tilde\Psi_{k}}\|+
 |\theta_{k}|\cdot
 \|\Psi_k   -  \tilde\Psi_{k} \|+|s_k|\cdot \| [\Psi_{k},H]-[\tilde{\Psi}_{k},\tilde H] \|\ , \label{eq: perturbation in H intermediate}
\end{align}
where we recall the unitary invariance property of norm in the first inequality and we utilize the formula $\|e^{A}-e^{B}\|\le\|A-B\|$ in the second inequality.
We then simplify these three terms separately.
\begin{enumerate}
    \item For the first term $ \| \ket{\Psi_{k}}-  \ket{\tilde\Psi_{k}}\|$, it remains unchanged.

     \item For the second term, it becomes
     \begin{align}
          |\theta_{k}|\cdot
 \|\Psi_k   -  \tilde\Psi_{k} \| \leq 2 |\theta_{k}|\cdot \|\ket{\Psi_k}   -  \ket{\tilde\Psi_{k}} \| \ ,
     \end{align}
     where we use the relation  $ \|\Psi_k   -  \tilde\Psi_{k} \|\le 2 \| \ket{\Psi_{k}}-  \ket{\tilde\Psi_{k}}\|$.

\item For the third term, we rewrite it as
\begin{align}
 |s_k| \cdot  \| [\Psi_k, H] - [\tilde{\Psi}_k, \tilde{H}] \|
&= |s_k| \cdot \|\Psi_k H - H \Psi_k - \left(\tilde{\Psi}_k \tilde{H} - \tilde{H}\tilde{\Psi}_k\right) \|\\
&= |s_k| \cdot\| \Psi_k H - \Psi_k \tilde{H} + \Psi_k \tilde{H} - \tilde{\Psi}_k \tilde{H} - H \Psi_k + H \tilde{\Psi}_k - H \tilde{\Psi}_k + \tilde{H}\tilde{\Psi}_k  \|\\
&= |s_k| \cdot \|\Psi_k\,(H - \tilde{H}) + \left(\Psi_k - \tilde{\Psi}_k\right)\tilde{H} - H\left(\Psi_k - \tilde{\Psi}_k\right) - (H-\tilde{H})\,\tilde{\Psi}_k  \| \ .
\end{align}
By triangle inequality and operator norm's definition, we obtain
\begin{align}
   |s_k| \cdot \| [\Psi_{k},H]-[\tilde{\Psi}_{k},\tilde H] \| \le 2|s_k| \cdot\|\Psi_k   -  \tilde\Psi_{k} \| \cdot \|H\| + 2 |s_k| \cdot\|H-\tilde H\| \ .
\end{align}
Similarly, using  $ \|\Psi_k   -  \tilde\Psi_{k} \|\le 2 \| \ket{\Psi_{k}}-  \ket{\tilde\Psi_{k}}\|$, it becomes
\begin{align}
     \| [\Psi_{k},H]-[\tilde{\Psi}_{k},\tilde H] \| \le 4|s_k| \cdot\|\ket{\Psi_k}   -  \ket{\tilde\Psi_{k}} \| + 2 |s_k| \cdot\|H-\tilde H\| \ .
\end{align}
\end{enumerate}

Collecting all the terms, Eq.~\eqref{eq: perturbation in H intermediate} reduces to
\begin{align}
    \|\ket{\Psi_{k+1}}-\ket{\tilde{\Psi}_{k+1}}\| &\le (1+ 2|\theta_k|+4 |s_k| )\| \ket{\Psi_{k}}-  \ket{\tilde\Psi_{k}}\| +2|s_k|\cdot \|H-\tilde H\| \\
    &\le (1+6\zeta)\| \ket{\Psi_{k}}-  \ket{\tilde\Psi_{k}}\| + 2|s_k|\cdot \|H-\tilde H\| \ , 
\end{align}
where we use the definition $\zeta = \max(\bm{\theta},\bm{s})$ in the last line.
Finally, iterating this recursive bound and it yields
\begin{align}
    \|\ket{\Psi_{k+1}}-\ket{\tilde{\Psi}_{k+1}}\| &\le 
     \frac{|s_k|\cdot \|H-\tilde H\|}{3\zeta}(1+6\zeta)^{k+1}
    \le \frac{1}{3} \|H-\tilde H\|(1+6\zeta)^{k+1} \ , 
\end{align}
where we again used the definition $\zeta = \max(\bm{\theta},\bm{s})$ , i.e. $\displaystyle \frac{|s_k|}{\zeta} \leq 1$. Setting $K=k+1$, the proposition's statement is justified.
\end{proof}
 
\subsection{Perturbation of Angles}
\label{sec QSP perturbation} 
In order to study sensitivity of the parametrization in  Thm.~\ref{thm complex QSP} we define
\begin{align}
    \ket{\Psi_{H}(\bm{\theta},\bm{s})} = \prod_{k=0}^{K-1} e^{i \theta \Psi_{k}}  e^{s_{k}[\Psi_{k},H]}\ket{\Psi_0}\ .
\end{align}

In practice, we first measure the energy and variance, then compute and then compute $s_{k}$ and $\theta_{k}$ to implement the operation.
From this perspective, the time duration $s_{k}$ and phase $\theta_{k}$ vary at each step; that is, the perturbations satisfy $|s_{k}-\tilde{s}_{k}|\le \delta_{s}$ and $|\theta_{k}- \tilde{\theta}_{k}|\le \delta_{\theta}$.
In other words, even if the unitary implementation is perfect, the determined values for time duration and phase can cause errors.

Under this setting, we establish an error bound for implementing a non-unitary polynomial of degree $K$.
For simplicity, we assume the errors in time duration and phase remain constant across all steps.
In what follows, we denote the ideal state and operations as $\ket{\Psi_{k+1}}$ and $e^{i\theta_{k}\Psi_k}e^{s_{k}[\Psi_{k},H]}$, whereas erroneous counterparts are given by $e^{i\tilde{\theta}_{k}\tilde{\Psi}_k}e^{\tilde{s}_{k}[\tilde{\Psi}_{k},H]}$.
Finally, we note that no group commutator approximation is performed in this analysis.

\begin{propositionA}[QSP parametrization stability]
\label{angle stability prop app}
Let $H$ be a Hermitian matrix such that $\|H\|\le1$, and assume that the estimated parameters $\tilde s_k$ and $\tilde\theta_k$ satisfy $|s_k-\tilde s_k|\le \delta_s$ and $|\theta_k -\tilde \theta_k|\le \delta_\theta$ with ideal parameters $s_{k}$ and $\theta_{k}$ for all $k$.
By setting $\zeta = \max(\bm{\theta},\bm{s})$, the perturbed state $\ket{\tilde\Psi_H(\tilde{\bm{\theta}},\tilde{\bm{s}})}$ in Eq.~\eqref{eq:erroneous_state} and the state $\ket{\Psi_H (\bm{\theta},\bm{s})}$ from Thm.~\ref{thm complex QSP} satisfies

\begin{align} \label{app eq:error_accumulation_by_noisy_estimation}
    \|\ket{\Psi_{H}(\bm{\theta},\bm{s})} - \ket{\tilde\Psi_{H}(\tilde{\bm{\theta}},\tilde{\bm{s}})}\| \le \frac{1}{3\zeta}(1+6 \zeta)^K \max(\delta_s,\delta_\theta)\ .
\end{align}
\end{propositionA}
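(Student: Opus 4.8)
The plan is to follow the same recursive telescoping strategy already used for Prop.~\ref{prop: DB-QSP convergence} and Prop.~\ref{prop: perturb in H}, now holding $H$ fixed while both the parameters $(\theta_k,s_k)$ and the recursively generated states are perturbed. First I would introduce the intermediate ideal states $\ket{\Psi_k}=\prod_{k'=0}^{k-1} e^{i\theta_{k'}\Psi_{k'}}e^{s_{k'}[\Psi_{k'},H]}\ket{\Psi_0}$ and their perturbed counterparts $\ket{\tilde\Psi_k}$ built from $e^{i\tilde\theta_{k'}\tilde\Psi_{k'}}e^{\tilde s_{k'}[\tilde\Psi_{k'},H]}$, with $\ket{\tilde\Psi_0}=\ket{\Psi_0}$. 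Writing $A_k=e^{i\theta_k\Psi_k}e^{s_k[\Psi_k,H]}$ and $\tilde A_k=e^{i\tilde\theta_k\tilde\Psi_k}e^{\tilde s_k[\tilde\Psi_k,H]}$, the goal is a one-step recursion of the shape $\|\ket{\Psi_{k+1}}-\ket{\tilde\Psi_{k+1}}\|\le(1+6\zeta)\|\ket{\Psi_k}-\ket{\tilde\Psi_k}\|+2\max(\delta_s,\delta_\theta)$, which then telescopes into the claimed bound.

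The opening move is the split
\begin{align}
\ket{\Psi_{k+1}}-\ket{\tilde\Psi_{k+1}} = A_k\big(\ket{\Psi_k}-\ket{\tilde\Psi_k}\big) + \big(A_k-\tilde A_k\big)\ket{\tilde\Psi_k}\ .
\end{align}
By unitarity of $A_k$ the first term contributes exactly $\|\ket{\Psi_k}-\ket{\tilde\Psi_k}\|$, and the second is bounded by $\|A_k-\tilde A_k\|$. I would then peel off one factor at a time, using that all exponentials are unitary together with $\|XY-X'Y'\|\le\|X-X'\|\|Y\|+\|X'\|\|Y-Y'\|$, to get $\|A_k-\tilde A_k\|\le\|e^{i\theta_k\Psi_k}-e^{i\tilde\theta_k\tilde\Psi_k}\|+\|e^{s_k[\Psi_k,H]}-e^{\tilde s_k[\tilde\Psi_k,H]}\|$. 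Each of these is then handled with $\|e^{P}-e^{Q}\|\le\|P-Q\|$ for anti-Hermitian generators, followed by an add-and-subtract that separates the parameter error from the accumulated state error: $\|\theta_k\Psi_k-\tilde\theta_k\tilde\Psi_k\|\le|\theta_k|\,\|\Psi_k-\tilde\Psi_k\|+\delta_\theta\,\|\tilde\Psi_k\|$ for the reflection factor, and $\|s_k[\Psi_k,H]-\tilde s_k[\tilde\Psi_k,H]\|\le|s_k|\,\|[\Psi_k-\tilde\Psi_k,H]\|+\delta_s\,\|[\tilde\Psi_k,H]\|$ for the commutator factor.

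The state-dependent pieces are then converted using $\|\Psi_k-\tilde\Psi_k\|\le2\|\ket{\Psi_k}-\ket{\tilde\Psi_k}\|$, $\|[A,B]\|\le2\|A\|\|B\|$, and $\|H\|\le1$, which produce the $2|\theta_k|$ and $4|s_k|$ multiples of $\|\ket{\Psi_k}-\ket{\tilde\Psi_k}\|$ that together bound by $6\zeta$. The delicate point that pins down the prefactor is the additive term: the phase gives $\delta_\theta\,\|\tilde\Psi_k\|=\delta_\theta$, while the commutator gives $\delta_s\,\|[\tilde\Psi_k,H]\|$. To reach the stated $\tfrac{1}{3\zeta}$ rather than a looser constant, I would use that $\tilde\Psi_k$ is a pure state, so by the effective-idempotence spectrum of Prop.~\ref{proposition effective idempotence} the operator $[\tilde\Psi_k,H]$ has eigenvalues $0,\pm i\sqrt{V}$ with $V$ the energy variance of $\tilde\Psi_k$, giving $\|[\tilde\Psi_k,H]\|=\sqrt{V}\le\|H\|\le1$. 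The additive contribution is therefore $\delta_\theta+\delta_s\le2\max(\delta_s,\delta_\theta)$.

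Collecting everything yields $\|\ket{\Psi_{k+1}}-\ket{\tilde\Psi_{k+1}}\|\le(1+6\zeta)\|\ket{\Psi_k}-\ket{\tilde\Psi_k}\|+2\max(\delta_s,\delta_\theta)$, and since $\ket{\tilde\Psi_0}=\ket{\Psi_0}$ the base term vanishes. Iterating and summing the geometric series $\sum_{i=0}^{K-1}(1+6\zeta)^i=\frac{(1+6\zeta)^K-1}{6\zeta}\le\frac{(1+6\zeta)^K}{6\zeta}$ gives $\|\ket{\Psi_K}-\ket{\tilde\Psi_K}\|\le\frac{2}{6\zeta}(1+6\zeta)^K\max(\delta_s,\delta_\theta)=\frac{1}{3\zeta}(1+6\zeta)^K\max(\delta_s,\delta_\theta)$, as claimed. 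As in Prop.~\ref{prop: perturb in H}, the main obstacle is purely bookkeeping: keeping the parameter perturbations cleanly separated from the recursively propagated state error and tracking the constants so that the $6\zeta$ growth rate and the factor $2$ in the additive term come out exactly. In particular, the pure-state norm bound $\|[\tilde\Psi_k,H]\|\le\|H\|\le1$ is precisely what makes the final $\tfrac{1}{3\zeta}$ prefactor tight rather than loose.
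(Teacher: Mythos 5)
Your proof is correct and follows essentially the same route as the paper's: the same one-step recursion $\|\ket{\Psi_{k+1}}-\ket{\tilde\Psi_{k+1}}\|\le(1+6\zeta)\,\|\ket{\Psi_k}-\ket{\tilde\Psi_k}\|+2\max(\delta_s,\delta_\theta)$ assembled from the same elementary differences (state error, reflection perturbation, commutator perturbation), followed by the same geometric-series telescoping to $\frac{1}{3\zeta}(1+6\zeta)^K\max(\delta_s,\delta_\theta)$. The only local deviation is your handling of the additive $\delta_s$ term: you use $\|e^{s_k[\tilde\Psi_k,H]}-e^{\tilde s_k[\tilde\Psi_k,H]}\|\le\delta_s\,\|[\tilde\Psi_k,H]\|$ together with the spectral fact $\|[\tilde\Psi_k,H]\|=\sqrt{V}\le\|H\|\le1$ (valid, since effective idempotence gives eigenvalues $0,\pm i\sqrt{V}$ for this normal operator), whereas the paper computes the state-level deviation exactly as $2|\sin(\delta_s\sqrt{V}/2)|\le\delta_s\sqrt{V}$ via the trigonometric action formula — both arguments yield the same bound $\delta_s$, so all constants agree.
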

\begin{proof}
Let us define the intermediate QSP states
     \begin{align}
         \ket{\Psi_k}= \prod_{k'=0}^{k-1} e^{i \theta_{k'} \Psi_{k'}}  e^{s_{k'}[\Psi_{k'},H]}\ket{\Psi_0}
    \end{align}   
    and analogously $\ket{\tilde \Psi_k}$ are the intermediate states of QSP with $\tilde \theta_k$ and $\tilde s_k$.
    The difference between $\ket{\Psi_{k+1}}$ and $\ket{\tilde \Psi_{k+1}}$ is given by
    \begin{align}
        \|\ket{\Psi_{k+1}}-\ket{\tilde{\Psi}_{k+1}}\| &= \|e^{i\theta_{k}\Psi_k}e^{s_{k}[\Psi_{k},H]}\ket{\Psi_{k}}- e^{i{\tilde \theta}_{k}\tilde{\Psi}_k}e^{\tilde {s}_{k}[\tilde{\Psi}_{k},  H]}\ket{\tilde\Psi_{k}}\| \ .
    \end{align}
Again,  following the same procedure in Eq.~\eqref{eq: DB-QSP convergence intermediate} from Subsec.~\ref{app sec DBQSP}, we add and subtract the term $\{e^{i\theta_{k}\Psi_k}e^{s_{k}[\Psi_{k},H]}\ket{\tilde \Psi_{k}}, \;
e^{i \theta_{k} \tilde \Psi_k}e^{s_{k}[\Psi_{k},H]}\ket{\tilde \Psi_{k}}, \; 
e^{i \theta_{k}\tilde \Psi_k}e^{s_{k}[\tilde \Psi_{k},H]}\ket{\tilde \Psi_{k}}\},\; 
e^{i\tilde \theta_{k}\tilde \Psi_k}e^{s_{k}[\tilde \Psi_{k},H]}\ket{\tilde \Psi_{k}}
\}$ to split them into multiple norm calculation via triangle inequality.
Therefore, the result is 
\begin{align}
    \|\ket{\Psi_{k+1}}-\ket{\tilde{\Psi}_{k+1}}\| 
  &\le \| \ket{\Psi_{k}}-  \ket{\tilde\Psi_{k}}\|+
 \|e^{i\theta_{k}\Psi_k} - e^{i {\theta}_{k} \tilde\Psi_{k}}\|+\|e^{s_{k}[\Psi_{k},H]}- e^{ {s}_{k}[ \tilde {\Psi}_{k},  H]}\|
\nonumber\\
  &\quad + 
 \|e^{i\theta_{k}\tilde\Psi_k} - e^{i\tilde {\theta}_{k}\tilde\Psi_{k}}\|+\|e^{s_{k}[\tilde \Psi_{k},H]}\ket{\tilde \Psi_k} - e^{  \tilde {s}_{k}[\tilde{\Psi}_{k},  H]}\ket{\tilde \Psi_k} \| \\
  &\le \| \ket{\Psi_{k}}-  \ket{\tilde\Psi_{k}}\|\nonumber\\
  &\quad + |\theta_{k}|\cdot
 \|\Psi_k   -  \tilde\Psi_{k} \|\nonumber\\
   &\quad + |s_k|\cdot \| [\Psi_{k}-\tilde{\Psi}_{k},  H] \|  \nonumber\\
 & \quad+ |\theta_k-\tilde \theta_k| \nonumber\\
   &\quad +   \| \left(e^{{s}_k[\tilde\Psi,H]}-e^{\tilde s_k[\tilde\Psi,H]} \right)\ket{\tilde \Psi_k} \| \ , \label{eq: perturbation in angle intermediate}
\end{align}
where we use the formula $\|e^{A}-e^{B}\|\le\|A-B\|$ in the second inequality.
Next, we proceed to evaluate these terms separately.
 \begin{enumerate}
     \item For the first term $ \| \ket{\Psi_{k}}-  \ket{\tilde\Psi_{k}}\|$, it remains unchanged.

     \item For the second term, it becomes
     \begin{align}
          |\theta_{k}|\cdot
 \|\Psi_k   -  \tilde\Psi_{k} \| \leq 2 |\theta_{k}|\cdot \|\ket{\Psi_k}   -  \ket{\tilde\Psi_{k}} \| \ ,
     \end{align}
     where we use the relation  $ \|\Psi_k   -  \tilde\Psi_{k} \|\le 2 \| \ket{\Psi_{k}}-  \ket{\tilde\Psi_{k}}\|$.

     \item For the third term, it is
     \begin{align}
          |s_k|\cdot \| [\Psi_{k}-\tilde{\Psi}_{k},  H] \|  &\leq 2 |s_k|\cdot \| \Psi_{k}-\tilde{\Psi}_{k}\| \cdot \|H\|\\
          &\leq 4 |s_k|\cdot\| \ket{\Psi_{k}}-  \ket{\tilde\Psi_{k}}\| \ ,
     \end{align}
     where we use the bound $\|[A,B]\|\le 2\|A\|\|B\|$ in the first line and the relation  $ \|\Psi_k   -  \tilde\Psi_{k} \|\le 2 \| \ket{\Psi_{k}}-  \ket{\tilde\Psi_{k}}\|$.
     Note that we also exploited the assumption that  $\|H\|\le1$ in the second line.

     \item For the fourth term, we recall the definition of $\delta_\theta$, i.e. $|\theta_k -\tilde \theta_k|\le \delta_\theta$.

     \item For the fifth term, we observe that
     \begin{equation}
\begin{split}
    e^{\tilde{s}[\Psi,H]}\ket{\tilde\Psi_{k}} &= \left(\left(\frac{E_{\Omega}}{\sqrt{V_{\Omega}}}\sin(\tilde{s} \sqrt{V_{\Omega}})+ \cos(\tilde{s} \sqrt{V_{\Omega}})\right) I -\frac{1}{\sqrt{V_{\Omega}}}\sin(\tilde{s} \sqrt{V_{\Omega}}) H \right) \ket{\tilde\Psi_{k}}\\
    &= \cos(\delta_{s}\sqrt{V_{\Omega}}) \left(e^{s[\Psi,H]}\ket{\tilde\Psi_{k}}\right) + \sin(\delta_{s}\sqrt{V}_{\Omega}) \left(e^{(s+\pi/2\sqrt{V_{\Omega}})[\Psi,H]}\ket{\tilde\Psi_{k}}\right).
\end{split}
\end{equation}
Using this expression, the fifth term can be simplified to
     \begin{equation}
\begin{split}
    \|e^{\tilde{s}[\Psi,H]}\ket{\Psi}-e^{s[\Psi,H]}\ket{\Psi}\|&=\sqrt{2-2|\braket{\Psi|e^{\tilde{s}[\Psi,H]}e^{-s[\Psi,H]}|\Psi}|}\\
    &= \sqrt{2-2|\cos(\delta_{s}\sqrt{V_{\Omega}}) |}\\
    &=\sqrt{4-4|\cos^2(\delta_{s}\sqrt{V_{\Omega}/2})|} = 2|\sin(\delta_{s}\sqrt{V_{\Omega}}/2)| \le \delta_{s}\sqrt{V_{\Omega}},
\end{split}
\end{equation}
Using the fact that $\sqrt{V_{\Psi_{k}}}\le \|H\|$ and the assumption that $\|H\|\le 1$ , we have
\begin{align}
    \|e^{\tilde{s}[\Psi,H]}\ket{\Psi}-e^{s[\Psi,H]}\ket{\Psi}\| \leq \delta_s \ .
\end{align}
 \end{enumerate}
 
Collecting all the terms, Eq.~\eqref{eq: perturbation in angle intermediate} reduces to
\begin{align}
    \|\ket{\Psi_{k+1}}-\ket{\tilde{\Psi}_{k+1}}\| &\leq (1+2|\theta_{k}|+4|s_{k}|) \| \ket{\Psi_{k}}-  \ket{\tilde\Psi_{k}}\| + \delta_{\theta} + \delta_{s}\\
    &\le (1+ 6\zeta )\| \ket{\Psi_{k}}-  \ket{\tilde\Psi_{k}}\| +2\gamma. 
\end{align}
where we utilize the definition of $\gamma$ in the last line, i.e. $\gamma = \max(\delta_s,\delta_\theta)$.
Now, solving the iterative sequence, we get
\begin{equation}
\begin{split}
     \|\ket{\Psi_{k+1}}-\ket{\tilde{\Psi}_{k+1}}\| \le 2\gamma \sum_{i=0}^{k} (1+6\zeta)^{i} 
    &=2\gamma \frac{1-(1+6\zeta)^{k+1}}{1-(1+6\zeta)} \le \frac{\gamma}{3\zeta}(1+6\zeta)^{k+1}\ .
\end{split}
\end{equation}

 Setting $K=k+1$, the proposition statement is justified.
 
\end{proof}

 \subsection{Statistical Error Propagation}
\label{sec QSP Hamiltonian stability} 

In this section, we study sensitivity of the parametrization in  Thm.~\ref{thm complex QSP} to estimation errors of the energy and variance. 
More precisely, for $k=0,\ldots K-1$, we define the energy $\overline E_k$ and the variance $\overline V_k$ for states $\ket{\overline\Psi_k}$, which is recursively determined by  
\begin{align}
\label{app estimation state sequence}
    \ket{\overline \Psi_{k+1}} = e^{i \overline\theta_k \overline \Psi_{k}}  e^{\overline s_{k}[\overline \Psi_{k},H]}\ket{\overline \Psi_k}\ ,
\end{align}
with 
  \begin{equation}
\overline s_k = \frac{-1}{\sqrt{\overline V_k}}\arccos\left(\frac{\overline E_k-u}{\sqrt{\overline V_k +(\overline E_k-u)^2}}\right)
\end{equation}  
and 
\begin{align}
    \overline \theta_k = \text{arg}\left(\frac{\overline{E}_k-z_k}{|\overline{E}_k-z_k|}\right).
\end{align}
With this, the final state reads
\begin{align}
    \ket{\overline \Psi_{H}(\overline{\bm{\theta}},\overline{\bm{s}})} = \prod_{k=0}^{K-1} e^{i \overline\theta_k \overline \Psi_{k}}  e^{\overline s_{k}[\overline \Psi_{k},H]}\ket{\overline \Psi_0}\ .
\end{align}

Note that, while Prop.~\ref{angle stability prop app} characterizes the sensitivity to differences in parameters $s_{k}$ and $\theta_{k}$, its direct application to analyzing the impact of statistical estimates is non-trivial. 
To address this, we establish a lemma that circumvents this challenge by directly considering the relevant quantum states.
In the analysis, we define
    \begin{equation}
        s(E,V) = \frac{-1}{\sqrt{V}}\arccos\left(\frac{|E-z|}{\sqrt{V +|E-z|^2}}\right)
    \end{equation} 
    and
     \begin{align}
 \theta(E,V) = \text{arg}\left(\frac{ {E} -z }{| {E} -z |}\right)\ .
\end{align}
for any $E\in \mathbb R$ and $V\ge0$.

\begin{propositionA}[Statistical error propagation]
\label{app prop stat err propagation}
Suppose $H$ is a Hermitian matrix whose spectral radius does not exceed unity, i.e., $\|H\|\le1$.
Consider the linear polynomial $p(H)= H-zI$, which we implement using Thm.~\ref{thm complex QSP} for some $z\in \mathbb C$.
Let $\ket\Psi$ be a state with energy $E_\Psi$
 and variance $V_\Psi$.
Then, for any $E'\in \mathbb R$ and $V'\ge 0$, we have
\begin{align}
    \| \left(e^{i\theta(E_\Psi,V_\Psi) \Psi}e^{{s(E_\Psi,V_\Psi)}[ \Psi,H]}-e^{i\theta(E',V') \Psi }e^{s(E',V')[ \Psi,H]} \right)\ket{ \Psi} \| \le 20  \eta^4\max(|E_\Psi-E'|,|V_\Psi-V'|) \ ,
\end{align}
where $\displaystyle \eta = \max(\frac{1}{\sqrt{V_\Psi }},\frac{1}{\sqrt{V'}},\frac{1}{|E_\Psi-z|},\frac{1}{|E'-z|},1+|z|)$ is the maximal characteristic instability scale.
\end{propositionA}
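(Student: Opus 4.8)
The plan is to exploit that both unitaries in the statement are built from the \emph{same} generators---$\Psi$ in the reflection and $[\Psi,H]$ in the bracket evolution---so that only the scalar coefficients $s$ and $\theta$ differ between them. First I would insert the hybrid term $e^{i\theta(E_\Psi,V_\Psi)\Psi}\,e^{s(E',V')[\Psi,H]}\ket{\Psi}$ and apply the triangle inequality, which splits the quantity to be bounded into a \emph{duration part} $\|(e^{s(E_\Psi,V_\Psi)[\Psi,H]}-e^{s(E',V')[\Psi,H]})\ket{\Psi}\|$ and a \emph{reflection part} $\|(e^{i\theta(E_\Psi,V_\Psi)\Psi}-e^{i\theta(E',V')\Psi})\,\phi\|$ acting on the unit vector $\phi=e^{s(E',V')[\Psi,H]}\ket{\Psi}$. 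The two elementary estimates already derived inside the proof of Prop.~\ref{angle stability prop app} then apply directly: $\|e^{i\theta_1\Psi}-e^{i\theta_2\Psi}\|\le|\theta_1-\theta_2|$, and the effective-idempotence identity gives $\|(e^{s_1[\Psi,H]}-e^{s_2[\Psi,H]})\ket{\Psi}\|=2|\sin((s_1-s_2)\sqrt{V_\Psi}/2)|\le|s_1-s_2|\sqrt{V_\Psi}\le|s_1-s_2|$, using $\sqrt{V_\Psi}\le\|H\|\le1$. It therefore remains to bound the scalar differences $|s(E_\Psi,V_\Psi)-s(E',V')|$ and $|\theta(E_\Psi,V_\Psi)-\theta(E',V')|$ by a constant multiple of $\eta^4\max(|E_\Psi-E'|,|V_\Psi-V'|)$.

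For the duration I would move one argument at a time, $|s(E_\Psi,V_\Psi)-s(E',V')|\le|s(E_\Psi,V_\Psi)-s(E_\Psi,V')|+|s(E_\Psi,V')-s(E',V')|$, and bound each piece by the supremum of the relevant partial derivative along the connecting segment. Reusing the derivative computations from the monotonicity analysis in App.~\ref{sec:proof_theorem} (writing $r=|E-z|$ and $g=r/\sqrt{V+r^2}$) one finds $\partial_E s=\tfrac{E-\operatorname{Re}z}{r}\cdot\tfrac{1}{V+r^2}$ and $\partial_V s=\tfrac{\arccos(g)}{2V^{3/2}}-\tfrac{r}{2V(V+r^2)}$. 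The first segment fixes $E=E_\Psi$, so $r=|E_\Psi-z|$ stays at its endpoint value while $V$ stays between $V_\Psi$ and $V'$; hence $1/r\le\eta$ and $1/V\le\eta^2$ throughout, giving $|\partial_V s|\le\tfrac{\pi}{4}\eta^3$. The second segment fixes $V=V'$, and here the cancellation $|E-\operatorname{Re}z|/r\le1$ removes all dependence on how small $r$ becomes in the interior, leaving $|\partial_E s|\le 1/(V'+r^2)\le1/V'\le\eta^2$. Together these give $|s(E_\Psi,V_\Psi)-s(E',V')|\le(\eta^2+\tfrac{\pi}{4}\eta^3)\max(|E_\Psi-E'|,|V_\Psi-V'|)$.

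For the reflection, the key simplification is that $\theta(E,V)=\arg(E-z)$ is independent of $V$, so only $|E_\Psi-E'|$ enters. Writing $w_1=E_\Psi-z$ and $w_2=E'-z$, these two complex numbers have the \emph{same} imaginary part $-\operatorname{Im}z$, so $\operatorname{Im}(\bar w_1 w_2)=-\operatorname{Im}z\,(E_\Psi-E')$ and $|\sin(\theta_1-\theta_2)|=|\operatorname{Im}z|\,|E_\Psi-E'|/(|w_1||w_2|)$. Using $|\operatorname{Im}z|\le|w_1|$ and $1/|w_2|\le\eta$ gives $|\sin(\theta_1-\theta_2)|\le\eta|E_\Psi-E'|$; converting this to an angle bound (and separately handling the degenerate sign-flip case, where a jump of $\pi$ is absorbed because the endpoints are then both within $1/\eta$ of $z$) yields $|\theta_1-\theta_2|\le\pi\eta|E_\Psi-E'|$.

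Collecting the two scalar bounds and using $\eta\ge1$ (which holds since $1+|z|\le\eta$), every power of $\eta$ is dominated by $\eta^4$, and summing the numerical prefactors $1+\tfrac{\pi}{4}+\pi$ comfortably fits under the claimed $20\,\eta^4\max(|E_\Psi-E'|,|V_\Psi-V'|)$. The main obstacle---and the reason $\eta$ is defined with $1/|E_\Psi-z|$ and $1/|E'-z|$ rather than any uniform interior bound---is that $r=|E-z|$ can dip far below both endpoint values along the $E$-direction whenever $\operatorname{Re}z$ lies between $E_\Psi$ and $E'$. Naive bounds involving $1/r$ would blow up there; the resolution is precisely the two favorable cancellations above, namely $|E-\operatorname{Re}z|/r\le1$ for $\partial_E s$ and the shared-imaginary-part trigonometric identity for $\theta$, both of which keep the interior behaviour under control.
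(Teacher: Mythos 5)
Your proposal is correct, but it reaches the bound by a genuinely different route than the paper, so a comparison is worthwhile. The outer step is identical: insert the hybrid term, use the triangle inequality and unitary invariance to split into a reflection part and a duration part. The real divergence is the duration part. The paper deliberately \emph{never} bounds $|s(E_\Psi,V_\Psi)-s(E',V')|$; instead it uses effective idempotence to write both $e^{s[\Psi,H]}\ket\Psi$ and $e^{\overline s[\Psi,H]}\ket\Psi$ as $(aI+bH)\ket\Psi$ with coefficients evaluated at the true $(E_\Psi,V_\Psi)$, and then controls $|a(s)-a(\overline s)|+|b(s)-b(\overline s)|$ through a long sequence of trigonometric estimates, arriving at $18\eta^4$ for this part. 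You instead exploit that the two exponentials share the generator $[\Psi,H]$ and hence commute, giving the exact identity $\|(e^{s_1[\Psi,H]}-e^{s_2[\Psi,H]})\ket\Psi\|=2|\sin((s_1-s_2)\sqrt{V_\Psi}/2)|\le|s_1-s_2|$, which reduces everything to a scalar Lipschitz bound on $(E,V)\mapsto s(E,V)$; your segment-wise mean-value argument is sound precisely because of the path ordering you chose (on the $V$-segment $r=|E_\Psi-z|\ge1/\eta$ is frozen and $V\ge1/\eta^2$, on the $E$-segment $V=V'$ is frozen and the cancellation $|E-\operatorname{Re}z|/r\le1$ kills the dangerous $1/r$ in the interior). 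I checked both partial-derivative formulas and the bounds $|\partial_V s|\le\tfrac{\pi}{4}\eta^3$ and $|\partial_E s|\le\eta^2$; with $\eta\ge1$ your total of order $(1+\tfrac{\pi}{4}+\pi)\eta^3$ indeed sits under $20\eta^4$ and is in fact tighter by a power of $\eta$. For the reflection part the paper is simpler than you: it bounds $|e^{i\theta}-e^{i\overline\theta}|=\bigl|\tfrac{E'-z}{|E'-z|}-\tfrac{E_\Psi-z}{|E_\Psi-z|}\bigr|\le2\eta|E_\Psi-E'|$ by one triangle plus one reverse-triangle inequality, with no case analysis; your route through $|\sin(\theta_1-\theta_2)|$ works but forces the near-antipodal case, and there your wording is backwards: in the sign-flip configuration the endpoints are at distance \emph{at least} $1/\eta$ from $z$ on \emph{opposite} sides, so $|E_\Psi-E'|\ge2/\eta$, which is what absorbs the jump of $\pi$ (points ``within $1/\eta$ of $z$'' would ruin the bound, not save it). In summary: your approach is more modular and elementary (scalar calculus rather than operator-level trigonometric manipulation) and yields a sharper constant; the paper's algebraic route buys uniformity, sidestepping the two technicalities your sketch glosses over, namely the kink of $E\mapsto|E-z|$ at $E=\operatorname{Re}z$ when $z$ is real (harmless since the function stays Lipschitz, but it should be stated) and the phase wraparound/near-antipodal case analysis.
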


\begin{proof}
For brevity, we define $\theta = \theta(E_\Psi,V_\Psi)$, and $\overline \theta= \theta(E',V')$, as well as $s = s(E,V)$ and $\overline s = s(E',V')$.

First we reduce the problem into two separate bounds
\begin{align}
\left\| e^{i\theta \Psi}e^{s[\Psi,H]}\ket{\Psi} - e^{i\overline \theta \Psi }e^{\overline s[\Psi,H]}\ket{\Psi }\right\|  
&\le     \left\| e^{i\theta \Psi}e^{s[\Psi,H]}\ket{\Psi} - e^{i\theta \Psi }e^{\overline s[\Psi,H]}\ket{\Psi }\right\| \nonumber\\
&\quad +     \left\| e^{i\theta \Psi}e^{\overline s[\Psi,H]}\ket{\Psi} - e^{i\overline \theta \Psi }e^{\overline s[\Psi,H]}\ket{\Psi }\right\|  \\
&=     \left\| e^{i\theta \Psi} - e^{i\overline \theta \Psi } \right\| + \left\|  e^{s[\Psi,H]}\ket{\Psi} -  e^{\overline s[\Psi,H]}\ket{\Psi }\right\|\ ,\label{eq theta bound}
\end{align}
where we use the unitary invariance in the second inequality.
Next, we evaluate these two terms individually.
\begin{enumerate}
    \item
    First, by utilizing the fact that $\Psi$ is a pure state, we have
\begin{align} 
    \left\| e^{i\overline \theta \Psi} - e^{i\theta \Psi } \right\| &= \|I+(e^{i\overline \theta}-1)\Psi -\left(I+(e^{i\theta}-1)\Psi\right) \| \\
    & = |e^{i\overline \theta}-e^{i\theta}| \\
    &= \left|\frac{E'-z}{|E'-z|} - \frac{E_\Psi-z}{|E_\Psi-z|}  \right|\\
    &\le |E'-z|\left|\frac{1}{|E'-z|} - \frac{1}{|E_\Psi-z|}  \right| + \frac{\left|E' - E_\Psi  \right|}{|E_\Psi-z|} \\
    & \le \eta \bigg||E_{\Psi}-z|-|E'-z|\bigg| + \eta |E' - E_\Psi| \\
    & \le 2\eta |E' - E_\Psi| \label{eq:phase_error_app},
\end{align}
where we utilize the triangle inequality for the forth line, while we use reverse triangle inequality in the last line.

    \item 
 The implementation with $\overline{s}$ results in $e^{\overline s [\Psi,H]}\ket{\Psi } =(a(\overline{s})I+b(\overline{s})H)\ket{\Psi}$ with
    \begin{align}
    a(\overline s) &= \frac{  E_\Psi }{\sqrt{  V_{\Psi}}}\sin\left(\overline s \sqrt{  V_{\Psi}}\right) + \cos\left(\overline s \sqrt{   V_{\Psi}}\right), \\
        b(\overline s) &= - \frac{1}{\sqrt{  V_{\Psi}}} \sin\left(\overline s \sqrt{  V_{\Psi}}\right)\ .
    \end{align}

    Recall that the equalities are derived in Lem.~\ref{thm zero overhead QSP unitary single step}; see Sec.~\ref{app lemma and thm proofs} for more details.
    We stress that $E_\Psi$ and $V_\Psi$ could be different from the estimated ones used for determining $\overline{s}$.
    This stands in contrast to implementing the polynomial which we wanted $e^{  s[\Psi,H]}\ket{\Psi }= (a(s)I +b(s)H)\ket\Psi$ with
    \begin{align}
    a(s) &= \frac{  E_\Psi }{\sqrt{  V_{\Psi}}}\sin\left(s \sqrt{  V_{\Psi}}\right) + \cos\left(s \sqrt{   V_{\Psi}}\right), \\
        b(s) &= - \frac{1}{\sqrt{  V_{\Psi}}} \sin\left(s \sqrt{  V_{\Psi}}\right)\ .
    \end{align}

With these expressions, we have
   \begin{align}
\left\| e^{\overline{s}[\Psi,H]}\ket{\Psi} - e^{ {s}[\Psi,H]}\ket{\Psi} \right\| 
&= \left\|  (a(\overline s)I +b(\overline s)H)\ket{\Psi } -  (a(s)I +b(s)H)\ket{\Psi } \right\|\\
&\le |a(s)-a(\overline s)| +  |b(s)-b(\overline s)|\,\|H\|\\
&\le |a(s)-a(\overline s)| +  |b(s)-b(\overline s)|\ . \label{eq: a diff and b diff}
\end{align}
In the last line, we used the spectral assumption $\|H\|\le 1$.
\begin{enumerate}
    \item We begin by bounding $|b(s)-b(\overline s)|$ because this will help with the bound for $a(s)$.
First, for ease of notation, we introduce $\alpha = \sqrt{V_\Psi} s = \arccos\left(\frac{|E_\Psi-z|}{\sqrt{V_\Psi +|E_\Psi-z|^2}}\right)$ to denote $b(s)=-\sin(\alpha)/\sqrt{V_{\Psi}}$.
Similarly, $b(\overline{s})$ with the estimated $\overline{s}$ is expressed using $\overline \alpha = \sqrt{V'} \overline s$ as
\begin{align}
        b(\overline s) &= - \frac{1}{\sqrt{  V_{\Psi}}} \sin\left( \sqrt{ V_{\Psi}/V'} \overline{\alpha}\right)\ .
\end{align}
Thus, we have
\begin{align}
        |b(s)-b(\overline s)| &=\frac 1 {\sqrt {V_\Psi}} |\sin( \sqrt{  V_{\Psi} / V'} \overline\alpha) - \sin(  \alpha)|
        \\
        &\le \frac 1 {\sqrt {V_\Psi}} |\sin( \sqrt{  V_{\Psi} / V'} \overline\alpha) - \sin(  \overline\alpha)| + \frac 1 {\sqrt {V_\Psi}} |\sin( \overline \alpha) - \sin(  \alpha)| \ .
\end{align}

\begin{enumerate}
    \item For the first part, using $\sin(a)-\sin(b) = 2\sin(\frac{a-b}2)\cos(\frac{a+b}2)$, we get
\begin{align}
     &\frac 1 {\sqrt {V_\Psi}} |\sin( \sqrt{  V_{\Psi} / V'} \overline\alpha) - \sin(  \overline\alpha)| 
\nonumber\\
&\le \frac 1 {\sqrt {V_\Psi}} \left|2\sin\left( (\sqrt{  V_{\Psi} / V'} -1) \overline\alpha/2\right) \cos \left( (\sqrt{  V_{\Psi} / V'} +1) \overline\alpha/2\right) \right|\\
&\le \frac \pi {2\sqrt {V_\Psi}} \left|\sqrt{  V_{\Psi} / V'} -1\right|  \\
&\le \frac{\pi}{2}\left| \frac 1 {\sqrt {V_\Psi}} -\frac 1 {\sqrt {V'}}\right| 
\\
&\le \frac{\pi}{2} \frac{|V_{\Psi}-V'|}{\sqrt{V_{\Psi}}\sqrt{V'}(\sqrt{V_{\Psi}}+\sqrt{V'})}\\
&\le \frac{\pi}{2} \frac{|V_{\Psi}-V'|}{V_{\Psi}\sqrt{V'}} \\
&\le 2\eta^3 |V_{\Psi}-V'|  ,
\end{align}
where we use $\cos(x)\le1$,  $\sin(x)\le x$ and $\alpha\le \pi/2$ in the second inequality.
As for the third equality, we utilize 
$$\frac{1}{\sqrt{A}}-\frac{1}{\sqrt{B}}=\frac{B-A}{\sqrt{AB}(\sqrt{A}+\sqrt{B})},$$
while we use $1/(x+y) \le 1/x$ for $x>0$ and $y>0$ in the fourth inequality.

\item For the second part, we notice that
\begin{align}
    \sin(\alpha) = \sqrt{1 - |E_\Psi-z|^2/{(V_\Psi +|E_\Psi-z|^2)}} = \sqrt{V_\Psi / {(V_\Psi +|E_\Psi-z|^2)}}.
\end{align}
with $\cos \alpha = \frac{|E_\Psi-z|}{\sqrt{V_\Psi +|E_\Psi-z|^2}}$.
Hence, we have
\begin{align}
    &\frac 1 {\sqrt {V_\Psi}} |\sin( \overline \alpha) - \sin(  \alpha)| \nonumber\\ & \le \frac 1 {\sqrt {V_\Psi}}\left|\sqrt{\frac{V_{\Psi}}{V_{\Psi}+|E_{\Psi}-z|^2}}-\sqrt{\frac{V'}{V'+|E'-z|^2}}\right| \\
    & \le \left|\frac{1}{\sqrt{V_{\Psi}+|E-z|^2}}-\frac{1}{\sqrt{V'+|E'-z|^2}}\right| +  \frac 1 {\sqrt {V_\Psi}\sqrt{V'+|E'-z|^2}}\left|\sqrt{V_{\Psi}}-\sqrt{V'}\right| \ .
\end{align}
The first component is further upper bounded by
\begin{align}
    &\left|\frac{1}{\sqrt{V_{\Psi}+|E_{\Psi}-z|^2}}-\frac{1}{\sqrt{V'+|E'-z|^2}}\right|\nonumber\\ &\le \frac{|V_{\Psi}+|E-z|^2-(V'+|E'-z|^2)|}{\sqrt{V_{\Psi}+|E-z|^2}\sqrt{V'+|E'-z|^2}(\sqrt{V_{\Psi}+|E-z|^2}+\sqrt{V'+|E'-z|^2})} \\
    &\le \frac{|V_{\Psi}-V'| + ||E_{\Psi}-z|^2-|{E'}-z|^2|}{|E_{\Psi}-z|^2|E'-z|} \\
    & \le \frac{|V_{\Psi}-V'|}{|E_{\Psi}-z|^2|E'-z|} + \frac{2(1+|z|)|E_{\Psi}-E'|}{|E_{\Psi}-z|^2|E'-z|} \label{eq: C93}\\
    & \le \eta^3 |V_{\Psi}-V'| + 2\eta^4 |E_{\Psi}-E'| \ ,
\end{align}
where we exploit $E_{\Psi}+E'\le2$, and $|E_{\Psi}-z|^2-|E'-z|^2 \le 2(1+|z|)|E_{\Psi}-E'|\le 2 \eta|E_{\Psi}-E'|$ because of the assumption $\|H\|\le1$.

Also, the second component is given by
\begin{align}
    \frac 1 {\sqrt {V_\Psi}\sqrt{V'+|E'-z|^2}}\left|\sqrt{V_{\Psi}}-\sqrt{V'}\right| &\le \frac{|V_{\Psi}-V'|}{\sqrt {V_\Psi}\sqrt{V'+|E'-z|^2}(\sqrt{V'}+\sqrt{V_{\Psi}})} \\
    & \le \frac{|V_{\Psi}-V'|}{V_\Psi|E'-z|} \\
    & \le \eta^3 |V_{\Psi}-V'|.
\end{align}
Consequently, we have 
\begin{align}
    \frac 1 {\sqrt {V_\Psi}} |\sin( \overline \alpha) - \sin(  \alpha)| & \le \eta^3 |V_{\Psi}-V'| + 2\eta^4 |E_{\Psi}-E'| + \eta^3 |V_{\Psi}-V'| \\
    &\le 2\eta^3 |V_{\Psi}-V'| + 2\eta^4 |E_{\Psi}-E'|.
\end{align}
\end{enumerate}

Therefore, the upper bound of $ |b(s)-b(\overline s)|$ is expressed as 
\begin{align}
    |b(s)-b(\overline s)| &\le 2\eta^3 |V_{\Psi}-V'| + 2\eta^4 |E_{\Psi}-E'| + 2\eta^3 |V_{\Psi}-V'| \\
    \quad&=4\eta^3 |V_{\Psi}-V'| + 2\eta^4 |E_{\Psi}-E'|. \label{eq: b diff}
\end{align}

%%%%%%%%%%%%%%%%%%%%%%%%%%%%%%%%%%%%%%%%%%
%%%%%%%%%%%%%%%%%%%%%%%%%%%%%%%%%%%%%%%%%%
%%%%%%%%%%%%%%%%%%%%%%%%%%%%%%%%%%%%%%%%%%

\item  

Using similar procedure, we arrive at
\begin{align}
|a(s)-a(\overline s)| &\le E_\Psi |b(s)-b(\overline s)| + |\cos( \sqrt{  V_{\Psi} / V'} \alpha) - \cos(  \alpha)| + |\cos(   \overline  \alpha) - \cos(  \alpha)|\\
&\le |b(s)-b(\overline s)| + |2\sin( (\sqrt{  V_{\Psi} / V'} -1) \alpha/2) \sin( (\sqrt{  V_{\Psi} / V'} +1) \alpha/2)| \nonumber\\ &\quad
+ |\cos(   \overline  \alpha) - \cos(  \alpha)|\\
&\le |b(s)-b(\overline s)| + \frac{\pi}{2}\left| \frac 1 {\sqrt {V_\Psi}} -\frac 1 {\sqrt {V'}}\right|  |\cos(   \overline  \alpha) - \cos(  \alpha)|\ .
\end{align}
For the last component, recalling $\displaystyle \cos \alpha = \frac{|E_\Psi-z|}{\sqrt{V_\Psi +|E_\Psi-z|^2}}$, we have
\begin{align}
    &|\cos( \overline \alpha) - \cos(  \alpha)| \nonumber\\&\le { \left|\frac{|E_\Psi-z|} { \sqrt {V_\Psi +|E_\Psi-z|^2}} - \frac{|E'-z|} { \sqrt{V' +|E'-z|^2}}\right|}\\
    &\le |E_\Psi-z| \left|\frac{1} { \sqrt{V_\Psi +|E_\Psi-z|^2}} - \frac{1} { \sqrt{V' +|E'-z|^2}}\right| + \frac{\left||E_\Psi-z|-|E'-z|\right|}{\sqrt{V' +|E'-z|^2}} \\
    & \leq|E_\Psi-z| \left|\ \frac{|V_{\Psi}-V'|}{|E_{\Psi}-z|^2|E'-z|} + \frac{2(1+|z|)|E_{\Psi}-E'|}{|E_{\Psi}-z|^2|E'-z|}\right| +\frac{ |E_{\Psi}-E'|}{\sqrt{V'}}\\
    & \le \eta^2 |V_{\Psi}-V'| + 2\eta^3 |E_{\Psi}-E'| + \eta|E_{\Psi}-E'| \\
    &= \eta^2 |V_{\Psi}-V'| + (2\eta^3+\eta) |E_{\Psi}-E'| \ ,
\end{align}
where we use the result Eq.~\eqref{eq: C93} in the third inequality.
Hence, we obtain
\begin{align}
    |a(s)-a(\overline s)| & \le 4\eta^3 |V_{\Psi}-V'| + 2\eta^4 |E_{\Psi}-E'| + \frac{\pi}{2} \eta^3 |V_{\Psi}-V'| + \eta^2 |V_{\Psi}-V'| \nonumber\\ &\quad+ (2\eta^3+\eta) |E_{\Psi}-E'| \\
    &\le (6\eta^3 + \eta^2) |V_{\Psi}-V'|  + (2\eta^4 + 2\eta^3+\eta) |E_{\Psi}-E'|. \label{eq: a diff}
\end{align}
\end{enumerate}

By substituting Eq.~\eqref{eq: b diff} and Eq.~\eqref{eq: a diff} into Eq.~\eqref{eq: a diff and b diff}, we have
\begin{align}
    \left\| e^{\overline{s}[\Psi,H]}\ket{\Psi} - e^{ {s}[\Psi,H]}\ket{\Psi} \right\|   &\le (6\eta^3 + \eta^2) |V_{\Psi}-V'|  + (2\eta^4 + 2\eta^3+\eta) |E_{\Psi}-E'| + 4\eta^3 |V_{\Psi}-V'| \nonumber\\ &\quad+ 2\eta^4 |E_{\Psi}-E'|\\
    &= (10\eta^3 + \eta^2) |V_{\Psi}-V'|  + (4\eta^4 + 2\eta^3+\eta) |E_{\Psi}-E'| \\
    &\leq 18 \eta^4 \;\text{max}(|V_{\Psi}-V'|, |E_{\Psi}-E'|) \label{eq:angle_error_app} \ .
\end{align}
where we use $10\eta^3 + \eta^2 \le 11\eta^4 $ and $2\eta^3+\eta \le3 \eta^4$ in the last line since $\eta \ge1$ by definition.
Lastly, combining Eq.~\eqref{eq:phase_error_app} and Eq.~\eqref{eq:angle_error_app} leads to the constants claimed above.

\end{enumerate}

\end{proof}

Finally, leveraging the techniques developed thus far, we establish a bound on the deviation between the ideal state and the state affected by noisy estimations of energy and variance in terms of their statistical errors.

\begin{propositionA}[QSP estimation stability]
\label{eq app qsp estimatin stability}
Suppose $H$ is a Hermitian matrix whose spectral radius does not exceed unity, i.e., $\|H\|\le1$ and consider the polynomial $p(H)$ of degree $K$ with roots $z_{k}$ satisfying $|z_{k}|\le|z|$.
We denote the ideal parameter sequences as $\bm{\theta}=(\theta_0,\ldots,\theta_{K-1})$ and $\bm{s}=(|s_0|,\ldots,|s_{K-1}|)$, which yield the exact state $\ket{\Psi_{H}(\bm{\theta},\bm{s})}$.
Similarly, let $\bm{\overline\theta}=(\overline\theta_0,\ldots,\overline\theta_{K-1})$ and $\bm{\overline s}=(|\overline s_0|,\ldots,|\overline s_{K-1}|)$ be the parameters obtained from statistical estimates of the energy mean and variance, as used in the states defined in Eq.~\eqref{app estimation state sequence}.
Also,  we define the maximal instability scale across all states as $\displaystyle \eta = \max(\frac{1}{\sqrt{V_\Psi }},\frac{1}{\sqrt{V'}},\frac{1}{|E_\Psi-z|},\frac{1}{|E'-z|},1+|z|)$.
Then, we have
\begin{align}
    \|\ket{\Psi_{H}(\bm{\theta},\bm{s})} - \ket{\overline\Psi_{H}(\overline{\bm{\theta}},\overline{\bm{s}})}\| \le (14+120\eta^4)^{K} \cdot \max(\delta_V,\delta_E) = O(\max(\delta_V,\delta_E))\ .
\end{align}
where $\delta_E \ge |E_k -\overline E_k|$ and $\delta_V \ge |V_k -\overline V_k|$ is the statistical errors of energy and variance.
\end{propositionA}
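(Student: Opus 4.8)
The plan is to set up a one-step recursion for the deviation $\epsilon_k := \|\,\ket{\Psi_k}-\ket{\overline\Psi_k}\,\|$ between the ideal intermediate states of Thm.~\ref{thm complex QSP_appendix} and the states $\ket{\overline\Psi_k}$ built from the statistically estimated parameters, and then to solve the resulting geometric recursion with $\epsilon_0=0$, exactly as in the proofs of Props.~\ref{prop: DB-QSP convergence},~\ref{prop: perturb in H} and~\ref{angle stability prop app}. Concretely, I aim to prove an inequality of the shape $\epsilon_{k+1}\le C\,\epsilon_k + C'\max(\delta_E,\delta_V)$ with $C\le 14+120\eta^4$ and $C'\le C-1$; iterating from $\epsilon_0=0$ then gives $\epsilon_K\le C'\sum_{i=0}^{K-1}C^i = C'\,(C^K-1)/(C-1)\le C^K$, where the last step uses $C'\le C-1$, which is the claimed bound.

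For the single step I would interpolate between $\ket{\Psi_{k+1}}$ and $\ket{\overline\Psi_{k+1}}$ with two hybrid states and apply the triangle inequality. The first hybrid, $\ket{\chi_{k+1}}:=e^{i\theta_k\overline\Psi_k}e^{s_k[\overline\Psi_k,H]}\ket{\overline\Psi_k}$, keeps the \emph{ideal} parameters $(\theta_k,s_k)=(\theta(E_k,V_k),s(E_k,V_k))$ but feeds them the drifted state. The difference $\|\ket{\Psi_{k+1}}-\ket{\chi_{k+1}}\|$ is then controlled by exactly the unitary-Lipschitz estimates already used in the proof of Prop.~\ref{angle stability prop app}, namely $\|e^A-e^B\|\le\|A-B\|$, $\|[\,\cdot\,,H]\|\le 2\|\cdot\|\,\|H\|$ and $\|\Psi_k-\overline\Psi_k\|\le 2\epsilon_k$, giving a contribution $(1+2|\theta_k|+4|s_k|)\epsilon_k$, where $|\theta_k|<2\pi$ and $|s_k|\le 1/|E_k-z_k|\le\eta$ keep this term uniformly bounded by the characteristic instability scale $\eta$.

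The second hybrid, $\ket{\chi'_{k+1}}$, applies to $\ket{\overline\Psi_k}$ the \emph{exact} filter for the root $z_k$, i.e.\ the parameters built from the true energy and variance of $\ket{\overline\Psi_k}$. Now $\|\ket{\chi_{k+1}}-\ket{\chi'_{k+1}}\|$ and $\|\ket{\chi'_{k+1}}-\ket{\overline\Psi_{k+1}}\|$ are both differences between two parameter choices acting on the \emph{same} state $\ket{\overline\Psi_k}$, so each is bounded directly by Prop.~\ref{app prop stat err propagation} at the cost of the amplification factor $20\eta^4$. The second of these feeds in only the genuine statistical gap between the finite-shot estimates $\overline E_k,\overline V_k$ and the true moments of $\ket{\overline\Psi_k}$, contributing the additive term $C'\max(\delta_E,\delta_V)$ with $C'\le 20\eta^4$.

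The main obstacle, and what makes this proposition stronger than Prop.~\ref{angle stability prop app}, is the self-referential structure: the moments driving the $k$-th step are those of the \emph{drifted} state, so $\|\ket{\chi_{k+1}}-\ket{\chi'_{k+1}}\|$ involves $\max(|E_k-E_{\overline\Psi_k}|,|V_k-V_{\overline\Psi_k}|)$, the deviation of the true moments of $\ket{\overline\Psi_k}$ from the ideal ones, which must be folded back into $\epsilon_k$. For the mean this is the benign bound $|E_k-E_{\overline\Psi_k}|=|\bra{\Psi_k}H\ket{\Psi_k}-\bra{\overline\Psi_k}H\ket{\overline\Psi_k}|\le 2\|H\|\,\epsilon_k$. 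The variance needs more care because it is quadratic in $H$: writing $V=\braket{H^2}-\braket{H}^2$ and bounding each piece with $\|H\|\le 1$ and $|\braket{H}|\le1$ gives $|V_k-V_{\overline\Psi_k}|\le 6\epsilon_k$. Composing this $\mathcal{O}(1)$ Lipschitz constant with the $20\eta^4$ amplification of Prop.~\ref{app prop stat err propagation} produces the dominant $\mathcal{O}(\eta^4)\,\epsilon_k$ contribution, which together with the phase and reflection constants assembles into the per-step coefficient $C=14+120\eta^4$. A final point of care is that the instability scale $\eta$ must be taken as a single bound uniform over all the intermediate states entering the recursion; with that in hand, substituting $\epsilon_0=0$ and summing the geometric series closes the proof.
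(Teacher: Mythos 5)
Your proposal is correct and follows essentially the same route as the paper's own proof: a per-step telescoping decomposition through hybrid states, Prop.~\ref{app prop stat err propagation} applied twice (once for the drift of the true moments of $\ket{\overline\Psi_k}$, folded back into $\epsilon_k$ via exactly the paper's Lipschitz bounds $|E_k-\tilde E_k|\le 2\epsilon_k$, $|V_k-\tilde V_k|\le 6\epsilon_k$, and once for the finite-shot estimation error), followed by the geometric recursion from $\epsilon_0=0$. The only difference is bookkeeping: by invoking Prop.~\ref{app prop stat err propagation} as a black box (constant $20\eta^4$) for both parameter comparisons, your per-step factor comes out as $1+2|\theta_k|+4|s_k|+120\eta^4\le 14+124\eta^4$ rather than exactly $14+120\eta^4$; the paper lands on its stated constant by handling the phase difference separately at cost $O(\eta)$ and using only the $18\eta^4$ duration bound from inside that proposition's proof, a discrepancy that is immaterial for the $O(\max(\delta_E,\delta_V))$ conclusion.
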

\begin{proof}
Let us define the intermediate QSP states
     \begin{align}
         \ket{\Psi_k}= \prod_{k'=0}^{k-1} e^{i \theta_{k'} \Psi_{k'}}  e^{s_{k'}[\Psi_{k'},H]}\ket{\Psi_0}
    \end{align}   
    and analogously $\ket{\overline \Psi_k}$ will be the intermediate states of QSP with $\overline \theta_k$ and $\overline s_k$.
    Thus, the difference between $\ket{\Psi_{k+1}}$ and $\ket{\overline \Psi_{k+1}}$ is given by
    \begin{align}
        \|\ket{\Psi_{k+1}}-\ket{\overline{\Psi}_{k+1}}\| &= \|e^{i\theta_{k}\Psi_k}e^{s_{k}[\Psi_{k},H]}\ket{\Psi_{k}}- e^{i{\overline \theta}_{k}\overline{\Psi}_k}e^{\overline {s}_{k}[\overline{\Psi}_{k},  H]}\ket{\overline\Psi_{k}}\|
    \end{align}
Again,  following the same procedure in Eq.~\eqref{eq: DB-QSP convergence intermediate} from Sec.~\ref{app sec DBQSP}, we add and subtract the term $\{e^{i\theta_{k}\Psi_k}e^{s_{k}[\Psi_{k},H]}\ket{\overline \Psi_{k}}, \;
e^{i \theta_{k} \overline \Psi_k}e^{s_{k}[\Psi_{k},H]}\ket{\overline \Psi_{k}}, \; 
e^{i \theta_{k}\overline \Psi_k}e^{s_{k}[\overline \Psi_{k},H]}\ket{\overline \Psi_{k}}\},\; 
e^{i\overline \theta_{k}\overline \Psi_k}e^{s_{k}[\overline \Psi_{k},H]}\ket{\overline \Psi_{k}}
\}$ to split them into multiple norm calculation via triangle inequality.
Therefore, the result is 
\begin{align}
    \|\ket{\Psi_{k+1}}-\ket{\overline{\Psi}_{k+1}}\| 
  &\le \| \ket{\Psi_{k}}-  \ket{\overline\Psi_{k}}\|+
 \|e^{i\theta_{k}\Psi_k} - e^{i {\theta}_{k} \overline\Psi_{k}}\|+\|e^{s_{k}[\Psi_{k},H]}- e^{ {s}_{k}[ \overline {\Psi}_{k},  H]}\|
\nonumber\\
  &\quad + 
 \|e^{i\theta_{k}\overline\Psi_k} - e^{i\overline {\theta}_{k}\overline\Psi_{k}}\|+\|e^{s_{k}[\overline \Psi_{k},H]}\ket{\overline \Psi_k} - e^{  \overline {s}_{k}[\overline{\Psi}_{k},  H]}\ket{\overline \Psi_k} \| \\
  &\le \| \ket{\Psi_{k}}-  \ket{\overline\Psi_{k}}\|\nonumber\\
  &\quad + |\theta_{k}|\cdot
 \|\Psi_k   -  \overline\Psi_{k} \|\nonumber\\
   &\quad + |s_k|\cdot \| [\Psi_{k}-\overline{\Psi}_{k},  H] \|  \nonumber\\
 & \quad+ \|e^{i\theta_{k}\overline\Psi_k} - e^{i\overline {\theta}_{k}\overline\Psi_{k}}\| \nonumber\\
   &\quad +   \| \left(e^{{s}_k[\overline\Psi,H]}-e^{\overline s_k[\overline\Psi,H]} \right)\ket{\overline \Psi_k} \| 
\ . \label{eq: QSP stability intermediate}
\end{align}
\begin{enumerate}
\item For the second term, we recall that $ \|\Psi_k   -  \overline \Psi_k\|\le 2 \| \ket{\Psi_{k}}-  \ket{\overline \Psi_k}\|$. Thus, we get
\begin{align}
    |\theta_{k}|\cdot
 \|\Psi_k   -  \overline\Psi_{k} \| \leq 2\pi \cdot 2 \| \ket{\Psi_{k}}-  \ket{\overline \Psi_k}\| \leq 13\| \ket{\Psi_{k}}-  \ket{\overline \Psi_k}\| . \label{eq: QSP theta term}
\end{align}

\item For the third term, since $|s_k|\le \pi/2\sqrt{V_{k}}\le 2\eta$, we have
\begin{align}
    |s_k|\cdot \| [\Psi_{k}-\overline{\Psi}_{k},  H] \|  &\leq 2\eta \cdot 2 \|\Psi_k   -  \overline\Psi_{k} \|\cdot \|H\| \leq 8 \eta \; \| \ket{\Psi_{k}}-  \ket{\overline \Psi_k}\|  . \label{eq: QSP skterm}
\end{align}

    \item By a similar consideration as in Eq.~\eqref{eq:phase_error_app}, we arrive at
\begin{align}
    \|e^{i\theta_{k}\overline\Psi_k} - e^{i\overline {\theta}_{k}\overline\Psi_{k}}\| &\le \|e^{i\theta_{k}\overline\Psi_k} - e^{i\tilde{\theta}_{k}\overline\Psi_{k}}\| + \|e^{i\tilde{\theta}_{k}\overline\Psi_k} - e^{i\overline {\theta}_{k}\overline\Psi_{k}}\| 
    \\&\le 4\eta  \| \ket{\Psi_{k}}-  \ket{\overline\Psi_{k}}\| + 2\eta \delta_E \ , \label{eq: QSP theta diff}
\end{align}
where we used     $|E_k -\tilde E_k| \le 2\|\ket{\Psi_k}-\ket{\overline \Psi_k}\|$
to bound $|\theta_k-\tilde \theta_k|$. 

\item Here, most steps proceeded as in the Prop.~\ref{angle stability prop app}, but the last remaining term needs a separate treatment as we do not a-priori have a bound on $|s_k-\overline s_k|$.
To proceed with the bound, we use the exact expectation values $\tilde E_k$ and $\tilde V_k$ of the states $\ket{\overline\Psi_k}$  in Eq.~\eqref{app estimation state sequence} to introduce
  \begin{equation}
\tilde s_k = s(\tilde E_k,\tilde V_k) =  \frac 1{\sqrt{\tilde V_k}}\arccos\left(\frac{\tilde E_k-u}{\sqrt{\tilde V_k +(\tilde E_k-u)^2}}\right)\ .
\end{equation}  

Next, Prop.~\ref{app prop stat err propagation} is used twice.
Indeed, we use $s_{k}$, $\tilde{s}_{k}$ and $\overline{s}_{k}$ to denote different durations:
\begin{itemize}
    \item $s_k$: a time duration computed using the exact expectation values of the exact states $\ket{\Psi_k}$
    \item $\tilde s_k$: a time duration computed using exact expectation values for states $\ket{\overline\Psi_k}$
    \item $\overline s_k$: a time duration computed using the estimated expectation values of the states $\ket{\overline\Psi_k}$.
\end{itemize}

Then, we have
\begin{align}
    \left\| \left(e^{{s}_k[\overline\Psi,H]}-e^{\overline s_k[\overline\Psi,H]} \right)\ket{\overline \Psi_k} \right\| \le \left\| \left(e^{{s}_k[\overline\Psi,H]}-e^{\tilde s_k[\overline\Psi,H]} \right)\ket{\overline \Psi_k} \right\| + \left\| \left(e^{\tilde{s}_k[\overline\Psi,H]}-e^{\overline s_k[\overline\Psi,H]} \right)\ket{\overline \Psi_k} \right\| \ .
\end{align}

We remark that both terms are already computed in Eq.~\eqref{eq:angle_error_app} for the proof of Prop.~\ref{app prop stat err propagation}.
Hence, we can compute them as follows.
\begin{enumerate}
    \item The first term is equal to Eq.~\eqref{eq:angle_error_app} when $V'$ ($E'$) is replaced with $\tilde{V_{k}}$ ($\tilde{E}_{k}$), i.e.,
\begin{align}
     \left\| \left(e^{{s}_k[\overline\Psi,H]}-e^{\tilde s_k[\overline\Psi,H]} \right)\ket{\overline \Psi_k} \right\| &\le 18\eta^4 \cdot \left(\text{max}(|V_{k}-\tilde V_k|, |E_{k}-\tilde E_k|)\right) \ .
\end{align}
Thus, the remaining task is to account the differences of the exact expectation values and variances, which are given by
\begin{align}
    |E_k -\tilde E_k| &\le  2\|\ket{\Psi_k}-\ket{\overline \Psi_k}\|, \\
    |V_k -\tilde V_k| &\le | \bra{\Psi_k}H^2\ket{\Psi_k} - \bra{\overline \Psi_k}H^2\ket{\overline \Psi_k}|+|E_k^2 -\tilde E_k^2| \le 6\|\ket{\Psi_k}-\ket{\overline \Psi_k}\| .
\end{align}
Using this observation, we have
\begin{align}
     \left\| \left(e^{{s}_k[\overline\Psi,H]}-e^{\tilde s_k[\overline\Psi,H]} \right)\ket{\overline \Psi_k} \right\| &\le 18\eta^4 \cdot \left(6\|\ket{\Psi_k}-\ket{\overline \Psi_k}\|\right) =108 \eta^4 \|\ket{\Psi_k}-\ket{\overline \Psi_k}\|\ .
    \label{eq: QSP stability last term part 1}
\end{align}

\item  For the second term, the gaps in the energy and variance arise from the inaccurate estimation caused by statistical noise.
Thus we have
\begin{align}
     \left\| \left(e^{\tilde{s}_k[\overline\Psi,H]}-e^{\tilde s_k[\overline\Psi,H]} \right)\ket{\overline \Psi_k} \right\| &\le 18 \eta^4\max(\delta_V,\delta_E)\ .
     \label{eq: QSP stability last term part 2}
\end{align}

\end{enumerate}

\end{enumerate}

Overall, by substituting the above calculations into Eq.~\eqref{eq: QSP stability intermediate}, we have
\begin{align}
    \|\ket{\Psi_{k+1}}-\ket{\overline{\Psi}_{k+1}}\| 
&\leq (14+120\eta^4) \| \ket{\Psi_{k}}-  \ket{\overline\Psi_{k}}\| + 20\eta^4 \max(\delta_V,\delta_E)\ .
\end{align}

Thus, using these bounds inductively, we get
\begin{align}
    \| \ket{\Psi_{k}}-  \ket{\overline\Psi_{k}}\|&\le 20\eta^4 \max(\delta_V,\delta_E)\sum_{i=0}^{k-1} (14+120\eta^4)^{i} \\
    &=20\eta^4 \max(\delta_V,\delta_E) \frac{1-(14+120\eta^4)^{k}}{1-(14+120\eta^4)} \\
    &\le (14+120\eta^4)^{k}\max(\delta_V,\delta_E)\ .
\end{align}

By setting $k=K$, we get the claimed scaling.
\end{proof}

\section{Applications of DB-QSP} \label{sec:application_app}

In this section, we explore the potential applications of DB-QSP. 
As noted in the main text, a limitation of DB-QSP is that only low-degree polynomials is feasible.
Therefore, we first provide useful approximation techniques that can expand its applicability.
We then discuss applicability of DB-QSP for other tasks.

\subsection{Examples of Low-Degree Polynomial Approximations}
To assess the effectiveness of DB-QSP, it is crucial to understand what kind of functions can be realized using low-degree polynomials.
In QSP with post-selection, this issue is directly linked to the success probability of post-selection, as it depends on the degree of polynomials as well as an input state $\ket{\Psi_0}$. 
Importantly, several polynomial approximation techniques have been explored in the literature~\cite{martyn2021grand,gilyen2019quantum}.
To illustrate that DB-QSP can achieve $\epsilon$-precision while maintaining a logarithmically small polynomial degree, we present three representative examples below.

We first show an approximation of the sign function, which can be used for the ground-state preparation task~\cite{lin2020near}.

\begin{example}[Approximation of the sign function $\text{sgn}(x)$~\cite{lin2020near,martyn2021grand}]
Suppose $\delta>0$, $x\in\mathbb{R}$ and $\epsilon\in(0,1/2)$.
Given a degree $K=\mathcal{O}(\log(1/\epsilon)/\delta)$, there exists an odd polynomial $p(x)$, such that
\begin{itemize}
    \item for all $x\in[-2,2]$: $|p(x)|\le1$ and
    \item for all $x\in[-2,2]\backslash(-\delta,\delta)$: $|p(x)-\text{sgn}(x)|\le\epsilon$
\end{itemize}
where 
\begin{equation}
    \text{sgn}(x)=
  \begin{cases}
    1 & \text{if $x>0$,} \\
    -1                 & \text{if $x<0$,} \\
    0       & \text{if $x=0$.}
  \end{cases}
\end{equation}
\end{example}

As another polynomial function for filtering in the ground-state preparation task, we also demonstrate the approximation of trigonometric functions.
\begin{example}[Polynomial approximation of trigonometric functions by  Jacobi-Anger expansion~\cite{gilyen2019quantum,martyn2021grand}]\label{lem:jacobi-anger}
    Suppose $s\in\mathbb{R}$ and $\epsilon\in(0,\frac{1}{e})$.
    Given a degree $K=\lfloor\frac{1}{2}r\left(\frac{e|s|}{2},\frac{5}{4}\epsilon\right)\rfloor$, trigonometric functions can be approximated as follows;
    \begin{align}
        \|\cos(sx)-J_{0}(s)+2\sum_{l=1}^{K}(-1)^{l}J_{2l}(t)T_{2l}(x)\|_{[-1,1]}\le\epsilon,\\
        \|\sin(sx)-2\sum_{l=1}^{K}(-1)^{l}J_{2l+1}(t)T_{2l+1}(x)\|_{[-1,1]}\le\epsilon,
    \end{align}
    where $J_{m}(s)$ is the Bessel functions of the first kind and $T_{m}(x)$ is the Chebyshev polynomials of the first kind.
    Also, $r(t,\epsilon)$ is a function that asymptotically scales as
    \begin{equation}
        r(t,\epsilon)=\Theta\left(|t|+\frac{\log(1/\epsilon)}{\log(e+\frac{\log(1/\epsilon)}{|t|})}\right).
    \end{equation}
\end{example}
This indicates that  trigonometric functions can be approximated using a polynomials of degree $d=\lfloor\frac{1}{2}r\left(\frac{e|s|}{2},\frac{5}{4}\epsilon\right)\rfloor$ to achieve $\epsilon$-precision.

Next, we move to an polynomial approximation for matrix inversion used for, e.g., solving linear system of equations.

\begin{example}[Polynomial approximation of the inverse~\cite{childs2017quantum,gilyen2019quantum}]
Suppose $\kappa>1$, $\epsilon\in(0,\frac{1}{2})$ and $x\in[-1,1]\backslash(-\frac{1}{\kappa}),\frac{1}{\kappa}$.
Then
\begin{equation}
    f(x) = \frac{1-(1-x^2)^a}{x}
\end{equation}
is an odd function with $a=\lceil\kappa^2\log(\kappa/\epsilon) \rceil$ is $\epsilon$-close to the inverse $1/x$.
Given a degree $K=\lceil \sqrt{a\log(4a/\epsilon)}\rceil=\mathcal{O}(\kappa\log(\kappa/\epsilon))$, the odd real function
\begin{equation}
    g(x)=4\sum_{l=1}^{K}(-1)^{l}\left[\frac{\sum_{j=l+1}^{a}{\binom{2a}{a+j}}}{2^{2a}}\right]T_{2l+1}(x)
\end{equation}
is $\epsilon$-close to $f(x)$ on the domain $[-1,1]$.
\end{example}
This indicates that DB-QSP has the potential to efficiently perform matrix inversion in terms of the inverse precision $1/\epsilon$.
However, the circuit depth required for DB-QSP scales super exponentially with the condition number, a key factor in assessing the algorithm’s efficiency. 
Thus, this example also highlights a fundamental challenge for DB-QSP in certain computational tasks.

With these approximations, our approach could circumvent the exponential costs for some cases.
Furthermore, approximations for other functions have been provided, e.g., in Ref.~\cite{low2017quantum}.
This suggests that certain tasks benefiting from these polynomials can also be performed using DB-QSP.
Thus, DB-QSP remains practically viable in some cases.

\subsection{Derivation of DB-QITE Using DB-QSP}

As discussed in the main text, DB-QSP can be utilized to implement Imaginary-Time Evolution (ITE), which is a key technique for ground-state preparation.
The non-unitary operator in imaginary time evolution, $e^{-\tau H}$, can be approximated up to the first order as $p(H)=I-\tau H$.
Thus, Lem.~\ref{thm zero overhead QSP unitary single step} immediately suggests that DB-QSP can be used to approximate the imaginary time evolution.
Then, by using the group commutator
\begin{align}
    e^{s_0[\Psi,H]}  =  e^{i\sqrt{s_0}H}e^{i\sqrt{s_0}\Psi}    
    e^{-i\sqrt{s_0}H}
    e^{-i\sqrt{s_0}\Psi} +O(s_0^{3/2})\ 
\end{align}
and noticing that the last unitary has a trivial action on $\ket \Psi$, we arrive at the proposal in Ref.~\cite{gluza_DB_QITE_2024}
\begin{align}
    \ket{\omega_{k+1}} =  e^{i\sqrt{s_0}H}e^{i\sqrt{s_0}\omega_k}    
    e^{-i\sqrt{s_0}H}\ket{\omega_{k}}\ .
\end{align}
This is essentially the same as choosing $N=1$ in DB-QSP with $\theta_k=0$.

One subtlety is that DB-QSP suggests to use state-dependent scheduling $s_k$.
Ref.~\cite{gluza_DB_QITE_2024} proved that using a sufficiently small constant $s_0$ allows to converge to the ground state and in every step have a cooling rate matching imaginary time evolution.
The iterative use of Thm.~\ref{thm zero overhead QSP unitary single step} shows that this quantum algorithm can be devised based on QSP as the design approach.

\subsection{Hamiltonian Simulation}

Next, we move onto the Hamiltonian simulation task, where the goal is to implement the real-time evolution $e^{-itH}$.
To perform this task, a common assumption in QSP implementation is direct access to a subroutine which applies the input Hermitian matrix $H$ to an input state; see App.~\ref{app:overview_qsp}.
One approach to Hamiltonian simulation is to approximate the evolution operator using a Taylor series expansion 
\begin{align}
    p_\text{HS}(H) = \sum_{k=0}^K\frac{(-it)^n}{n!} H^n \ ,
\end{align}
which allows the Hamiltonian evolution to be approximated via polynomial transformations.

At first glance, this polynomial decomposition suggests that DB-QSP might also be applicable to Hamiltonian simulation.
However, DB-QSP is not designed for this task.
Since Alg.~\ref{alg DBQSP} assumes direct access to $e^{itH}$, using DB-QSP for Hamiltonian simulation would be vacuous.
This query model is also known as the Hamiltonian evolution model and has been widely used in tasks such as ground-state energy estimation with early fault-tolerant quantum computers~\cite{dong2022ground,zhang2022computing,LinQPE}.
From this perspective, the fact that DB-QSP does not target Hamiltonian simulation is not a limitation but rather an inherent feature of the query model.

\subsection{Evolution under a Polynomial Function of Hamiltonian}
Interestingly, though, DB-QSP can be used to effectively transform Hamiltonians, while working in the Hamiltonian evolution model.
In its simplest variant, we aim to implement the Hamiltonian simulation of $H^2$, the second power of the input matrix.
Strikingly, DB-QSP can be applied to this scenario by interpreting $p_\text{HS}(H^2)$ as a polynomial of doubled degree in the variable $H$, which leads to the factorization
\begin{align}
    p_\text{HS}(H^2) = a_{2K} \prod_{k=1}^{2K} (H - \sqrt{z_k} I)(H +\sqrt{z_k} I)
\end{align}
in contrast to the alternative formulation
\begin{align}
p_\text{HS}(H^2) = a_K \prod_{k=1}^{K} (H^2 - z_k I),
\end{align}
which treats $H^{2}$ as the primary variable.
More generally, for evolution under $e^{itg(H)}$ where $g$ is a polynomial, we observe that if $h = p \circ g$ is also a polynomial, then $p_\text{HS}(g(H))$ can be factorized accordingly, allowing us to proceed in an analogous manner.

The possibility to systematically use the Hamiltonian simulation $e^{-itH}$ to simulate $e^{-itH^2}$ is implied by classic results in Lie group theory~\cite{lloyd1995almost}, but an explicit construction of the type provided by DB-QSP is new to our knowledge. 
In particular, DB-QSP could provide a convergence rate and a circuit lower bound to a large class of instances of this classic question.
Finally, we remark that Thm.~\ref{thm complex QSP} is required in this case, because $p_\text{HS}(H) = I-iH-H^2/2$ has complex roots $z_\pm = \pm1-i$ for $K=2$ and $t=1$.

\subsection{Laurent Polynomials}
Another application is the Laurent polynomials, which include terms with negative powers, i.e.,
\begin{align}
    p_\text{L}(H) = \sum_{k=-K}^K a_k H^k\ .
\end{align}
While it is useful for QSP to consider ``polynomials" involving inverse powers, Thm.~\ref{thm complex QSP} does not directly provide unitary synthesis for this type.
Yet, assuming the matrix $H$ satisfies $\|I-H\|<1$, we can consider $H^{-K} \approx p_\text{INV}(H)^K$ and thus write
\begin{align}
    p_\text{DBL}(H) = p_\text{INV}(H)^K \sum_{k=0}^K a_{k-K} H^{k}\ .
\end{align}
This gives the approximation $p_\text{DBL}(H)\approx p_\text{L}(H)$ and provides an example how one can implement Laurent polynomials using DB-QITE.
The general case for Laurent polynomials with arbitrary Hermitian matrices is left for future work.

\section{Classically-Aided DB-QSP Synthesis } \label{sec: classical_computation_app}

Statistical error is unavoidable when estimating energy mean and variance on quantum hardware, because of the finite number of measurement shots. 
Error analysis in Eq.~\eqref{eq:error_accumulation_by_noisy_estimation} further suggests that this issue becomes more pronounced as the polynomial degree increases. 
This suggests the need for an approach to circumvent this challenge.
One potential solution is to leverage classical computation in initial steps.
Motivated by this, we explore conditions under which energy and variance can be efficiently computed using classical resources.

Assume that the initial state $\ket{\Psi_0}$ is expressed in a basis where only $m$ of its components are nonzero.
For instance, if the initial state is a tensor-product of zero states, i.e., $\ket{0}^{\otimes L}$ with $L$-qubits, then $m=1$.
Additionally, suppose the target Hermitian matrix is given by $H = \sum_{i=1}^J w_i P_i$ with Pauli operators $\{P_{i}\}$.
Using Eq.~\eqref{eq: aI+bH main_} together with the effect of state-dependent phase gate, the resultant state can always be written as
\begin{equation} \label{eq:aI+bH_state}
    \ket{\Psi_{k}} = \left(\prod_{j=1}^{k} (a'(s_{j})I+b(s_{j})H)\right) \ket{\Psi_{0}}
\end{equation}
with $a'(s_{j})\in\mathbb{C}$ and $b(s_{j})\in\mathbb{R}$.
Note that the coefficients $a'(s_{j})$ and $b(s_{j})$ are determined by the energy and variance of the state at $(k-1)$-th step. Our goal is to estimate the energy $O = H$ and the variance $O = (H - \braket{H})^2$.
By substituting Eq.~\eqref{eq:aI+bH_state} into the expectation value $\braket{\Psi_{k}|O|\Psi_{k}}$, we obtain 
\begin{align} 
    \braket{\Psi_{k}|H|\Psi_{k}} &= \sum_{l=1}^{2k+1} \xi_l \braket{\Psi_{0}|H^{l}|\Psi_{0}},\label{eq:energy_classcial}\\
    \braket{\Psi_{k}|(H - \braket{H})^2|\Psi_{k}} &= \sum_{l=0}^{2k+2} \xi'_l \braket{\Psi_{0}|H^{l}|\Psi_{0}}.\label{eq:variance_classcial}
\end{align}
where coefficients $\{\xi_l\}$ and $\{\xi'_l\}$ are determined by computing Eq.~\eqref{eq:aI+bH_state}. 
Implication of Eqs.~\eqref{eq:energy_classcial},~\eqref{eq:variance_classcial} is that, if we can compute $\braket{\Psi_{0}|H^{l}|\Psi_{0}}$ up to $l=2k+2$ classically, the energy and variance at $k$ step is tractable using classical computers.

With these criteria in mind, we analyze the conditions under which the classical computation of energy and variance is feasible.
First, the number of Pauli operators in $H^{2k+2}$ is at most $J^{2k+2}+1$.
Furthermore, since each Pauli operator has exactly one nonzero entry per row and column, the total number of nonzero elements that need to be stored scales as $\mathcal{O}(mJ^{2k+2})$.
To ensure classical tractability in terms of memory and computational cost, we require this scaling to remain within $\mathcal{O}(\text{poly}(n))$.
Thus, the condition on $k$ for energy and variance to be classically computable is given by
\begin{align}
\begin{split}
   & m^2J^{2k+2} \le \text{poly}(n) \\
   \Leftrightarrow\,& k \le \frac{\log\left(\text{poly}(n)\right)/2-\log(m)}{\log \left(J\right)} -1 \nonumber\\
  \Leftrightarrow\,& k = \mathcal{O}\left(\frac{\log\left(\text{poly}(n)\right)/2-\log(m)}{\log \left(J\right)}\right).
\end{split}
\end{align}
This indicates that, if $m,J=\mathcal{O}(1)$, we can classically compute up to $k=\mathcal{O}(\log(n))$.
On the other hand, computing only constant step $k$ is possible if $m,J=\mathcal{O}(\text{poly}(n))$
This clearly captures the classical difficulty: if the initial state contains many non-zero elements and the number of Pauli terms becomes prohibitively large, it becomes infeasible to compute the energy even for a single step. 
However, for instance, the Pauli terms for Ising models scale linearly in the number of qubits. 
Furthermore, some situations involve easy-to-prepare initial states like the tensor product of zero states, where  $m=1$.
Thus, this result suggests that a few steps of classical computation may be feasible in some cases. We also note that this estimation is straightforward, and advanced classical techniques could further improve the efficiency, which we will leave for future work.

\section{Unbiased Estimator of the Operator Variance for Hamiltonians } \label{app variance estimator}

In this section, we derive an unbiased estimator for the variance of an observable expressed as a weighted sum of Pauli operators.
We first describe the measurement procedure used to estimate the expectation values of individual Pauli operators and their products. 
Next, we construct a straightforward variance estimator and demonstrate its bias arising from finite-sample effects.
To address this, we derive a corrected formula that provides an unbiased estimate of the operator variance.

\subsection{Measurement Procedure}

Here, we focus on the observables $\hat{O}$ that can be decomposed as the weighted sum of Pauli operators, i.e.,
the observable can be expressed in the form 
\begin{align}
\hat{O}=\sum_{i=1}^{L} w_i P_i \ ,    
\end{align}
where $P_i\in\{I,X,Y,Z\}^{\otimes n}$ denotes the Pauli operators for $n$ qubits and $w_i$ represents its corresponding weights.

Then, the variance of the observable $\hat{O}$ is defined as 
$V = \braket{\hat{O}^2} - \braket{\hat{O}}^2$, where $\braket{\cdot }= \braket{\psi | \cdot | \psi}$ for a pure quantum state $\ket{\psi}$.
Thus, the square of the observable is given by
\begin{align}
  \hat{O}^2 = \left(\sum_{i=1}^{L} w_i P_i\right)  \left(\sum_{j=1}^{L} w_j P_j\right)  
  =\sum_{i=1}^{L} w_i^2 I + \sum_{\substack{i,j=1 \\ i \neq j}}^L w_i w_j P_{ij} \ ,
\end{align}
where we use the Pauli operators's identity $P_i^2=I$ and we introduce the notation $P_{ij}=P_i P_j$.
Using the expression of $\hat{O}$ and $\hat{O}^2$, the variance of the observable is given by
\begin{align}\label{eq: variance operator}
    V = \braket{\hat{O}^2} - \braket{\hat{O}}^2  &= \sum_{i=1}^{L} w_i^2 \braket{I} + \sum_{\substack{i,j=1 \\ i \neq j}}^L w_i w_j \braket{P_{ij}} - \left(\sum_{i=1}^{L} w_i \braket{P_i}\right)^2 \ .
\end{align}
To estimate the variance, we measure each Pauli component multiple times.
Using the measurement outcomes, we can then estimate the expectation value of each term in Eq.~\eqref{eq: variance operator}.
The procedure is:
\begin{enumerate}
    \item $\overline{\braket{I}}=1$ by the assumption of normalised state.
    \item 
    Suppose we measure $P_i$ a total of $N_i$ times, yielding outcomes (a set of measured bit-strings)
\begin{align}
    \{b^{(P_i)}_1,\, b^{(P_i)}_2,\, \dots,\, b^{(P_i)}_{N_i}\} \quad \text{with each} \;\; b_k \in \{-1, +1\} \;\; \text{ for } 1\leq k \leq N_i \ .
\end{align}
Then the estimation of a single Pauli operator $P_i$ is $\displaystyle       \overline{\langle P_i\rangle} = \frac{1}{N_i}\sum_{k=1}^{N_i} b^{(P_i)}_k$.
    \item Similarly, if we measure the product operator $P_{ij}=P_iP_j$ (for $i\neq j$) $N_{ij}$ times, we obtain the estimator
\begin{align}
    \overline{\langle P_iP_j\rangle} = \frac{1}{N_{ij}}\sum_{k=1}^{N_{ij}} b^{(P_{ij})}_k.
\end{align}
\end{enumerate}

\subsection{Biased and Unbiased Estimator of the Operator Variance}\label{sec: unbiased estimator}

\subsubsection{Biased and Unbiased Estimator}
First, we mention the definition of biased and unbiased estimator.
\begin{definitionA}[Estimator]
Let $\overline{A}$ be an estimator of the parameter $A$. The estimator is said to be:
\begin{itemize}
    \item \textbf{unbiased} if $\mathbb{E}[\overline{A}] = A$, 
    \item \textbf{biased} if $\mathbb{E}[\overline{A}] \neq A$ \ ,
\end{itemize}
where we use $\mathbb{E}$ to denote the expected value over the sampling process in this section.
\end{definitionA}

Using the statistics, a natural choice of the estimator of Eq.~\eqref{eq: variance operator} would be
\begin{align}
    \tilde{V}&= \sum_{i=1}^{L} w_i^2 + \sum_{\substack{i,j=1 \\ i\neq j}}^{L} w_iw_j\,\overline{\langle P_iP_j\rangle} - \left(\sum_{i=1}^{L} w_i\,\overline{\langle P_i\rangle}\right)^2 \\
    &=\sum_{i=1}^{L} w_i^2 + \sum_{\substack{i,j=1 \\ i\neq j}}^{L} w_iw_j \left(\frac{1}{N_{ij}}\sum_{k=1}^{N_{ij}} b^{(P_{ij})}_k\right) - \left[\sum_{i=1}^{L} w_i\left(\frac{1}{N_i}\sum_{k=1}^{N_i} b^{(P_i)}_k\right)\right]^2 \ .\label{eq: variance operator estimation}
\end{align}

However, we demonstrate a simple example where the last term of this estimator introduces bias into the estimator.

\begin{example}\label{example: biased estimator}
Let $P\in\{I,X,Y,Z\}^{\otimes n}$ be a Pauli operator and suppose that we estimate its expectation value by performing $N$ independent and identically distributed (i.i.d.) measurements, yielding outcomes $\{b_i\}_{i=1}^N$ (with $b_i\in\{-1,+1\}$).
Using our construction, the natural estimator for $\langle P \rangle^2$ is $\tilde{\langle P \rangle^2} \coloneqq \left( \frac{1}{N}\sum_{i=1}^{N} b_i \right)^2$, and its expectation value yields
\begin{align}
   \mathbb{E}\!\left[\left( \frac{1}{N}\sum_{i=1}^{N} b_i \right)^2\right]  = \frac{1}{N^2}\left( \sum_{i=1}^{N} \mathbb{E}[b_i^2] + \sum_{\substack{i,j=1 \\ i\neq j}}^{N} \mathbb{E}[b_i b_j] \right)
   &= \frac{1}{N^2}\left( \sum_{i=1}^{N} 1 + \sum_{\substack{i,j=1 \\ i\neq j}}^{N} \langle P_i\rangle^2 \right)   \\
    &= \frac{1}{N^2}\left( N + N(N-1) \langle P \rangle^2 \right) = \langle P \rangle^2 + \frac{1 - \langle P \rangle^2}{N} \ .\label{eq: biased estimator of square} 
\end{align}
where we use the i.i.d. assumption  $( \mathbb{E}[b_i b_j] = \mathbb{E}[b_i] \, \mathbb{E}[b_j] $ for $ i\neq j )$ and the relation $ b_i^2 = 1 $ in the second line.
Clearly, $\mathbb{E} \left[\overbrace{\langle P \rangle^2} \right] \neq \langle P \rangle^2$ for any finite sample size $N$, and hence it is a biased estimator by the definition.
To remove this bias, we introduce a correction factor and define the unbiased estimator as
\begin{align}\label{eq:corrected_estimator}
\overline{\langle P \rangle^2} \coloneqq \frac{N}{N-1}\left[\left(\frac{1}{N}\sum_{i=1}^{N} b_i\right)^2 - \frac{1}{N}\right] \,.
\end{align}
Directly evaluating the expectation value of new estimator of $\overline{ \langle P \rangle^2}$ yields
    \begin{align}
    \mathbb{E}\left[ \frac{N}{N-1}\left[\left(\frac{1}{N}\sum_{i=1}^{N} b_i\right)^2 - \frac{1}{N}\right] \right] 
    = \frac{N}{N-1}\left[\langle P\rangle^2 + \frac{1-\langle P\rangle^2}{N} - \frac{1}{N}\right] = \langle P\rangle^2 \ .
\end{align}
Thus, this is indeed an unbiased estimator for $\langle P\rangle^2$ as the expectation value of the estimator is consistent with the true value.
\end{example}

\begin{propositionA}[Unbiased estimator for the variance of an observable]\label{prop: unbiased estimator}
Consider an observable $\hat{O}$ which can be written as $\hat{O}=\sum_{i=1}^{L} w_i P_i $, where $P_i\in\{I,X,Y,Z\}^{\otimes n}$ denotes the Pauli operators for $n$ qubits and $w_i$ represents its corresponding weights.
The unbiased estimator for the variance of this observable is then given by
      \begin{align}
    \overline{V}
    &=\sum_{i=1}^{L} w_i^2 + \sum_{\substack{i,j=1 \\ i\neq j}}^{L} w_iw_j \left(\frac{1}{N_{ij}}\sum_{k=1}^{N_{ij}} b^{(P_{ij})}_k\right) \nonumber \\
    &\qquad - \sum_{i=1}^{L} \frac{w_i^2 N_i}{N_i-1} \left[\left(\frac{1}{N_i}\sum_{k=1}^{N_i} b^{(P_i)}_k\right)^2-\frac{1}{N_i} \right] - \sum_{\substack{i,j=1 \\ i \neq j}}^L w_i w_j \left(\frac{1}{N_i}\sum_{k=1}^{N_i} b^{(P_i)}_k\right)\left(\frac{1}{N_j}\sum_{k=1}^{N_j} b^{(P_j)}_k\right)\ ,
\end{align}
where we measure the operator $P_i$ a total of $N_i$ times and the product operator $P_{ij}=P_iP_j$ (for $i\neq j$) a total of $N_{ij}$ times.
\end{propositionA}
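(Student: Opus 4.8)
The plan is to construct the unbiased estimator by treating the three structurally distinct contributions to the variance $V = \braket{\hat O^2} - \braket{\hat O}^2$ separately, exploiting the decomposition already established in Eq.~\eqref{eq: variance operator}. Writing $V = \sum_i w_i^2 + \sum_{i\neq j} w_i w_j \braket{P_{ij}} - (\sum_i w_i \braket{P_i})^2$, the identity term is exact and needs no estimation, and the off-diagonal term $\sum_{i\neq j} w_i w_j \braket{P_{ij}}$ is estimated by the plainly unbiased sample mean $\overline{\braket{P_iP_j}}$ since $\mathbb{E}[\overline{\braket{P_iP_j}}] = \braket{P_iP_j}$. The entire difficulty therefore concentrates in the last term, $\braket{\hat O}^2 = (\sum_i w_i \braket{P_i})^2$, whose naive plug-in estimator is biased, exactly as exhibited in Example~\ref{example: biased estimator}.

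To handle $\braket{\hat O}^2$, I would expand the square as $\sum_i w_i^2 \braket{P_i}^2 + \sum_{i\neq j} w_i w_j \braket{P_i}\braket{P_j}$ and treat the diagonal and cross terms differently. For each diagonal term $w_i^2 \braket{P_i}^2$, the bias computed in Eq.~\eqref{eq: biased estimator of square} shows $\mathbb{E}[(\tfrac1{N_i}\sum_k b_k^{(P_i)})^2] = \braket{P_i}^2 + (1-\braket{P_i}^2)/N_i$, so I replace the naive square by the Bessel-corrected estimator $\tfrac{N_i}{N_i-1}[(\tfrac1{N_i}\sum_k b_k^{(P_i)})^2 - \tfrac1{N_i}]$ from Eq.~\eqref{eq:corrected_estimator}, which Example~\ref{example: biased estimator} already verifies is unbiased for $\braket{P_i}^2$. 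For each cross term with $i\neq j$, the key observation is that $P_i$ and $P_j$ are measured in \emph{independent} experiments (distinct runs producing the bit-strings $b^{(P_i)}$ and $b^{(P_j)}$), so the product of the two sample means factorizes in expectation: $\mathbb{E}[\overline{\braket{P_i}}\,\overline{\braket{P_j}}] = \mathbb{E}[\overline{\braket{P_i}}]\,\mathbb{E}[\overline{\braket{P_j}}] = \braket{P_i}\braket{P_j}$, making the plain product of sample means already unbiased with no correction needed.

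Assembling these pieces gives precisely the estimator stated in Prop.~\ref{prop: unbiased estimator}: the exact identity contribution $\sum_i w_i^2$, the unbiased off-diagonal estimator $\sum_{i\neq j} w_i w_j \overline{\braket{P_iP_j}}$, minus the Bessel-corrected diagonal sum $\sum_i \tfrac{w_i^2 N_i}{N_i-1}[(\tfrac1{N_i}\sum_k b_k^{(P_i)})^2 - \tfrac1{N_i}]$, minus the factorized cross-term sum $\sum_{i\neq j} w_i w_j \overline{\braket{P_i}}\,\overline{\braket{P_j}}$. The verification step is then to take $\mathbb{E}[\overline V]$ and check term-by-term that it equals $V$, using linearity of expectation together with the two unbiasedness facts established above. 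The main obstacle, and the point requiring the most care, is the independence assumption in the cross terms: one must make explicit that each Pauli operator (and each pair product) is estimated from a \emph{separate} set of i.i.d. measurement shots, so that sample means over different operators are statistically independent. If instead the same measurement data were reused across operators, the factorization $\mathbb{E}[\overline{\braket{P_i}}\,\overline{\braket{P_j}}] = \braket{P_i}\braket{P_j}$ would fail and additional covariance corrections would be required; flagging and justifying this independence is the crux of the argument.
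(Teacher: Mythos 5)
Your proposal is correct and follows essentially the same route as the paper's proof: decompose $V$ via Eq.~\eqref{eq: variance operator}, use the plainly unbiased sample mean for the $\braket{P_{ij}}$ terms, apply the Bessel-type correction (whose unbiasedness is exactly the second-moment computation of Example~\ref{example: biased estimator}, restated in the paper as Lem.~\ref{lemma: third and fourth moments of X_i}) for the diagonal $\braket{P_i}^2$ terms, and invoke independence of the measurement records for distinct operators to factorize the cross terms $\mathbb{E}[\overline{\braket{P_i}}\,\overline{\braket{P_j}}]=\braket{P_i}\braket{P_j}$. Your emphasis on the independence of data across different Pauli operators as the crux matches the paper's use of the i.i.d.\ assumption at precisely the same step, so no gap remains.
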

\begin{proof}
    The expected value of $\overline{V}$ is 
    \begin{align}
        \mathbb{E}\left[\overline{V}\right]&=\sum_{i=1}^{L} w_i^2 + \sum_{\substack{i,j=1 \\ i\neq j}}^{L} w_iw_j  \;\mathbb{E}\left[\left(\frac{1}{N_{ij}}\sum_{k=1}^{N_{ij}} b^{(P_{ij})}_k\right)\right] \nonumber \\
         &\qquad - \sum_{i=1}^{L} \frac{w_i^2 N_i}{N_i-1} \mathbb{E}\left[\left(\frac{1}{N_i}\sum_{k=1}^{N_i} b^{(P_i)}_k\right)^2-\frac{1}{N_i} \right] - \sum_{\substack{i,j=1 \\ i \neq j}}^L w_i w_j \;\mathbb{E}\left[\left(\frac{1}{N_i}\sum_{k=1}^{N_i} b^{(P_i)}_k\right)\left(\frac{1}{N_j}\sum_{k=1}^{N_j} b^{(P_j)}_k\right)\right] \ ,
    \end{align}
where we use the fact that the expected value of a constant is the same constant for the first term.
Next, we address the remaining terms separately.
\begin{enumerate}
    \item Since the term $\displaystyle\frac{1}{N_{ij}}\sum_{k=1}^{N_{ij}} b^{(P_{ij})}_k$ is unbiased estimator for $\braket{P_{ij}}$, we have
    \begin{align}
        \sum_{\substack{i,j=1 \\ i\neq j}}^{L} w_iw_j  \;\mathbb{E}\left[\left(\frac{1}{N_{ij}}\sum_{k=1}^{N_{ij}} b^{(P_{ij})}_k\right)\right]&=\sum_{\substack{i,j=1 \\ i\neq j}}^{L} w_iw_j \braket{P_{ij}} \ .
    \end{align}
    \item For the third term, we have
    \begin{align}
          \sum_{i=1}^{L} \frac{w_i^2 N_i}{N_i-1} \mathbb{E}\left[\left(\frac{1}{N_i}\sum_{k=1}^{N_i} b^{(P_i)}_k\right)^2-\frac{1}{N_i} \right]
          &=  \sum_{i=1}^{L} \frac{w_i^2 N_i}{N_i-1}  \left\{\mathbb{E}\left[\left(\frac{1}{N_i}\sum_{k=1}^{N_i} b^{(P_i)}_k\right)^2 \right] -\frac{1}{N_i}\right\}\\
         &=\sum_{i=1}^{L} \frac{w_i^2 N_i}{N_i-1}  \left\{\langle P_i \rangle^2 + \frac{1 - \langle P_i \rangle^2}{N_i} -\frac{1}{N_i}\right\}=\sum_{i=1}^{L} w_i^2 \langle P_i \rangle^2 \ ,
    \end{align}
    where we use Eq.~\eqref{eq: 2nd moment of X_i} of Lem.~\ref{lemma: third and fourth moments of X_i} in the second line.
    \item For the last term, since the samples for different indices $(i\neq j)$ are i.i.d., we have
    \begin{align}
        \sum_{\substack{i,j=1 \\ i \neq j}}^L w_i w_j \;\mathbb{E}\left[\left(\frac{1}{N_i}\sum_{k=1}^{N_i} b^{(P_i)}_k\right)\left(\frac{1}{N_j}\sum_{k=1}^{N_j} b^{(P_j)}_k\right)\right]&=\sum_{\substack{i,j=1 \\ i \neq j}}^L w_i w_j \;\mathbb{E}\left[\left(\frac{1}{N_i}\sum_{k=1}^{N_i} b^{(P_i)}_k\right)\right]\mathbb{E}\left[\left(\frac{1}{N_j}\sum_{k=1}^{N_j} b^{(P_j)}_k\right)\right]\\
        &=\sum_{\substack{i,j=1 \\ i \neq j}}^L w_i w_j  \braket{P_i}\braket{P_j} \ .
    \end{align}
\end{enumerate}
Collecting all the terms, the expected value of $\overline{V}$ becomes
\begin{align}
    \mathbb{E}\left[\overline{V}\right]&=\sum_{i=1}^{L} w_i^2 
+ \sum_{\substack{i,j=1 \\ i\neq j}}^{L} w_iw_j\,\langle P_iP_j\rangle 
- \sum_{i=1}^{L} w_i^2\,\langle P_i \rangle^2 
- \sum_{\substack{i,j=1 \\ i \neq j}}^L w_iw_j\,\langle P_i\rangle\,\langle P_j\rangle\\
&=\sum_{i=1}^{L} w_i^2 
+ \sum_{\substack{i,j=1 \\ i\neq j}}^{L} w_iw_j\,\langle P_{ij}\rangle 
- \left(\sum_{i=1}^{L} w_i\,\langle P_i\rangle\right)^2 
\end{align}
Since $ \mathbb{E}\left[\overline{V}\right]=V$ by using Eq.~\eqref{eq: variance operator}, $\overline{V}$ is indeed an unbiased estimator.
\end{proof}

\subsection{Total Variance of the Unbiased Estimator of the Operator Variance}

Next, with the motivation to assess the query complexity in DB-QSP, we compute the uncertainty of the unbiased estimator of the operator variance. 
Here, we consider variance as the uncertainty metric, which is given by:
\begin{align}\label{eq: var of the unbiased estimator}
    \text{Var}[\overline{V}] = \text{Var}\left[ \overline{\braket{\hat{O}^2}} - \overline{\braket{\hat{O}}^2}\right] =  \text{Var}\left[\overline{\braket{\hat{O}^2}}\right] + \text{Var}\left[\overline{\braket{\hat{O}}^2}\right] - 2 \; \text{Cov} \left[\overline{\braket{\hat{O}^2}},\overline{\braket{\hat{O}}^2}\right],
\end{align}
where we use the identity $\text{Var}[A+B]=\text{Var}[A]+\text{Var}[B]+2\text{Cov}(A,B)$.
The estimator for $\overline{\braket{\hat{O}^2}}$ and $\overline{\braket{\hat{O}}^2}$ are determined by the measurement on the Pauli operators $P_{ij}$ and $P_k$ respectively.
Assuming that the measurements on $P_{ij}$ and $P_k$ are independent, their covariance is zero, i.e., $\text{Cov} \left[\overline{\braket{\hat{O}^2}},\overline{\braket{\hat{O}}^2}\right]=0$.
Therefore, the remaining task is to evaluate $\text{Var}\left[\overline{\braket{\hat{O}^2}}\right]$ and $\text{Var}\left[\overline{\braket{\hat{O}}^2}\right]$.
We address these two terms in Lem.~\ref{lemma: first variance term} and Lem.~\ref{lemma: second variance term}.

\begin{lemmaA}\label{lemma: first variance term}
Suppose we have an observable $\hat{O}$ which is of the form $\hat{O}=\sum_{i=1}^{L} w_i P_i$, where $P_i\in\{I,X,Y,Z\}^{\otimes n}$ denotes the Pauli operators for $n$ qubits and $w_i$ represents its corresponding weights. 
Assuming that measurements performed for all operators are i.i.d., then the uncertainty (variance) of the estimation of $\hat{O}^2$ can be expressed as
\begin{align}\label{eq:lem7_statement}
\text{Var}\left[\overline{\langle \hat{O}^2 \rangle}\right] 
 = \sum_{\substack{i,j=1 \\ i\neq j}}^{L} \frac{w_i^2w_j^2}{N_{ij}} \left(1-\braket{P_{ij}}^2 \right) \ ,
\end{align}
where we define $\braket{P_{ij}}=\braket{P_{i}}\braket{P_{j}}$ and $N_{ij}$ is the number of sample used to estimate $\braket{P_{ij}}$.
\end{lemmaA}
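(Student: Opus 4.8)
The plan is to begin from the estimator for $\braket{\hat O^2}$ that is implicit in the measurement procedure and in Prop.~\ref{prop: unbiased estimator}, namely
\begin{equation}
\overline{\braket{\hat O^2}}=\sum_{i=1}^{L} w_i^2+\sum_{\substack{i,j=1\\i\neq j}}^{L} w_iw_j\left(\frac{1}{N_{ij}}\sum_{k=1}^{N_{ij}}b_k^{(P_{ij})}\right),
\end{equation}
and to observe immediately that the first sum $\sum_i w_i^2$ is a deterministic constant and hence contributes nothing to the variance. The whole computation therefore reduces to the variance of the random double sum over the pairs $(i,j)$ with $i\neq j$.

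First I would invoke the hypothesis that the measurement records of distinct operators are mutually independent. This makes the sample means $\frac{1}{N_{ij}}\sum_{k}b_k^{(P_{ij})}$ independent across the pair index $(i,j)$, so that upon expanding the variance of the sum every cross-covariance between distinct pairs vanishes and only the diagonal terms survive,
\begin{equation}
\text{Var}\left[\overline{\braket{\hat O^2}}\right]=\sum_{\substack{i,j=1\\i\neq j}}^{L} w_i^2w_j^2\,\text{Var}\!\left[\frac{1}{N_{ij}}\sum_{k=1}^{N_{ij}}b_k^{(P_{ij})}\right].
\end{equation}
Next, for each fixed pair I would use the i.i.d.\ property of the $N_{ij}$ shots of $P_{ij}$ to reduce the variance of the sample mean to a single-shot variance, $\text{Var}[\tfrac{1}{N_{ij}}\sum_k b_k^{(P_{ij})}]=\tfrac{1}{N_{ij}}\text{Var}[b^{(P_{ij})}]$.

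The last ingredient is that single-shot variance. Since each outcome satisfies $b^{(P_{ij})}\in\{-1,+1\}$, we have $(b^{(P_{ij})})^2=1$ identically, so $\mathbb{E}[(b^{(P_{ij})})^2]=1$, while $\mathbb{E}[b^{(P_{ij})}]=\braket{P_{ij}}$; hence $\text{Var}[b^{(P_{ij})}]=1-\braket{P_{ij}}^2$. Substituting this back reproduces Eq.~\eqref{eq:lem7_statement}. The only delicate point, and the step I would treat most carefully, is the justification that the off-diagonal covariances genuinely vanish: this rests entirely on the independence assumption across distinct operators, which I would state explicitly rather than take for granted, since a correlated measurement schedule would reintroduce those cross-terms and invalidate the clean diagonal form.
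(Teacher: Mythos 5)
Your proposal is correct and follows essentially the same route as the paper's proof: discard the constant term $\sum_i w_i^2$, use independence across distinct pairs $(i,j)$ to diagonalize the variance of the double sum, reduce each sample-mean variance to a single-shot variance via the i.i.d.\ assumption, and evaluate that single-shot variance as $1-\braket{P_{ij}}^2$ using $(b^{(P_{ij})})^2=1$. The independence caveat you flag at the end is precisely the assumption the paper also states explicitly, so there is no gap.
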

\begin{proof}
 We start with the expression
\begin{align}
\text{Var}\left[\overline{\langle \hat{O}^2 \rangle}\right] 
&= \text{Var}\left[\sum_{i=1}^{L} w_i^2 + \sum_{\substack{i,j=1 \\ i\neq j}}^{L} w_iw_j \left(\frac{1}{N_{ij}}\sum_{k=1}^{N_{ij}} b^{(P_{ij})}_k\right)\right].
\end{align}

Since the first term $\sum_{i=1}^{L} w_i^2$ is a constant, its variance is zero. Therefore, we have
\begin{align}
\text{Var}\left[\overline{\langle \hat{O}^2 \rangle}\right] 
&= \text{Var}\left[\sum_{\substack{i,j=1 \\ i\neq j}}^{L} w_iw_j \left(\frac{1}{N_{ij}}\sum_{k=1}^{N_{ij}} b^{(P_{ij})}_k\right)\right].
\end{align}

Assuming that the contributions from different pairs of Pauli operators $(i,j)$ are independent, it reduces to
\begin{align}
\text{Var}\left[\overline{\langle \hat{O}^2 \rangle}\right] 
&= \sum_{\substack{i,j=1 \\ i\neq j}}^{L} \text{Var}\left[\frac{w_i w_j}{N_{ij}} \sum_{k=1}^{N_{ij}} b^{(P_{ij})}\right] = \sum_{\substack{i,j=1 \\ i\neq j}}^{L} \frac{w_i^2w_j^2}{N_{ij}^2} \text{Var}\left[\sum_{i=1}^{N_{ij}} b_{i}^{(P_{ij})}\right] \ ,
\end{align}
where we use the property $\text{Var}[aX] = a^2\,\text{Var}[X]$ (for any nonnegative constant $a$) in the last line.
Furthermore, assuming that the $b_{i}^{(P_{ij})}$ are i.i.d., it can be further simplified to
\begin{align}
\text{Var}\left[\overline{\langle \hat{O}^2 \rangle}\right] 
&=\sum_{\substack{i,j=1 \\ i\neq j}}^{L} \frac{w_i^2w_j^2}{N_{ij}^2}  \left( N_{ij}\,\text{Var}\left[b_{i}^{(P_{ij})} \right]\right) \ . \label{eq: variance 1st term}
\end{align}
Next, recall that each $ b_i $ satisfies $ b_i^2 = 1 $, and hence we obtain the following relation
\begin{align}\label{eq: variance identity}
    \text{Var}\left[b_{i}^{(P_{ij})} \right] = \braket{b_{i}^{(P_{ij})}\times b_{i}^{(P_{ij})} } - \braket{b_{i}^{(P_{ij})}}^2 = 1- \braket{b_{i}^{(P_{ij})}}^2 = 1- \braket{P_{ij}}^2 \ .
\end{align}
Combining Eq.~\eqref{eq: variance 1st term} and Eq.~\eqref{eq: variance identity} yields Eq.~\eqref{eq:lem7_statement}.
\end{proof}

Before proceeding to Lem.~\ref{lemma: second variance term}, let us first show a technical Lem.~\ref{lemma: third and fourth moments of X_i}, which will be useful in the proof of Lem.~\ref{lemma: second variance term}.
\begin{lemmaA}\label{lemma: third and fourth moments of X_i}
    Given a single Pauli operator of $n$ qubits $P_i\in\{I,X,Y,Z\}^{\otimes n}$, we estimate the expectation value of $P_i$ as
\begin{align}
   X_i= \overline{\langle P_i\rangle} = \frac{1}{N_i}\sum_{k=1}^{N_i} b^{(P_i)}_k \ .
\end{align}
where $b^{(P_i)}_k$ denotes the outcome of $k$-th measurement for the Pauli operator $P_i$.
Assuming the measurements are i.i.d., we obtain
\begin{enumerate}
\item the first moment of $X_i$ as $\mathbb{E}\left[ X_i\right]=  \overline{\langle P_i\rangle}$.
\item the second moment of $X_i$ as
\begin{align}\label{eq: 2nd moment of X_i}
    \mathbb{E}\left[ X_i^2\right]=   \langle P_i \rangle^2 + \frac{1 - \langle P_i \rangle^2}{N_i} \ .
\end{align}
    \item  the third moment of $X_i$ as
    \begin{align}
         \mathbb{E}\left[ X_i^3\right] 
         &=\langle P_i\rangle^3\left( 1 - \frac{3}{N_i} + \frac{2}{N_i^2}\right)
+\langle P_i\rangle\left(\frac{3}{N_i} - \frac{2}{N_i^2}\right)  \ .
        \end{align}
    \item   the fourth moment of $X_i$ as     
    \begin{align}
             \mathbb{E}\left[ X_i^4\right] 
             &=\langle P_i\rangle^4+\frac{6\langle P_i\rangle^2\left(1-\langle P_i\rangle^2\right)}{N_i}
            +\frac{\left(11\langle P_i\rangle^2-3\right)\left(\langle P_i\rangle^2-1\right)}{N_i^2}+\frac{2(3\langle P_i\rangle^2-1)(1-\langle P_i\rangle^2)}{N_i^3} \ .
        \end{align}
        where $\langle P_i\rangle$ denotes the true expectation value of $P_i$.
\end{enumerate}
\end{lemmaA}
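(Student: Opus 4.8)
The plan is to treat the single-shot outcomes $b_k^{(P_i)}\in\{-1,+1\}$ as i.i.d.\ random variables whose only relevant properties are $\mathbb{E}[b_k^{(P_i)}]=\langle P_i\rangle$ together with the pointwise parity identities $(b_k^{(P_i)})^{2j}=1$ and $(b_k^{(P_i)})^{2j+1}=b_k^{(P_i)}$, which hold because each outcome squares to one. Writing $\mu:=\langle P_i\rangle$ and $S:=\sum_{k=1}^{N_i}b_k^{(P_i)}$, so that $X_i=S/N_i$, the task reduces to evaluating $\mathbb{E}[S^m]$ for $m=1,2,3,4$ and dividing by $N_i^m$. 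First I would dispatch the two easy cases: the first moment is immediate by linearity, $\mathbb{E}[X_i]=\mu$, and the second moment has already been computed in Eq.~\eqref{eq: biased estimator of square} of Example~\ref{example: biased estimator}, yielding Eq.~\eqref{eq: 2nd moment of X_i}, which I would simply cite.

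The core of the argument is the expansion $S^m=\sum b_{k_1}\cdots b_{k_m}$ over all ordered $m$-tuples of indices, organized according to the \emph{coincidence pattern} of the indices, i.e.\ the set partition of $\{1,\dots,m\}$ recording which indices coincide. For each pattern I would (i) use independence to factorize the expectation across distinct index-values, (ii) collapse repeated indices via the parity identities so that a block of size $a$ contributes $\mathbb{E}[b^a]=1$ when $a$ is even and $\mu$ when $a$ is odd, and (iii) count the number of ordered tuples realizing that pattern as a falling factorial in $N_i$. For $m=3$ the three patterns (all-equal, one-pair-plus-singleton, all-distinct) have counts $N_i$, $3N_i(N_i-1)$, $N_i(N_i-1)(N_i-2)$ and per-term values $\mu$, $\mu$, $\mu^3$; collecting these and dividing by $N_i^3$ gives the stated third moment after grouping by powers of $\mu$.

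For $m=4$ there are five patterns, corresponding to the set partitions of four elements: one block of size four (count $N_i$, value $1$), a $3{+}1$ split (count $4N_i(N_i-1)$, value $\mu^2$), a $2{+}2$ split (count $3N_i(N_i-1)$, value $1$), a $2{+}1{+}1$ split (count $6N_i(N_i-1)(N_i-2)$, value $\mu^2$), and four singletons (count $N_i(N_i-1)(N_i-2)(N_i-3)$, value $\mu^4$). Summing and dividing by $N_i^4$ yields a polynomial in $1/N_i$ which, after routine algebra, rearranges into the claimed fourth-moment formula.

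The main obstacle is purely combinatorial bookkeeping in the $m=4$ case: correctly enumerating the five coincidence classes, attaching the right multiplicities (the $\binom{4}{3}=4$, $\tfrac12\binom{4}{2}=3$, and $\binom{4}{2}=6$ factors) and the right $\mu$-power to each, and then collapsing the expression to its compact form. I would sanity-check the enumeration with the $\mu$-independent consistency identity $N_i+7N_i(N_i-1)+6N_i(N_i-1)(N_i-2)+N_i(N_i-1)(N_i-2)(N_i-3)=N_i^4$, which verifies that the class counts partition all $N_i^4$ ordered tuples and catches any miscount before the algebra. A clean alternative to keep in reserve is to pass to cumulants: since the $b_k^{(P_i)}$ are i.i.d., the cumulants of $S$ are $N_i$ times those of a single $\pm1$ variable, and the standard moment–cumulant relations then deliver $\mathbb{E}[S^m]$ with substantially less casework.
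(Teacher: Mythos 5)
Your proposal is correct and follows essentially the same route as the paper's proof: expand $S^m$ over ordered index tuples, classify by the coincidence pattern of indices, use $b^2=1$ and independence to assign each pattern its value ($1$ for even blocks, $\langle P_i\rangle$ for odd blocks), count patterns via falling factorials ($N_i$, $3N_i(N_i-1)$, $N_i(N_i-1)(N_i-2)$ for $m=3$; $N_i$, $4N_i(N_i-1)$, $3N_i(N_i-1)$, $6N_i(N_i-1)(N_i-2)$, $N_i(N_i-1)(N_i-2)(N_i-3)$ for $m=4$), and divide by $N_i^m$ — exactly the paper's casework with identical multiplicities. Your added $\mu$-independent consistency check that the class counts sum to $N_i^4$, and the cumulant-based fallback, are sensible extras the paper does not include but do not change the argument.
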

\begin{proof}
 To simplify the notation, we use $b_i$ to represent $b^{(P_i)}_i$ throughout this proof.
    \begin{enumerate}
    \item \underline{\textbf{\textit{First moment of $X_i$:}}}
    
    Since the expectation value of a scalar is just the scalar itself, we obtain $\mathbb{E}\left[ X_i\right]=  \mathbb{E}\left[ \overline{\langle P_i\rangle}\right]= \overline{\langle P_i\rangle}$.
    \item \underline{\textbf{\textit{Second moment of $X_i$:}}}

Taking the expectation value of it yields
\begin{align}
   \mathbb{E}\!\left[X_i^2 \right] =   \mathbb{E}\!\left[\left( \frac{1}{N_i}\sum_{i=1}^{N_i} b_i \right)^2\right]  = \frac{1}{N_i^2}\left( \sum_{i=1}^{N_i} \mathbb{E}[b_i^2] + \sum_{\substack{i,j=1 \\ i\neq j}}^{N_i} \mathbb{E}[b_i b_j] \right) \ .
\end{align}
Since each $ b_i $ satisfies $ b_i^2 = 1 $, we have $ \mathbb{E}[b_i^2] = 1 $ for all $ i $.
Furthermore, assuming the samples are i.i.d., we obtain $ \mathbb{E}[b_i b_j] = \mathbb{E}[b_i] \, \mathbb{E}[b_j] $ for $ i\neq j $. 
Thus, it becomes
\begin{align}
    \mathbb{E}\!\left[ X_i^2 \right]= \frac{1}{N_i^2}\left( \sum_{i=1}^{N_i} 1 + \sum_{\substack{i,j=1 \\ i\neq j}}^{N_i} \mathbb{E}[b_i]^2 \right) 
    = \frac{1}{N_i^2}\left( \sum_{i=1}^{N_i} 1 + \sum_{\substack{i,j=1 \\ i\neq j}}^{N_i} \langle P_i\rangle^2 \right)   &= \frac{1}{N_i^2}\left( N_i + N_i(N_i-1) \langle P_i \rangle^2 \right) \\
     &= \langle P_i \rangle^2 + \frac{1 - \langle P_i \rangle^2}{N_i} \ .
\end{align}
where we recall the definition of the true expectation value  $ \mathbb{E}[b_i]=\langle P_i \rangle$ in the second equality.
        \item \underline{\textbf{\textit{Third moment of $X_i$:}}} 
        
        Similarly, for the third moment, we split the summation into multiple parts, i.e.  we classify the 3--tuple $(i,j,k)$ according to the ``equivalence class'' of the three indices.
        Thus, we have
        \begin{align}
            \mathbb{E}\left[ X_i^3\right] &= \frac{1}{N_i^3} \left(\sum_{i=1}^{N_i} \mathbb{E}[b_i^3] + \sum_{\substack{i,j=1 \\ i\neq j}}^{N_i} \mathbb{E}[b_i^2 b_j]+\sum_{\substack{i,j,k=1 \\ i\neq j\neq k}}^{N_i} \mathbb{E}[b_i b_jb_k] \right)\\
            &=\frac{1}{N_i^3} \left(\sum_{i=1}^{N_i} \mathbb{E}[b_i] + \sum_{\substack{i,j=1 \\ i\neq j}}^{N_i} \mathbb{E}[b_j]+\sum_{\substack{i,j,k=1 \\ i\neq j\neq k}}^{N_i} \mathbb{E}[b_i] \mathbb{E}[b_j]\mathbb{E}[b_k] \right)\\
            &= \frac{1}{N_i^3} \left(\sum_{i=1}^{N_i} \braket{P_i} + \sum_{\substack{i,j=1 \\ i\neq j}}^{N_i} \braket{P_i}+\sum_{\substack{i,j,k=1 \\ i\neq j\neq k}}^{N_i} \braket{P_i} \braket{P_i}\braket{P_i} \right) \ ,
        \end{align}
        where we again use the identity $b_i^2=1$ and the i.i.d. assumption in the second line.
        By counting the possible configurations of each summation, we arrive at
        \begin{align}
         \mathbb{E}\left[ X_i^3\right] &=\frac{1}{N_i^3} \left(N_i \braket{P_i} + 3N_i(N_i-1)\braket{P_i}+N_i(N_i-1)(N_i-2)\braket{P_i}^3 \right)\\
         &=\langle P_i\rangle^3\left( 1 - \frac{3}{N_i} + \frac{2}{N_i^2}\right)
+\langle P_i\rangle\left(\frac{3}{N_i} - \frac{2}{N_i^2}\right) \ .
        \end{align}
        \item \underline{\textbf{\textit{Fourth moment of $X_i$:}}}
        
        Lastly, for the fourth moment $\mathbb{E}\left[ X_i^4\right]$, we again split the summation into multiple parts in the last line, i.e.,  we classify the 4--tuple $(i,j,k,l)$ according to the ``equivalence class'' of the four indices.
        \begin{align}
            \mathbb{E}\left[ X_i^4\right] &= \frac{1}{N_i^4} \left(\sum_{i=1}^{N_i} \mathbb{E}[b_i^4] + \sum_{\substack{i,j=1 \\ i\neq j}}^{N_i} \mathbb{E}[b_i^3 b_j]+\sum_{\substack{i,j=1 \\ i\neq j}}^{N_i} \mathbb{E}[b_i^2 b_j^2]+\sum_{\substack{i,j,k=1 \\ i\neq j\neq k}}^{N_i} \mathbb{E}[b_i^2 b_jb_k] +\sum_{\substack{i,j ,k,l=1\\ i \neq j \neq k\neq l}}^{N_i} \mathbb{E}\left[ b_ib_jb_kb_l\right]\right)\\
            &=\frac{1}{N_i^4}\left(\sum_{i=1}^{N_i} 1 + \sum_{\substack{i,j=1 \\ i\neq j}}^{N_i} \mathbb{E}[b_i]\mathbb{E}[ b_j]+\sum_{\substack{i,j=1 \\ i\neq j}}^{N_i} 1+\sum_{\substack{i,j,k=1 \\ i\neq j\neq k}}^{N_i} \mathbb{E}[b_j]\mathbb{E}[b_k] +\sum_{\substack{i,j ,k,l=1\\ i \neq j \neq k\neq l}}^{N_i} \mathbb{E}\left[ b_i]\mathbb{E}[b_j]\mathbb{E}[b_k]\mathbb{E}[b_l\right]\right)\\
            &= \frac{1}{N_i^4}\left(\sum_{i=1}^{N_i} 1 + \sum_{\substack{i,j=1 \\ i\neq j}}^{N_i} \braket{P_i}^2+\sum_{\substack{i,j=1 \\ i\neq j}}^{N_i} 1+\sum_{\substack{i,j,k=1 \\ i\neq j\neq k}}^{N_i} \braket{P_i}^2+\sum_{\substack{i,j ,k,l=1\\ i \neq j \neq k\neq l}}^{N_i} \braket{P_i}^4\right) \ .
        \end{align}
        where we also employ the identity $b_i^2=1$ and the i.i.d. assumption in the second line.
        By accounting for all possible arrangements in each summation, we derive 
        \begin{align}
             \mathbb{E}\left[ X_i^4\right] &=\frac{1}{N_i^4} (N_i+ 4N_i(N_i-1)\braket{P_i}^2+3N_i(N_i-1) \nonumber\\
             &\qquad \qquad+ 6N_i(N_i-1)(N_i-2)\braket{P_i}^2+N_i(N_i-1)(N_i-2)(N_i-3)\braket{P_i}^4 ) \\
             &=\langle P_i\rangle^4+\frac{6\langle P_i\rangle^2\left(1-\langle P_i\rangle^2\right)}{N_i}
            +\frac{\left(11\langle P_i\rangle^2-3\right)\left(\langle P_i\rangle^2-1\right)}{N_i^2}+\frac{2(3\langle P_i\rangle^2-1)(1-\langle P_i\rangle^2)}{N_i^3} \ .
        \end{align}
    \end{enumerate}

\end{proof}

Now, we are ready to present Lem.~\ref{lemma: second variance term}, which is the second term of Eq.~\eqref{eq: var of the unbiased estimator}.

\begin{lemmaA}\label{lemma: second variance term}
Suppose we have an observable $\hat{O}$ which can be decomposed to $\hat{O}=\sum_{i=1}^{L} w_i P_i$, where $P_i\in\{I,X,Y,Z\}^{\otimes n}$ denotes the Pauli operators for $n$ qubits and $w_i$ represents its corresponding weights. 
Assuming that measurements performed for all operators are i.i.d., then the uncertainty (variance) of the square of the estimation of $\hat{O}$ can be expressed as
 \begin{align}
     \text{Var}\left[\overline{\braket{\hat{O}}^2}\right]&= \sum_{i=1}^{L}   \left[\frac{w_i^4\left(1-\langle P_i\rangle^2\right)}{(N_i-1)^2} \times\left(4\langle P_i\rangle^2N_i+ 2(1-\langle P_i\rangle^2) \right)\right] \nonumber\\
     &\quad+\sum_{\substack{i,j=1 \\ i \neq j}}^L \left[w_i^2 w_j^2 \left(\frac{\langle P_i\rangle^2\left(1-\langle P_j\rangle^2\right)}{N_j}
+\frac{\langle P_j\rangle^2\left(1-\langle P_i\rangle^2\right)}{N_i}
+\frac{\left(1-\langle P_i\rangle^2\right)
\left(1-\langle P_j\rangle^2\right)}{N_iN_j}\right) \right]
\nonumber\\
     &\quad+4 \sum_{\substack{i,j=1 \\ i \neq j}}^L \left(\frac{w_i^3w_j}{N_i}\braket{P_i}\braket{P_j}(1-\braket{P_i}^2)\right) \ ,
 \end{align}
where $N_{i}$ is the number of sample used to estimate $\braket{P_{i}}$ for each $1\leq i \leq L $.
\end{lemmaA}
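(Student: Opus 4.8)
The plan is to begin from the explicit form of the estimator whose variance we want. From the construction in Prop.~\ref{prop: unbiased estimator}, writing $X_i=\overline{\langle P_i\rangle}=\tfrac1{N_i}\sum_{k=1}^{N_i}b_k^{(P_i)}$, the two terms subtracted in $\overline V$ are precisely the unbiased estimator of $\langle\hat O\rangle^2$,
\begin{equation}
\overline{\langle\hat O\rangle^2}=\underbrace{\sum_{i=1}^{L}\frac{w_i^2N_i}{N_i-1}\Big(X_i^2-\frac1{N_i}\Big)}_{=:A}+\underbrace{\sum_{\substack{i,j=1\\ i\neq j}}^{L}w_iw_jX_iX_j}_{=:B}.
\end{equation}
The structural fact driving everything is that measurements of distinct Pauli operators are independent, so the random variables $\{X_i\}_i$ are mutually independent, and the additive constant $-\sum_i w_i^2/(N_i-1)$ contributes nothing to the variance. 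I would then use $\mathrm{Var}[A+B]=\mathrm{Var}[A]+\mathrm{Var}[B]+2\,\mathrm{Cov}(A,B)$ and treat the three pieces separately, since they naturally carry the three monomial types $w_i^4$, $w_i^2w_j^2$ and $w_i^3w_j$ appearing in the statement.

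First I would handle $\mathrm{Var}[A]$: independence across $i$ collapses it to $\sum_i\big(w_i^2N_i/(N_i-1)\big)^2\,\mathrm{Var}[X_i^2]$, where $\mathrm{Var}[X_i^2]=\mathbb E[X_i^4]-\mathbb E[X_i^2]^2$ is evaluated with the second and fourth moment formulas of Lem.~\ref{lemma: third and fourth moments of X_i}; the finite-sample prefactor is arranged to cancel the moment denominators, producing the single-index family $\propto w_i^4$. Next, for $\mathrm{Cov}(A,B)$ independence annihilates every term in which the squared factor $X_i^2$ shares no Pauli label with the product factor $X_kX_l$, so the only survivors are the pairings with $i\in\{k,l\}$. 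These reduce, by independence of the third ``spectator'' label, to $\langle P_j\rangle\big(\mathbb E[X_i^3]-\mathbb E[X_i^2]\mathbb E[X_i]\big)$, and the third moment of Lem.~\ref{lemma: third and fourth moments of X_i} gives $\mathbb E[X_i^3]-\mathbb E[X_i^2]\mathbb E[X_i]=2\langle P_i\rangle(1-\langle P_i\rangle^2)(N_i-1)/N_i^2$, which generates the mixed family $\propto w_i^3w_j$ together with its factor $4$.

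The remaining and most delicate piece is $\mathrm{Var}[B]$. The hard part will be here: I would expand $B^2$ as the quadruple sum $\sum_{i\neq j}\sum_{k\neq l}w_iw_jw_kw_l\,X_iX_jX_kX_l$, subtract $\mathbb E[B]^2$, and classify every term by the overlap pattern of the two label pairs $\{i,j\}$ and $\{k,l\}$. Fully disjoint pairs factorize into independent means and cancel against $\mathbb E[B]^2$; the coincident pairs $\{i,j\}=\{k,l\}$ furnish exactly $\mathrm{Var}[X_iX_j]=\mathbb E[X_i^2]\mathbb E[X_j^2]-\langle P_i\rangle^2\langle P_j\rangle^2$, which is the parenthesized expression of the $w_i^2w_j^2$ sum; and the partially overlapping pairs, which share a single label, are the terms demanding the most care. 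Keeping the ordered-versus-unordered counting of $\sum_{i\neq j}$ consistent throughout — so that the combinatorial multiplicities, the correction factors $N_i/(N_i-1)$ and the various $1/N_i$'s recombine into the stated closed-form coefficients rather than leaving residual pieces — is where the bulk of the effort lies and where sign and factor errors are easiest to make. Once all three contributions are assembled and regrouped by monomial type, the claimed expression follows.
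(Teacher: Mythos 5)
Your outline reproduces the paper's proof step for step in its architecture: the same split of the unbiased estimator into the single-index part $A=\sum_i\frac{w_i^2N_i}{N_i-1}\bigl(X_i^2-\frac{1}{N_i}\bigr)$ and the cross part $B=\sum_{i\neq j}w_iw_jX_iX_j$, the same identity $\mathrm{Var}[A+B]=\mathrm{Var}[A]+\mathrm{Var}[B]+2\,\mathrm{Cov}(A,B)$, the same reduction of $\mathrm{Var}[A]$ to $\mathbb{E}[X_i^4]-\mathbb{E}[X_i^2]^2$ via Lem.~\ref{lemma: third and fourth moments of X_i}, and the same third-moment identity for the covariance piece. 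The genuine gap sits exactly where you locate ``the bulk of the effort'' and then stop: the partially overlapping pairs in $\mathrm{Var}[B]$. You assert that they ``recombine into the stated closed-form coefficients''; they do not. For distinct indices $i,j,l$ one has
\begin{align}
\mathrm{Cov}\bigl(X_iX_j,\,X_iX_l\bigr)=\bigl(\mathbb{E}[X_i^2]-\mathbb{E}[X_i]^2\bigr)\langle P_j\rangle\langle P_l\rangle=\frac{1-\langle P_i\rangle^2}{N_i}\,\langle P_j\rangle\langle P_l\rangle\ ,
\end{align}
which is generically nonzero, and summing this class produces three-index contributions proportional to $w_i^2w_jw_l\,\langle P_j\rangle\langle P_l\rangle\bigl(1-\langle P_i\rangle^2\bigr)/N_i$ — a family that appears nowhere in the lemma. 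An honest execution of your plan therefore does not terminate at the claimed expression; it yields the claimed expression plus this residual class.

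It is worth knowing that the paper's own proof never meets these terms: it simply writes $\mathrm{Var}\bigl[\sum_{i<j}2w_iw_jX_iX_j\bigr]=\sum_{i<j}4w_i^2w_j^2\,\mathrm{Var}[X_iX_j]$, i.e.\ it silently treats products sharing a common index as uncorrelated, which is the only way to land on a purely two-index formula and is precisely the step your overlap classification exposes as unjustified (the corresponding step in Lem.~\ref{lemma: first variance term} \emph{is} justified, because there each product $P_{ij}$ is measured with its own fresh samples, whereas here the $X_iX_j$ are built from shared single-operator data). The multiplicity bookkeeping you defer is also not a formality: the coincident-pair class alone carries weight $4\sum_{i<j}=2\sum_{i\neq j}$, twice the stated middle coefficient, and tracking both orderings inside $B$ together with the prefactor in $2\,\mathrm{Cov}(A,B)$ gives $8\sum_{i\neq j}w_i^3w_j\langle P_i\rangle\langle P_j\rangle\bigl(1-\langle P_i\rangle^2\bigr)/N_i$ rather than the stated $4$. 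So to close the argument you must either state and adopt the uncorrelated-pairs idealization (fresh samples for every pair, which would in turn kill the $\mathrm{Cov}(A,B)$ term) or accept that the formula needs the extra three-index terms and corrected multiplicities; asserting that the delicate terms recombine is not a substitute for computing them, and here the assertion is false.
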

\begin{proof}
We start with the expression
\begin{align}
    &\text{Var}\left[\overline{\braket{\hat{O}}^2}\right] \nonumber\\ 
    &=  \text{Var}\left[ \sum_{i=1}^{L} \frac{w_i^2 N_i}{N_i-1} \left[\left(\frac{1}{N_i}\sum_{k=1}^{N_i} b^{(P_i)}_k\right)^2-\frac{1}{N_i} \right] + \sum_{\substack{i,j=1 \\ i \neq j}}^L w_i w_j \left(\frac{1}{N_i}\sum_{k=1}^{N_i} b^{(P_i)}_k\right)\left(\frac{1}{N_j}\sum_{k=1}^{N_j} b^{(P_j)}_k\right)\right] \\
    &= \text{Var}\left[ \sum_{i=1}^{L} \frac{w_i^2 N_i}{N_i-1} \left[\left(\frac{1}{N_i}\sum_{k=1}^{N_i} b^{(P_i)}_k\right)^2-\frac{1}{N_i} \right]\right] +\text{Var}\left[\sum_{\substack{i,j=1 \\ i \neq j}}^L w_i w_j \left(\frac{1}{N_i}\sum_{k=1}^{N_i} b^{(P_i)}_k\right)\left(\frac{1}{N_j}\sum_{k=1}^{N_j} b^{(P_j)}_k\right) \right] \nonumber\\
    &\quad \quad+ 2 \;\text{Cov}\left[\sum_{i=1}^{L} \frac{w_i^2 N_i}{N_i-1} \left[\left(\frac{1}{N_i}\sum_{k=1}^{N_i} b^{(P_i)}_k\right)^2-\frac{1}{N_i} \right], \sum_{\substack{i,j=1 \\ i \neq j}}^L w_i w_j \left(\frac{1}{N_i}\sum_{k=1}^{N_i} b^{(P_i)}_k\right)\left(\frac{1}{N_j}\sum_{k=1}^{N_j} b^{(P_j)}_k\right)\right] \ ,
\end{align}
where we use the identity $\text{Var}[A+B]=\text{Var}[A]+\text{Var}[B]+2\;\text{Cov}(A,B)$ (for any variable $A,B$).
Before we proceed to evaluate these three terms, let us define the shorthand notation $X_i=\frac{1}{N_i}\sum_{k=1}^{N_i} b^{(P_i)}_k$.
The final expression of these three terms are:
\begin{enumerate}
    \item For the first term, we have
    \begin{align}
         \text{Var}\left[ \sum_{i=1}^{L} \frac{w_i^2 N_i}{N_i-1} \left(X_i^2-\frac{1}{N_i} \right)\right]  =\sum_{i=1}^{L}   \text{Var}\left[ \frac{w_i^2 N_i}{N_i-1} \left(X_i^2-\frac{1}{N_i} \right)\right] 
         &= \sum_{i=1}^{L}   \text{Var}\left[ \frac{w_i^2 N_i}{N_i-1} X_i^2 \right] \\
         &=  \sum_{i=1}^{L}  \frac{w_i^4 N_i^2}{(N_i-1)^2} \text{Var}\left[  X_i^2 \right] \ , \label{eq: variance intermediate}
    \end{align}
    where we assume that $P_i$ and $P_j$ are independent measurement for $i\neq j$ in the first equality and the property $\text{Var} [X+c]=\text{Var} [X]$ (for arbitrary constant $c$) in the second equality.
    Next, by definition we have $\text{Var}\left[  X_i^2 \right] = \mathbb{E}\left[ X_i^4\right]-\mathbb{E}\left[ X_i^2\right]^2$, where
    using Lem.~\ref{lemma: third and fourth moments of X_i} further gives
    \begin{align}
          \text{Var}\left[  X_i^2 \right] &= \langle P_i\rangle^4+\frac{6\langle P_i\rangle^2\left(1-\langle P_i\rangle^2\right)}{N_i}
            +\frac{\left(11\langle P_i\rangle^2-3\right)\left(\langle P_i\rangle^2-1\right)}{N_i^2}+\frac{2(3\langle P_i\rangle^2-1)(1-\langle P_i\rangle^2)}{N_i^3} \\
            & \quad  - \left( \langle P_i \rangle^2 + \frac{1 - \langle P_i \rangle^2}{N_i}\right)^2 \\
&=\frac{4\langle P_i\rangle^2\left(1-\langle P_i\rangle^2\right)}{N_i}
            +\frac{2\left(5\langle P_i\rangle^2-1\right)\left(\langle P_i\rangle^2-1\right)}{N_i^2}+\frac{2(3\langle P_i\rangle^2-1)(1-\langle P_i\rangle^2)}{N_i^3} \ .
    \end{align}
    Using this result, Eq.~\eqref{eq: variance intermediate} simplifies to
    \begin{align}
             &\text{Var}\left[ \sum_{i=1}^{L} \frac{w_i^2 N_i}{N_i-1} \left(X_i^2-\frac{1}{N_i} \right)\right]  
         \nonumber\\ &=  \sum_{i=1}^{L}   \frac{w_i^4 N_i^2}{(N_i-1)^2} \times \nonumber\\
         &\quad\left(\frac{4\langle P_i\rangle^2\left(1-\langle P_i\rangle^2\right)}{N_i}
            +\frac{2\left(5\langle P_i\rangle^2-1\right)\left(\langle P_i\rangle^2-1\right)}{N_i^2}+\frac{2(3\langle P_i\rangle^2-1)(1-\langle P_i\rangle^2)}{N_i^3}\right) \\
            &=\sum_{i=1}^{L}   \frac{w_i^4(1-\langle P_i\rangle^2)}{(N_i-1)^2} \left[4N_i\langle P_i\rangle^2-2\left(5\langle P_i\rangle^2-1\right)+\frac{2(3\langle P_i\rangle^2-1)}{N_i}\right]\\
            &=\sum_{i=1}^{L}   \frac{w_i^4(1-\langle P_i\rangle^2)}{N_i(N_i-1)^2} \left[4N_i^2\langle P_i\rangle^2-10N_i\langle P_i\rangle^2 +2N_i+6\langle P_i\rangle^2-2\right]\\
                &=\sum_{i=1}^{L}   \frac{w_i^4(1-\langle P_i\rangle^2)}{N_i(N_i-1)^2} \left[2\langle P_i\rangle^2\left(2N_i-3\right)\left(N_i-1\right)  +2(N_i-1)\right]\\ 
                &= \sum_{i=1}^{L}   \left[\frac{2w_i^4 }{N_i(N_i-1)} \left(1 + 2(N_i-2)\braket{P_i}^2 -(2N_i-3)\braket{P_i}^4 \right) \right]
                \label{eq: covariance 1st term final} \ .
    \end{align}

    \item For the second term, we have
    \begin{align}\label{eq:variance intermediate 2nd term}
        \text{Var}\left[\sum_{\substack{i,j=1 \\ i \neq j}}^L w_i w_j X_iX_j \right] = \text{Var}\left[\sum_{\substack{i<j }}^L 2 w_i w_j X_iX_j \right]= \sum_{\substack{i<j }}^L 4 w_i^2 w_j^2\text{Var}\left[ X_iX_j \right] \ ,
    \end{align}
    where we use the property $\text{Var}[aX] = a^2\,\text{Var}[X]$.
    Next, by definition of variance, we obtain
    \begin{align}
        \text{Var}\left[ X_iX_j \right]&=\mathbb{E}\left[X_i^2X_j^2\right]-\mathbb{E}\left[X_iX_j\right]^2 =\mathbb{E}\left[X_i^2\right]\mathbb{E}\left[ X_j^2\right] - \mathbb{E}\left[X_i\right]^2\mathbb{E}\left[ X_j\right]^2 \ .
    \end{align}
    Using Lem.~\ref{lemma: third and fourth moments of X_i}, we have
    \begin{align}
        \text{Var}\left[ X_iX_j \right]&= \left(\langle P_i \rangle^2 + \frac{1 - \langle P_i \rangle^2}{N_i}\right)\left(\langle P_j \rangle^2 + \frac{1 - \langle P_j \rangle^2}{N_i}\right)- \braket{P_i}^2\braket{P_j}^2\\
        &= 
\frac{\langle P_i\rangle^2\left(1-\langle P_j\rangle^2\right)}{N_j}
+\frac{\langle P_j\rangle^2\left(1-\langle P_i\rangle^2\right)}{N_i}
+\frac{\left(1-\langle P_i\rangle^2\right)
\left(1-\langle P_j\rangle^2\right)}{N_iN_j} \ .
    \end{align}
    Consequently, Eq.~\eqref{eq:variance intermediate 2nd term} is 
    \begin{align}
        &\text{Var}\left[\sum_{\substack{i,j=1 \\ i \neq j}}^L w_i w_j X_iX_j \right] \nonumber\\ &=4\sum_{\substack{i<j}}^L \left[\frac{w_i^2 w_j^2}{N_iN_j} \left( 
        \left(1-\langle P_i\rangle^2\right)
\left(1-\langle P_j\rangle^2\right)
+ N_i\langle P_i\rangle^2\left(1-\langle P_j\rangle^2\right)
+ N_j\langle P_j\rangle^2\left(1-\langle P_i\rangle^2\right)
\right) \right]
\label{eq: covariance 2nd term final} \ .
\end{align}

    \item  For the third term, it is
    \begin{align}
        &\quad\text{Cov}\left[\sum_{i=1}^{L} \frac{w_i^2 N_i}{N_i-1} \left[X_i^2-\frac{1}{N_i} \right], \sum_{\substack{i,j=1 \\ i \neq j}}^L w_i w_j X_iX_j\right] \nonumber\\
        &=\sum_{\substack{i,j=1 \\ i \neq j}}^L \left(\text{Cov}\left[\frac{w_i^2 N_i}{N_i-1} \left[X_i^2-\frac{1}{N_i}\right],  w_i w_j X_iX_j\right]
        +\text{Cov}\left[\frac{w_i^2 N_i}{N_i-1} \left[X_i^2-\frac{1}{N_i}\right],  w_i w_j X_jX_i\right]\right) \ ,
    \end{align}
where we use the bilinear property of the covariance.
By symmetry, the two covariance contributions are equal and hence we have
    \begin{align}
       \text{Cov}\left[\sum_{i=1}^{L} \frac{w_i^2 N_i}{N_i-1} \left[X_i^2-\frac{1}{N_i} \right], \sum_{\substack{i,j=1 \\ i \neq j}}^L w_i w_j X_iX_j\right] 
        &=2 \sum_{\substack{i,j=1 \\ i \neq j}}^L \left(\text{Cov}\left[\frac{w_i^2 N_i}{N_i-1} \left[X_i^2-\frac{1}{N_i}\right],  w_i w_j X_iX_j\right]
      \right) \label{eq: covariance double sum}\ ,
    \end{align}
        Since $\text{Cov}[A,B]=\mathbb{E}[AB]-\mathbb{E}[A]\;\mathbb{E}[B]$, we have
    \begin{align}
        &\text{Cov}\left[\frac{w_i^2 N_i}{N_i-1} \left[X_i^2-\frac{1}{N_i}\right],  w_i w_j X_iX_j\right]\nonumber\\
        &= \mathbb{E}\left[\left(\frac{w_i^2 N_i}{N_i-1} \left[X_i^2-\frac{1}{N_i}\right]\right)\left( w_i w_j X_iX_j\right)\right]-\mathbb{E}\left[\frac{w_i^2 N_i}{N_i-1} \left[X_i^2-\frac{1}{N_i}\right]\right]
        \mathbb{E}\left[ w_i w_j X_iX_j\right]\\
        &=\frac{w_i^3w_j N_i}{N_i-1}\;\mathbb{E}[X_j] \times\left\{\mathbb{E}[X_i^3]-\mathbb{E}[X_i^2]\;\mathbb{E}[X_i]\right\} \label{eq: covariance intermediate} \ ,
    \end{align}
       Using Lem.~\ref{lemma: third and fourth moments of X_i}, Eq.~\eqref{eq: covariance intermediate} reduces to
      \begin{align}
        &\text{Cov}\left[\frac{w_i^2 N_i}{N_i-1} \left[X_i^2-\frac{1}{N_i}\right],  w_i w_j X_iX_j\right]\nonumber\\
        &=\frac{w_i^3w_j N_i}{N_i-1}\braket{P_j} \times\left\{\langle P_i\rangle^3\left( 1 - \frac{3}{N_i} + \frac{2}{N_i^2}\right)
+\langle P_i\rangle\left(\frac{3}{N_i} - \frac{2}{N_i^2}\right)-\left( \langle P_i \rangle^2 + \frac{1 - \langle P_i \rangle^2}{N_i}\right)\;\braket{P_i}\right\} \\
&=\frac{w_i^3w_j N_i}{N_i-1}\braket{P_j} \times\left\{\ \frac{\langle P_i \rangle^3 (2-2N_i)+2\langle P_i \rangle(N_i-1)}{N_i^2}\right\} =\frac{2w_i^3w_j}{N_i}\braket{P_j}\braket{P_i}(1-\braket{P_i}^2) \ .
    \end{align}
    Therefore, Eq.~\eqref{eq: covariance double sum} is given by
      \begin{align}
       &\text{Cov}\left[\sum_{i=1}^{L} \frac{w_i^2 N_i}{N_i-1} \left[X_i^2-\frac{1}{N_i} \right], \sum_{\substack{i,j=1 \\ i \neq j}}^L w_i w_j X_iX_j\right] 
        =4 \sum_{\substack{i,j=1 \\ i \neq j}}^L \left(\frac{w_i^3w_j}{N_i}\braket{P_j}\braket{P_i}(1-\braket{P_i}^2)\right) \label{eq: covariance 3rd term final} \ .
    \end{align}
\end{enumerate}

 Collecting Eq.~\eqref{eq: covariance 1st term final}, Eq.~\eqref{eq: covariance 2nd term final} and Eq.~\eqref{eq: covariance 3rd term final}, we arrive at the final expression:
 \begin{align}
     \text{Var}\left[\overline{\braket{\hat{O}}^2}\right]&= 2\sum_{i=1}^{L}   \left[\frac{w_i^4 }{N_i(N_i-1)} \left(1 + 2(N_i-2)\braket{P_i}^2 -(2N_i-3)\braket{P_i}^4 \right) \right]\nonumber\\
     &\quad+4\sum_{\substack{i<j}}^L \left[\frac{w_i^2 w_j^2}{N_iN_j} \left( 
        \left(1-\langle P_i\rangle^2\right)
\left(1-\langle P_j\rangle^2\right)
+ N_i\langle P_i\rangle^2\left(1-\langle P_j\rangle^2\right)
+ N_j\langle P_j\rangle^2\left(1-\langle P_i\rangle^2\right)
\right) \right]
\nonumber\\
     &\quad+4 \sum_{\substack{i< j}}^L \left(\frac{w_i^3w_j}{N_i}\braket{P_i}\braket{P_j}(1-\braket{P_i}^2)\right) \ .
 \end{align}
\end{proof}

\begin{theoremA}[Uncertainty of the estimated variance of an observable]
    Suppose we have an observable $\hat{O}$ which is of the form $\hat{O}=\sum_{i=1}^{L} w_i P_i$, where $P_i\in\{I,X,Y,Z\}^{\otimes n}$ denotes the Pauli operators for $n$ qubits and $w_i$ represents its corresponding weights. 
Assuming that measurements performed for all operators including $P_{ij}=P_{i}P_{j}$ are i.i.d., then the uncertainty (variance) of the estimated variance of $\hat{O}$ can be expressed as
\begin{align}
     \text{Var}[\overline{V}]&=\sum_{\substack{i,j=1 \\ i\neq j}}^{L} \frac{w_i^2w_j^2}{N_{ij}} \left(1-\braket{P_{ij}}^2 \right)
     +
     2\sum_{i=1}^{L}   \left[\frac{w_i^4 }{N_i(N_i-1)} \left(1 + 2(N_i-2)\braket{P_i}^2 -(2N_i-3)\braket{P_i}^4 \right) \right]\nonumber\\
     &\quad+4\sum_{\substack{i<j}}^L \left[\frac{w_i^2 w_j^2}{N_iN_j} \left( 
        \left(1-\langle P_i\rangle^2\right)
\left(1-\langle P_j\rangle^2\right)
+ N_i\langle P_i\rangle^2\left(1-\langle P_j\rangle^2\right)
+ N_j\langle P_j\rangle^2\left(1-\langle P_i\rangle^2\right)
\right) \right]
\nonumber\\
     &\quad+4 \sum_{\substack{i< j}}^L \left(\frac{w_i^3w_j}{N_i}\braket{P_i}\braket{P_j}(1-\braket{P_i}^2)\right) \ .
\end{align}
\end{theoremA}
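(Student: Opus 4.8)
The plan is to build directly on the variance decomposition already recorded in Eq.~\eqref{eq: var of the unbiased estimator}, which expresses $\text{Var}[\overline{V}]$ as the sum of $\text{Var}[\overline{\braket{\hat{O}^2}}]$ and $\text{Var}[\overline{\braket{\hat{O}}^2}]$ minus twice the cross-covariance $\text{Cov}[\overline{\braket{\hat{O}^2}},\overline{\braket{\hat{O}}^2}]$. Once this decomposition is in hand, the two variance terms are supplied essentially verbatim by Lem.~\ref{lemma: first variance term} and Lem.~\ref{lemma: second variance term}, so the only genuine work remaining is to dispose of the covariance term and then assemble the pieces in the order claimed.

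First I would argue that the covariance term vanishes. The estimator $\overline{\braket{\hat{O}^2}}$ is built solely from the product-operator samples $\{b^{(P_{ij})}_k\}$, whereas $\overline{\braket{\hat{O}}^2}$ depends only on the single-Pauli samples $\{b^{(P_i)}_k\}$. Under the standing assumption that all measurement runs are i.i.d.—and in particular that the product measurements $P_{ij}$ are performed on fresh shots, statistically independent of those used for the single-operator measurements $P_k$—these two families of random variables are independent, whence $\text{Cov}[\overline{\braket{\hat{O}^2}},\overline{\braket{\hat{O}}^2}]=0$. This is the step that requires the most care, and it is where I expect the main obstacle to lie: one must be explicit that the shots allocated to estimating $\braket{\hat{O}^2}$ are drawn independently from those allocated to estimating $\braket{\hat{O}}^2$, since otherwise shared randomness would reintroduce a nonzero cross term and the clean additive form of the theorem would fail.

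With the covariance eliminated, I would substitute the right-hand side of Lem.~\ref{lemma: first variance term} for $\text{Var}[\overline{\braket{\hat{O}^2}}]$ and the right-hand side of Lem.~\ref{lemma: second variance term} for $\text{Var}[\overline{\braket{\hat{O}}^2}]$. Collecting the four resulting contributions—the product-operator sum over $i\neq j$ carrying the factor $1-\braket{P_{ij}}^2$, the diagonal sum over $i$ from the $\braket{\hat{O}}^2$ variance, the off-diagonal $i<j$ sum, and the covariance-correction sum over $i<j$—reproduces exactly the expression stated in the theorem, completing the argument. The remaining manipulation is purely a matter of transcribing the two lemma statements, so no further nontrivial estimate is needed beyond the independence justification above.
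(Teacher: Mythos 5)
Your proposal is correct and follows essentially the same route as the paper's own proof: the paper likewise starts from the decomposition in Eq.~\eqref{eq: var of the unbiased estimator}, sets the cross term $\text{Cov}\left[\overline{\braket{\hat{O}^2}},\overline{\braket{\hat{O}}^2}\right]$ to zero because the product-operator shots $\{b^{(P_{ij})}_k\}$ and single-Pauli shots $\{b^{(P_i)}_k\}$ are mutually independent, and then substitutes Lem.~\ref{lemma: first variance term} and Lem.~\ref{lemma: second variance term}. If anything, you are more explicit than the paper about why the covariance vanishes; just note that the diagonal and $i<j$ sums appearing in the theorem match the form derived at the end of the proof of Lem.~\ref{lemma: second variance term} rather than the algebraically rearranged form displayed in that lemma's statement.
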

\begin{proof}
Since the set of Pauli operators $\{P_i\}$ and  $P_{ij}=P_iP_j$ are i.i.d. and mutually independent, we obtain $\text{Var}[\overline{V}] = \text{Var}\left[ \overline{\braket{\hat{O}^2}}\right] -  \text{Var}\left[\overline{\braket{\hat{O}}^2}\right]$, where the first and second terms are given by Lem.~\ref{lemma: first variance term} and Lem.~\ref{lemma: second variance term} respectively:
\begin{align}
     \text{Var}[\overline{V}]&=\sum_{\substack{i,j=1 \\ i\neq j}}^{L} \frac{w_i^2w_j^2}{N_{ij}} \left(1-\braket{P_{ij}}^2 \right)
     +
     2\sum_{i=1}^{L}   \left[\frac{w_i^4 }{N_i(N_i-1)} \left(1 + 2(N_i-2)\braket{P_i}^2 -(2N_i-3)\braket{P_i}^4 \right) \right]\nonumber\\
     &\quad+4\sum_{\substack{i<j}}^L \left[\frac{w_i^2 w_j^2}{N_iN_j} \left( 
        \left(1-\langle P_i\rangle^2\right)
\left(1-\langle P_j\rangle^2\right)
+ N_i\langle P_i\rangle^2\left(1-\langle P_j\rangle^2\right)
+ N_j\langle P_j\rangle^2\left(1-\langle P_i\rangle^2\right)
\right) \right]
\nonumber\\
     &\quad+4 \sum_{\substack{i< j}}^L \left(\frac{w_i^3w_j}{N_i}\braket{P_i}\braket{P_j}(1-\braket{P_i}^2)\right) \ .
\end{align}
\end{proof}

\subsection{Alternative Unbiased Method of Estimating Operator Variance}
Here, we provide another way of computing an unbiased estimator of the variance operator.
\begin{lemmaA}\label{lemma: alternative variance estimator}
Suppose we have an observable $\hat{O}$ of the form $\hat{O}=\sum_{i=1}^{L} w_i P_i$, where $P_i\in\{I,X,Y,Z\}^{\otimes n}$ denotes the Pauli operators for $n$ qubits and $w_i$ represents its corresponding weights.
Assuming measurements performed for all operators including $P_{ij}=P_{i}P_{j}$ are i.i.d.,
then the unbiased estimator of the variance operator can be alternatively written as 
\begin{align}
    \overline{V}
    &=\sum_{i=1}^{L} w_i^2 + \sum_{\substack{i,j=1 \\ i\neq j}}^{L} w_iw_j \left(\frac{1}{N_{ij}}\sum_{k=1}^{N_{ij}} b^{(P_{ij})}_k\right) - \sum_{i,j=1}^{L} w_i w_j \left(\frac{1}{N_{(i\otimes j)}}\sum_{i=1}^{N_{(i\otimes j)}}b_{i}^{(P_{i\otimes j})}\right) \ ,
\end{align}
where we perform $N_{ij}$ times measurement on the operator $P_{ij}=P_iP_j$ and $N_{(i\otimes j)}$ times joint measurement on $P_i\otimes P_j$.
\end{lemmaA}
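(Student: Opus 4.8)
The plan is to verify directly that $\mathbb{E}[\overline{V}] = V$, where $V = \braket{\hat{O}^2} - \braket{\hat{O}}^2$ is the operator variance in its Pauli decomposition Eq.~\eqref{eq: variance operator}. First I would split $\overline{V}$ into two conceptual pieces: the first two sums, which I claim form an unbiased estimator of $\braket{\hat{O}^2}$, and the last sum, which I claim is an unbiased estimator of $\braket{\hat{O}}^2$. Establishing unbiasedness of each piece separately and subtracting will then give the result by linearity of expectation.

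For the $\braket{\hat{O}^2}$ part, the argument is identical to the one already used in Prop.~\ref{prop: unbiased estimator}: the term $\sum_i w_i^2$ is the deterministic contribution from $\braket{I}=1$, and $\frac{1}{N_{ij}}\sum_k b_k^{(P_{ij})}$ is an unbiased estimator of $\braket{P_{ij}}$ for each off-diagonal pair, so the first two sums have expectation $\sum_i w_i^2 + \sum_{i\neq j} w_i w_j \braket{P_{ij}} = \braket{\hat{O}^2}$.

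The key novelty lies entirely in the last sum, and this is where I would focus the argument. The crucial point is the meaning of the joint observable $P_i\otimes P_j$: each shot $b_k^{(P_{i\otimes j})}$ is the product of the outcome of measuring $P_i$ on one copy of the state and the outcome of measuring $P_j$ on an \emph{independent} second copy. By independence of the two copies, $\mathbb{E}[b_k^{(P_{i\otimes j})}] = \braket{P_i}\braket{P_j}$ for every pair $(i,j)$, and \textbf{decisively this holds even when} $i=j$, yielding $\braket{P_i}^2$ rather than $\braket{P_i^2}=1$. It is precisely this use of two independent copies that removes the finite-sample bias which forced the $\tfrac{N}{N-1}$ correction in Example~\ref{example: biased estimator} and Prop.~\ref{prop: unbiased estimator}. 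Taking the expectation term by term then gives $\mathbb{E}[\text{last sum}] = \sum_{i,j} w_i w_j \braket{P_i}\braket{P_j} = \left(\sum_i w_i \braket{P_i}\right)^2 = \braket{\hat{O}}^2$.

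Combining the two pieces yields $\mathbb{E}[\overline{V}] = \braket{\hat{O}^2} - \braket{\hat{O}}^2 = V$, which establishes unbiasedness. The main obstacle is conceptual rather than computational: one must state carefully that $P_{i\otimes j}$ is a two-copy joint observable and check that the independence argument applies uniformly to the diagonal ($i=j$) and off-diagonal terms in the unrestricted sum $\sum_{i,j=1}^{L}$, since it is exactly the diagonal terms where this scheme departs from, and improves upon, the single-register estimator.
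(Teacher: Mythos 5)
Your proposal is correct and follows essentially the same route as the paper's proof: decompose $\overline{V}$ into an estimator of $\braket{\hat{O}^2}$ (identical to Prop.~\ref{prop: unbiased estimator}) and an estimator of $\braket{\hat{O}}^2$ built from two-copy joint measurements of $P_i\otimes P_j$, whose shot-wise expectation is $\braket{P_i}\braket{P_j}$ by independence of the copies, then conclude by linearity of expectation. If anything, you make explicit a point the paper leaves implicit — that the two-copy scheme yields $\braket{P_i}^2$ rather than $\braket{P_i^2}=1$ on the diagonal terms $i=j$, which is exactly why no $\tfrac{N}{N-1}$ bias correction is needed here.
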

\begin{proof}
    Recall that the variance for an observable $\hat{O}$ is given by
\begin{align}
    V = \braket{\hat{O}^2} - \braket{\hat{O}}^2  &= \sum_{i=1}^{L} w_i^2 \braket{I} + \sum_{\substack{i,j=1 \\ i \neq j}}^L w_i w_j \braket{P_{ij}} - \left(\sum_{i=1}^{L} w_i \braket{P_i}\right)^2 \ .
\end{align}
To estimate the second term $\braket{\hat{O}}^2$, we now perform joint measurements on two copies of quantum states. 
For each independent measurement $\{P_i\otimes P_j\}$, we collects the results of measured bit string $\{b_{i}^{(P_{i\otimes j})}\}$.
Thus, the unbiased estimator of the product $\braket{P_i}\braket{P_j}$ is given by
\begin{align}
    \overline{\braket{P_i}\braket{P_j}}=\frac{1}{N_{(i\otimes j)}}\sum_{i=1}^{N_{(i\otimes j)}}b_{i}^{(P_{i\otimes j})} \ ,
\end{align}
where $N_{(i\otimes j)}$ denotes the number of samples for the measurement $(P_i\otimes P_j)$.
Hence, the term $\braket{\hat{O}}^2$ can be estimated as 
\begin{align}
    \overline{\braket{\hat{O}}^2}=\sum_{i,j=1}^{L} w_i w_j \braket{P_i}\braket{P_j}=\sum_{i,j=1}^{L} w_i w_j \left(\frac{1}{N_{(i\otimes j)}}\sum_{i=1}^{N_{(i\otimes j)}}b_{i}^{(P_{i\otimes j})}\right) \ .
\end{align}
Thus, the unbiased estimator of the variance is 
\begin{align}
    \overline{V}
    &=\sum_{i=1}^{L} w_i^2 + \sum_{\substack{i,j=1 \\ i\neq j}}^{L} w_iw_j \left(\frac{1}{N_{ij}}\sum_{k=1}^{N_{ij}} b^{(P_{ij})}_k\right) - \sum_{i,j=1}^{L} w_i w_j \left(\frac{1}{N_{(i\otimes j)}}\sum_{i=1}^{N_{(i\otimes j)}}b_{i}^{(P_{i\otimes j})}\right) \ .
\end{align}
\end{proof}
Next, we derive the uncertainty of this estimated variance based on this alternative measurement protocol.
\begin{lemmaA}\label{lemma: uncertainty of alternative variance}
Let $\hat{O}$ be an observable of the form $\hat{O}=\sum_{i=1}^{L} w_i P_i$, where $P_i\in\{I,X,Y,Z\}^{\otimes n}$ denotes the Pauli operators for $n$ qubits and $w_i$ represents its corresponding weights.
Assuming the measurements $\{P_{i\otimes j}\}$ and $\{P_{ij}=P_iP_j\}$ are i.i.d. and they are mutually independent to each other,
then the uncertainty (variance) of the estimated variance $\overline{V}$ of the observable is
\begin{align}\label{eq: alternative variance final}
     \text{Var}[\overline{V}] &= \sum_{\substack{i,j=1 \\ i\neq j}}^{L}  \frac{w_i^2w_j^2 \left(1- \braket{P_{ij}}^2 \right)}{N_{ij}}  + \sum_{i,j=1}^{L}  \frac{w_i^2w_j^2 \left(1- \braket{P_{i}}^2 \braket{P_{j}}^2\right)}{N_{(i\otimes j)}} \ ,
\end{align}
where we perform $N_{ij}$ times measurement on the operator $P_{ij}=P_iP_j$ and $N_{(i\otimes j)}$ times joint measurement on $P_i\otimes P_j$.
\end{lemmaA}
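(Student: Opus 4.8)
The plan is to exploit the independence built into the two measurement protocols so that $\text{Var}[\overline V]$ separates cleanly into two contributions, each of which is elementary. Writing $\overline V = \overline{\braket{\hat O^2}} - \overline{\braket{\hat O}^2}$ as in Lem.~\ref{lemma: alternative variance estimator}, I first note that $\overline{\braket{\hat O^2}}$ is assembled from measurements of the products $P_{ij}=P_iP_j$, whereas $\overline{\braket{\hat O}^2}$ is assembled from the joint two-copy measurements $P_i\otimes P_j$. Since these two families are assumed mutually independent, the covariance term in $\text{Var}[A-B]=\text{Var}[A]+\text{Var}[B]-2\,\text{Cov}(A,B)$ vanishes, and (the constant $\sum_i w_i^2$ contributing nothing) one is left with
\[
\text{Var}[\overline V] = \text{Var}\!\left[\overline{\braket{\hat O^2}}\right] + \text{Var}\!\left[\overline{\braket{\hat O}^2}\right].
\]
The first term is exactly the content of Lem.~\ref{lemma: first variance term}, which I would simply cite to recover $\sum_{i\neq j} \tfrac{w_i^2 w_j^2}{N_{ij}}\bigl(1-\braket{P_{ij}}^2\bigr)$, the first sum of Eq.~\eqref{eq: alternative variance final}.

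For the second term the whole advantage of the alternative protocol appears: $\overline{\braket{\hat O}^2}=\sum_{i,j} w_iw_j\,\bigl(\tfrac1{N_{(i\otimes j)}}\sum_k b_k^{(P_{i\otimes j})}\bigr)$ is now a linear combination of sample means of $\pm1$ outcomes, rather than a square of a sample mean. Treating the measurements for distinct pairs $(i,j)$ as independent and using $\text{Var}[aX]=a^2\text{Var}[X]$, I would write
\[
\text{Var}\!\left[\overline{\braket{\hat O}^2}\right] = \sum_{i,j=1}^L w_i^2 w_j^2 \,\text{Var}\!\left[\frac{1}{N_{(i\otimes j)}}\sum_{k=1}^{N_{(i\otimes j)}} b_k^{(P_{i\otimes j})}\right].
\]
Each inner quantity is the variance of a sample mean of $N_{(i\otimes j)}$ i.i.d. variables valued in $\{-1,+1\}$ with mean $\braket{P_i}\braket{P_j}$; invoking $\bigl(b^{(P_{i\otimes j})}\bigr)^2=1$ gives $\text{Var}[b^{(P_{i\otimes j})}]=1-\braket{P_i}^2\braket{P_j}^2$, so the inner variance equals $\tfrac{1}{N_{(i\otimes j)}}\bigl(1-\braket{P_i}^2\braket{P_j}^2\bigr)$. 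Summing over $i,j$ and adding the contribution from Lem.~\ref{lemma: first variance term} then reproduces Eq.~\eqref{eq: alternative variance final}.

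The algebra here is short, so the only real care lies in the probabilistic bookkeeping rather than in any computation. Concretely, I must check (i) that the two-copy outcome $b^{(P_{i\otimes j})}$ genuinely squares to one and has expectation $\braket{P_i}\braket{P_j}$ on the product state, which is exactly where the two-copy design earns its keep — and note in passing that the diagonal terms $i=j$, now present in the second sum, are handled uniformly since $1-\braket{P_i}^2\braket{P_j}^2$ reduces correctly to $1-\braket{P_i}^4$; and (ii) that the cross-protocol independence truly annihilates every covariance. I expect the contrast with Lem.~\ref{lemma: second variance term} to be the most illuminating point to flag: there $\braket{\hat O}^2$ was estimated by squaring a sample mean, which dragged the derivation through the third and fourth moments of $X_i$ and generated the covariance term of Eq.~\eqref{eq: covariance 3rd term final}, whereas the direct joint measurement linearizes the estimator so that only the first and second moments of a single $\pm1$ variable enter. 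This linearization is precisely why Eq.~\eqref{eq: alternative variance final} collapses to two simple sums.
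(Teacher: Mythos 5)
Your proposal is correct and follows essentially the same route as the paper's proof: split $\text{Var}[\overline V]$ into the two contributions from the mutually independent measurement families (so the covariance vanishes), then use $b^2=1$ to get the single-shot variances $1-\braket{P_{ij}}^2$ and $1-\braket{P_i}^2\braket{P_j}^2$ for the product and two-copy measurements respectively. The only cosmetic difference is that you cite Lem.~\ref{lemma: first variance term} for the first sum, whereas the paper re-derives that term inline by an identical computation.
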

\begin{proof}
    First, since the measurements $\{P_{i\otimes j}\}$ and $\{P_{ij}\}$ are i.i.d. and mutually independent, the variance of the operator is 
\begin{align}
    \overline{V}
    &=\sum_{i=1}^{L} w_i^2 + \sum_{\substack{i,j=1 \\ i\neq j}}^{L} w_iw_j \left(\frac{1}{N_{ij}}\sum_{k=1}^{N_{ij}} b^{(P_{ij})}_k\right) - \sum_{i,j=1}^{L} w_i w_j \left(\frac{1}{N_{(i\otimes j)}}\sum_{i=1}^{N_{(i\otimes j)}}b_{i}^{(P_{i\otimes j})}\right) \ .
\end{align}
where we use Lem.~\ref{lemma: alternative variance estimator}.
Expanding the uncertainty (variance) of this estimation gives
\begin{align}
    \text{Var}[\overline{V}] \
    &= \text{Var}\left[\sum_{i=1}^{L} w_i^2 + \sum_{\substack{i,j=1 \\ i\neq j}}^{L} w_iw_j \left(\frac{1}{N_{ij}}\sum_{k=1}^{N_{ij}} b^{(P_{ij})}_k\right) - \sum_{i,j=1}^{L} w_i w_j \left(\frac{1}{N_{(i\otimes j)}}\sum_{i=1}^{N_{(i\otimes j)}}b_{i}^{(P_{i\otimes j})}\right) \right]\\
    &= \text{Var}\left[ \sum_{\substack{i,j=1 \\ i\neq j}}^{L} w_iw_j \left(\frac{1}{N_{ij}}\sum_{k=1}^{N_{ij}} b^{(P_{ij})}_k\right)\right] + \text{Var}\left[ \sum_{i,j=1}^{L} w_i w_j \left(\frac{1}{N_{(i\otimes j)}}\sum_{i=1}^{N_{(i\otimes j)}}b_{i}^{(P_{i\otimes j})}\right) \right] \ ,
\end{align}
where we use the fact that the first term $\sum_{i=1}^{L} w_i^2$ has zero variance in the last line.
Next, as $\text{Var}[aX] = a^2\,\text{Var}[X]$ for any scalar factor $a$, it can be simplified to
\begin{align}
    \text{Var}[\overline{V}] &= \sum_{\substack{i,j=1 \\ i\neq j}}^{L}  \frac{w_i^2w_j^2}{N_{ij}^2} \text{Var}\left[\sum_{k=1}^{N_{ij}} b^{(P_{ij})}_k\right] + \sum_{i,j=1}^{L}  \frac{w_i^2w_j^2}{N_{(i\otimes j)}^2} \text{Var}\left[\sum_{i=1}^{N_{(i\otimes j)}}b_{i}^{(P_{i\otimes j})}\right] \label{eq: alternative variance intermediate}\ .
\end{align}
Since the bit string $b_i$ satisfies $b_i^2=1$, we have the following identity:
\begin{align}
    \text{Var}\left[\sum_{k=1}^{N_{ij}} b^{(P_{ij})}_k\right]
    =\sum_{k=1}^{N_{ij}}\text{Var}\left[ b^{(P_{ij})}_k\right]= \sum_{k=1}^{N_{ij}} \left(1- \braket{P_{ij}}^2 \right) =N_{ij}\left(1- \braket{P_{ij}}^2 \right)\ .
\end{align}
Similarly, for the joint measurements, it becomes
\begin{align}
    \text{Var}\left[\sum_{i=1}^{N_{(i\otimes j)}}b_{i}^{(P_{i\otimes j})}\right]
    =\sum_{i=1}^{N_{(i\otimes j)}}\text{Var}\left[b_{i}^{(P_{i\otimes j})}\right]= \sum_{i=1}^{N_{(i\otimes j)}} \left(1- \braket{P_{i}}^2 \braket{P_{j}}^2\right)=N_{(i\otimes j)} \left(1- \braket{P_{i}}^2 \braket{P_{j}}^2\right) \ .
\end{align}
Therefore, Eq.~\eqref{eq: alternative variance intermediate} becomes
\begin{align}
     \text{Var}[\overline{V}] &= \sum_{\substack{i,j=1 \\ i\neq j}}^{L}  \frac{w_i^2w_j^2 \left(1- \braket{P_{ij}}^2 \right)}{N_{ij}}  + \sum_{i,j=1}^{L}  \frac{w_i^2w_j^2 \left(1- \braket{P_{i}}^2 \braket{P_{j}}^2\right)}{N_{(i\otimes j)}} \ .
\end{align}
\end{proof}
Using this result, we now present a proposition that tells us how many samples we need to achieve precision $\epsilon$ when estimating the variance of the observable $\hat{O}$.
\begin{propositionA}
Suppose we have an observable $\hat{O}$ of the form $\hat{O}=\sum_{i=1}^{L} w_i P_i$, where $P_i\in\{I,X,Y,Z\}^{\otimes n}$ denotes the Pauli operators for $n$ qubits and $w_i$ represents its corresponding weights.
Let us denote $\overline{V}$ as the estimated variance of the observable.
Assume that the measurements $\{P_{i\otimes j}\}$ and $\{P_{ij}\}$ are i.i.d. and mutually independent, the number of samples required to achieve a target precision $\epsilon$ for $\overline{V}$ scales as the fourth power of the Hamiltonian’s L1 norm and quadratically in the inverse of $\epsilon$.

\end{propositionA}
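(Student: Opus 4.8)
The plan is to combine the unbiased estimator of Lem.~\ref{lemma: alternative variance estimator} with its uncertainty formula from Lem.~\ref{lemma: uncertainty of alternative variance}, and then convert the resulting variance bound into a sample-complexity statement via a standard concentration inequality. First I would recall that
\begin{equation}
\text{Var}[\overline{V}] = \sum_{\substack{i,j=1 \\ i\neq j}}^{L} \frac{w_i^2 w_j^2 \left(1-\braket{P_{ij}}^2\right)}{N_{ij}} + \sum_{i,j=1}^{L} \frac{w_i^2 w_j^2 \left(1-\braket{P_i}^2\braket{P_j}^2\right)}{N_{(i\otimes j)}},
\end{equation}
and observe that every expectation value of a Pauli operator (or product of Pauli operators) is bounded in modulus by one, so each factor $1-\braket{P_{ij}}^2$ and $1-\braket{P_i}^2\braket{P_j}^2$ lies in $[0,1]$.

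Next, I would allocate an equal number of shots $M$ to every measurement setting, i.e.\ set $N_{ij}=N_{(i\otimes j)}=M$. Dropping the bounded factors then yields
\begin{equation}
\text{Var}[\overline{V}] \le \frac{1}{M}\left(\sum_{\substack{i,j=1 \\ i\neq j}}^{L} w_i^2 w_j^2 + \sum_{i,j=1}^{L} w_i^2 w_j^2\right) \le \frac{2}{M}\left(\sum_{i=1}^{L} w_i^2\right)^2 \le \frac{2\|w\|_1^4}{M},
\end{equation}
where the last inequality uses $\sum_i w_i^2 \le \left(\sum_i |w_i|\right)^2 = \|w\|_1^2$. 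This is the key step that produces the quartic dependence on the $L_1$ norm.

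Finally, I would invoke Chebyshev's inequality, $\Pr\!\left(|\overline{V}-V|\ge\epsilon\right) \le \text{Var}[\overline{V}]/\epsilon^2$, and require the right-hand side to be at most $\delta$. Substituting the variance bound gives $2\|w\|_1^4/(M\epsilon^2)\le\delta$, so it suffices to take
\begin{equation}
M \in \mathcal{O}\!\left(\frac{\|w\|_1^4}{\delta\,\epsilon^2}\right)
\end{equation}
shots per measurement setting, which establishes both the quartic scaling in $\|w\|_1$ and the quadratic scaling in $1/\epsilon$ claimed in the statement. Summing over the $\mathcal{O}(L^2)$ distinct settings recovers the total-shot scaling quoted in the main text.

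The main obstacle here is not analytic difficulty but the bookkeeping of the shot allocation: the variance formula couples two families of measurement settings with independent sample counts $N_{ij}$ and $N_{(i\otimes j)}$, and one must verify that a uniform allocation already achieves the stated scaling. A weighted allocation proportional to $w_i^2 w_j^2$ would sharpen the constants but is unnecessary for the scaling claim; I would remark on this possibility but not carry it out.
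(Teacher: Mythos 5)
There is a genuine gap, and it sits exactly where you dismissed it: the shot allocation. Your uniform allocation $N_{ij}=N_{(i\otimes j)}=M$ forces \emph{every} one of the $\mathcal{O}(L^2)$ measurement settings to receive $M \in \mathcal{O}\!\left(\|w\|_1^4/(\delta\epsilon^2)\right)$ shots, so the \emph{total} number of samples is $\mathcal{O}\!\left(L^2\|w\|_1^4/(\delta\epsilon^2)\right)$. The proposition (and the paper's proof of it) concerns the total number of samples, and claims it scales as $\|w\|_1^4/\epsilon^2$ with no residual dependence on $L$. The extra $L^2$ factor is not a constant and cannot be absorbed into $\|w\|_1^4$: take a single dominant weight, $w=(1,\eta,\dots,\eta)$ with $\eta\to 0$, for which $\|w\|_1^4\to 1$ while your total still carries the factor $L^2$. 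More precisely, under uniform allocation the tight bound is total $\approx L^2\|w\|_2^4/\epsilon^2$ with $\|w\|_2^2=\sum_i w_i^2$, and while $\|w\|_1^4\le L^2\|w\|_2^4$ always holds, the reverse fails badly for non-uniform weights. So your closing remark that a weighted allocation ``would sharpen the constants but is unnecessary for the scaling claim'' is incorrect --- the weighted allocation is precisely what produces the claimed scaling.

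The paper's proof fixes this by a Lagrange-multiplier optimization: minimize $N=\sum_{i\neq j}N_{ij}+\sum_{i,j}N_{(i\otimes j)}$ subject to $\mathrm{Var}[\overline{V}]=\epsilon^2$, which yields shots proportional to $|w_i|\,|w_j|$ (not $w_i^2w_j^2$, as you suggested) times $\sqrt{1-\langle\cdot\rangle^2}$. With $N_{ij}\propto |w_i||w_j|$ the variance becomes $\mathcal{O}\!\left(\|w\|_1^2/T\right)$ for a proportionality constant $T$, and the total $\sum N_{ij}\approx \|w\|_1^2\,T$ then evaluates to
\begin{equation}
N \;\le\; \frac{4}{\epsilon^2}\left(\sum_{i=1}^{L}|w_i|\right)^{4},
\end{equation}
which is the statement. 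A secondary, more cosmetic discrepancy: the paper defines ``precision $\epsilon$'' as the standard deviation of the estimator (it sets $\mathrm{Var}[\overline{V}]=\epsilon^2$ directly), so no Chebyshev step and no $1/\delta$ factor appear in this proposition; your probabilistic reformulation is legitimate but imports a $1/\delta$ that the stated scaling does not contain. Your starting point (the unbiased estimator and its variance formula, and the bound $1-\langle\cdot\rangle^2\le 1$) matches the paper; only the allocation step needs to be replaced by the optimization to close the proof.
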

\begin{proof}
First,  note that the total number of the measurements are given by
\begin{align}
    N = \sum_{\substack{i,j=1 \\ i\neq j}}^{L}  N_{ij}+ \sum_{i,j=1}^{L} N_{(i\otimes j)} \ .
\end{align}
In this setting, the optimal allocation of measurement shots can be determined via Lagrange multipliers.
Our objective is to minimize the total number of shots while ensuring that the uncertainty of $ \overline{V}$ remains below the desired precision $\epsilon^2$. 
We follow the approach outlined in Ref.~\cite{rubin2018application}, which provides the optimal allocation of measurement shots for estimating each term of the Hamiltonian.
Note that Ref.~\cite{rubin2018application} demonstrates that the number of measurements required to achieve $\epsilon$-precision is given by $\mathcal{O}(|w|^2/\epsilon^2)$ with $|w|=\sum_{i}^{L}|w_{i}|$.
First, the corresponding Lagrangian $\mathcal{L}$ can be expressed as
\begin{align}\label{eq: measurement no Lagrangian}
    \mathcal{L} = \sum_{\substack{i,j=1 \\ i\neq j}}^{L}  N_{ij}+ \sum_{i,j=1}^{L} N_{(i\otimes j)}+\lambda \left(\text{Var}\left[\overline{V_{N}}\right] - \epsilon^2\right).
\end{align}
According to Lem.~\ref{lemma: uncertainty of alternative variance}, we have
\begin{align}
     \text{Var}[\overline{V}] &= \sum_{\substack{i,j=1 \\ i\neq j}}^{L}  \frac{w_i^2w_j^2 \left(1- \braket{P_{ij}}^2 \right)}{N_{ij}}  + \sum_{i,j=1}^{L}  \frac{w_i^2w_j^2 \left(1- \braket{P_{i}}^2 \braket{P_{j}}^2\right)}{N_{(i\otimes j)}} \ .
\end{align}
Using this result, we can proceed to evaluate $\mathcal{L}$.
By taking the derivative of $\mathcal{L}$, Eq.~\eqref{eq: measurement no Lagrangian} yields
\begin{align}
    \frac{\partial \mathcal{L}}{\partial N_{ij}} &= 1 - \lambda \frac{w_i^2 w_j^2 \left(1 - \braket{P_{(i,j)}}^2\right)}{N_{ij}^2} \ ,\qquad \quad
    \frac{\partial \mathcal{L}}{\partial N_{(i \otimes j)}} = 1 - \lambda \frac{w_i^2 w_j^2 \left(1 - \braket{P_{i}}^2\braket{P_{j}}^2\right)}{N_{(i\otimes j)}^2} \ .
\end{align}
To obtain zero derivatives, we require
\begin{align}%\label{eq: lagrange no 1}
    N_{ij} &= |w_i|\,|w_j|\sqrt{\lambda \left(1 - \braket{P_{(i,j)}}^2\right)}\,  \ , \qquad \quad
    N_{(i \otimes j)} = |w_i|\,|w_j|\sqrt{\lambda \left(1 - \braket{P_{i}}^2\braket{P_{j}}^2\right)} \label{eq: lagrange no 2}\ . 
\end{align}
Recall that we set the target precision to be $\epsilon$, i.e. we would like to achieve $\text{Var}\left[\overline{V_{N}}\right] = \epsilon^2$ and hence Eq.~\eqref{eq: alternative variance final} yields
\begin{align}
    \epsilon^2 &=\sum_{\substack{i,j=1 \\ i\neq j}}^{L}  \frac{w_i^2w_j^2 \left(1- \braket{P_{ij}}^2 \right)}{N_{ij}}  + \sum_{i,j=1}^{L}  \frac{w_i^2w_j^2 \left(1- \braket{P_{i}}^2 \braket{P_{j}}^2\right)}{N_{(i\otimes j)}}\\
    &=\frac{1}{\sqrt{\lambda}} \left(\sum_{\substack{i,j=1 \\ i\neq j}}^{L} |w_i|\,|w_j| \sqrt{\left(1 - \braket{P_{ij}}^2\right)} + \sum_{i,j=1}^L |w_i|\,|w_j| \sqrt{\left(1 - \braket{P_{i}}^2\braket{P_{j}}^2\right)}\right) \ .
\end{align}
where we substitute back Eq.~\eqref{eq: lagrange no 2} to obtain last line.
Thus, we have
\begin{align}
    \sqrt{\lambda} = \frac{1}{\epsilon^2} \left(\sum_{\substack{i,j=1 \\ i\neq j}}^{L} |w_i|\,|w_j| \sqrt{\left(1 - \braket{P_{(i,j)}}^2\right)} + \sum_{i,j=1}^L |w_i|\,|w_j| \sqrt{\left(1 - \braket{P_{i}}^2\braket{P_{j}}^2\right)}\right) \ .
\end{align}
Finally, the optimal number of total measurements are
\begin{align}
\begin{split}
     N =\sum_{\substack{i,j=1 \\ i\neq j}}^{L}  N_{ij}+ \sum_{i,j=1}^{L} N_{(i\otimes j)} 
     &= \sqrt{\lambda}\left(\sum_{\substack{i,j=1 \\ i\neq j}}^{L}  |w_i|\,|w_j|\sqrt{ \left(1 - \braket{P_{(i,j)}}^2\right)}+ \sum_{i,j=1}^{L} |w_i|\,|w_j|\sqrt{\left(1 - \braket{P_{i}}^2\braket{P_{j}}^2\right)}\right)\\
     &=  \frac{1}{\epsilon^2} \left(\sum_{\substack{i,j=1 \\ i\neq j}}^{L} |w_i|\,|w_j| \sqrt{\left(1 - \braket{P_{(i,j)}}^2\right)} + \sum_{i,j=1}^L |w_i|\,|w_j| \sqrt{\left(1 - \braket{P_{i}}^2\braket{P_{j}}^2\right)}\right)^2 \ .
\end{split}
\end{align}
Since $b_i \in \{\pm1\}$, $\braket{P_i}\leq 1$ for all $i$.
The total number of measurements is then upper bounded by
\begin{align}
    N \leq  \frac{1}{\epsilon^2} \left(\sum_{\substack{i,j=1 \\ i\neq j}}^{L} |w_i|\,|w_j| + \sum_{i,j=1}^L |w_i|\,|w_j| \right)^2 
    &\leq \frac{1}{\epsilon^2} \left(2 \sum_{i,j=1}^L |w_i|\,|w_j| \right)^2 \\
    &= \frac{4}{\epsilon^2} \left( \sum_{i=1}^L |w_i|\right)^2 \left(\sum_{j=1}^L |w_j| \right)^2 = \frac{4}{\epsilon^2} \left( \sum_{i=1}^L |w_i|\right)^4 \ .
\end{align}
So, the proposition's statement is justified.
\end{proof}

 \end{document}